\colorlet{darkgreen}{green!45!black}
\newcommand{\R}{\mathbb{R}}
\newcommand{\eps}{\varepsilon}
\renewcommand{\epsilon}{\varepsilon}
\newcommand{\calN}{\mathcal{N}}
\newcommand{\norm}[1]{\|#1\|}
\newcommand{\normm}[2]{\lVert#2\rVert_{#1}}
\newcommand{\E}{\mathbb{E}}
\newcommand{\pr}{\mathbb{P}}
\newcommand{\poly}{\operatorname{poly}}
\newcommand{\polylog}{\mathrm{polylog}}
\DeclareMathOperator{\argmin}{argmin}
\DeclareMathOperator{\relu}{ReLU}
\DeclareMathOperator{\supp}{supp}
\newcommand{\eqdef}{\mathbin{\stackrel{\rm def}{=}}}
\newtheorem{lemma}{Lemma}
\newtheorem{theorem}{Theorem} 
\newtheorem{corollary}[theorem]{Corollary}
\newtheorem{definition}{Definition}
\newcounter{sideremark}
\title{Optimal Sketching Bounds for Sparse Linear Regression}
\author{Tung Mai\thanks{Adobe Research, \href{mailto:tumai@adobe.com}{tumai@adobe.com}}
\and Alexander Munteanu\thanks{TU Dortmund University, \href{mailto:alexander.munteanu@tu-dortmund.de}{alexander.munteanu@tu-dortmund.de}}
\and Cameron Musco\thanks{University of Massachusetts Amherst, \href{mailto:cmusco@cs.umass.edu}{cmusco@cs.umass.edu}}
\and Anup B. Rao\thanks{Adobe Research, \href{mailto:anuprao@adobe.com}{anuprao@adobe.com}}
\and Chris Schwiegelshohn\thanks{Aarhus University, \href{mailto:cschwiegelshohn@gmail.com}{cschwiegelshohn@gmail.com}}
\and David P. Woodruff\thanks{Carnegie Mellon University, \href{mailto:dwoodruf@cs.cmu.edu}{dwoodruf@cs.cmu.edu}}
}
\date{\today}
\begin{document}
\allowdisplaybreaks
\maketitle
\begin{abstract}
	We study oblivious sketching for $k$-sparse linear regression under various loss functions. In particular, we are interested in a distribution over sketching matrices $S\in\R^{m\times n}$ that does not depend on the inputs $A\in\R^{n\times d}$ and $b\in\R^n$, such that, given access to $SA$ and $Sb$, we can recover a $k$-sparse $\tilde x\in\mathbb{R}^d$ with $\|A\tilde x-b\|_f\leq (1+\varepsilon) \min\nolimits_{k{\text{-sparse}\,x\in\mathbb{R}^d}} \|Ax-b\|_f$. Here $\|\cdot\|_f: \mathbb R^n \rightarrow \mathbb R$ is some loss function -- such as an $\ell_p$ norm, or from a broad class of hinge-like loss functions, which includes the logistic and ReLU losses.	
 
	We show that for sparse $\ell_2$ norm regression, there is a distribution over oblivious sketches with $m=\Theta(k\log(d/k)/\varepsilon^2)$ rows, which is tight up to a constant factor. This extends to $\ell_p$ loss with an additional additive $O(k\log(k/\varepsilon)/\varepsilon^2)$ term in the upper bound.  This establishes a surprising separation from the related sparse recovery problem, which is an important special case of sparse regression, where $A$ is the identity matrix.
	For this problem, under the $\ell_2$ norm, we observe an upper bound of $m=O(k \log (d)/\varepsilon + k\log(k/\varepsilon)/\varepsilon^2)$, showing that sparse recovery is strictly easier to sketch than sparse regression. 
 
	For sparse regression under hinge-like loss functions including sparse logistic and sparse ReLU regression, we give the first known sketching bounds that achieve $m = o(d)$ showing that $m=O(\mu^2 k\log(\mu n d/\varepsilon)/\varepsilon^2)$ rows suffice, where $\mu$ is a natural complexity parameter needed to obtain relative error bounds for these loss functions. We again show that this dimension is tight, up to lower order terms and the dependence on $\mu$.	
 
	Finally, we show that similar sketching bounds can be achieved for LASSO regression, a popular convex relaxation of sparse regression, where one aims to minimize $\|Ax-b\|_2^2+\lambda\|x\|_1$ over $x\in\mathbb{R}^d$. We show that sketching dimension $m =O(\log(d)/(\lambda \varepsilon)^2)$ suffices and that the dependence on $d$ and $\lambda$ is tight.
\end{abstract}
\thispagestyle{empty}

\clearpage
\tableofcontents
\thispagestyle{empty}
\clearpage

\setcounter{page}{1}
\section{Introduction}\label{sec:intro}
We study oblivious sketching for $k$-sparse regression. Given a data matrix $A\in \R^{n\times d}$, and a target vector $b\in \R^n$, linear regression problems aim at finding a vector $x\in \R^{d}$ such that $Ax\approx b$.
The deviation from $Ax$ to $b$ is quantified via a loss function $f(Ax,b)$, where popular examples include the loss in terms of an $\ell_p$ norm, logistic loss, and {ReLU}.
Sketching techniques for these problems have been widely and successfully applied. Here, one samples a sketching matrix $S\in \R^{m\times n}$ from some distribution and attempts to solve the problem on $SA$ and $Sb$. If $S$ can be sampled from a distribution that does not depend on either $A$ or $b$, we call the sketch \emph{oblivious}.
Our aim is to minimize the target dimension $m$, called the \emph{sketching complexity} or \emph{sketching dimension}, while retaining the ability to extract a $(1+\varepsilon)$-approximate solution $\tilde x$ by only using the smaller sketch rather than the original data.
A particularly desirable property for the model parameter $x$ is \emph{sparsity}, i.e., where $x$ is restricted to have at most $k$ non-zero elements. 
Sparse linear regression is an important technique for handling very high dimensional data sets, such as those arising in biostatistics \citep{Mbatchou2021}. It produces a linear model that depends on just a small number of parameters, and thus is more interpretable and can be learned accurately from a relatively small data sample. From a computational point of view, imposing sparsity constraints makes the problem significantly harder. Most unconstrained regression problems are convex and thus we can draw on a wide array of gradient-based methods, but sparse linear regression is $NP$-hard \citep{Natarajan95}. Many heuristics and relaxations have been developed to solve the problem in practice,  \citep[see][]{Tibshirani96,miller2002subset,DasK08,DasK18}. Their running time typically depends at least polynomially on the number $n$ of observations. It is thus natural to study methods for reducing the number of rows in the data so that computations become more efficient and an approximately optimal solution is retained.

While computational savings are immediate from our results for most of the mentioned heuristics and approximation algorithms covered in the related work (see \cref{sec:related_work}), potential speed-ups need to be verified on a case-wise basis and contrasted with the cost of applying the sketch to the data. Additionally, sketching has other motivations such as saving memory, processing (turnstile) data streams \citep{ClarkW09}, and aggregating distributed data \citep{KannanVW14}. We refer to \citep{Munteanu23} for a brief introduction. We stress that our bounds refer only to the reduced sketching dimension $m$, not to other complexity measures.

While the performance of sketching techniques is well understood for \emph{unconstrained} regression problems, we know little about the complexity for sparse regression problems. It is clear that a guarantee for unconstrained regression also applies to sparse regression, but it is not at all clear that these bounds are tight. In particular, the special case of sparse recovery has many celebrated results exploiting sparsity to reconstruct a target signal from a few measurements. Thus we ask the following question:

\begin{center}
	\emph{Can oblivious sketching techniques benefit from\\model-sparsity for various regression problems?}
\end{center}

To our knowledge, the above question has not been investigated previously. In this paper we answer this question in the affirmative for a large class of loss functions, including the $\ell_p$ loss $\|Ax-b\|_p= \sqrt[p]{\sum_{i=1}^n |A_ix-b_i|^p}$ for $p\in[1,2]$, logistic loss $\sum_{i=1}^n \ln(1+\exp(-b_i\cdot A_ix))$, ReLU $\sum_{i=1}^n \max(0,-b_i\cdot A_ix)$ and further hinge-like loss functions $f$, where we define $\|Ax-b\|_f= \sum_{i=1}^n f(A_ix-b_i)$.
We also investigate the sketching complexity for popular heuristics used to impose sparsity, such as LASSO regression \citep{Tibshirani96}, where instead of strictly forcing $x$ to be $k$-sparse, we use the $\ell_1$ penalized objective function $\min_{x\in \R^d} \|Ax-b\|_2^2 + \lambda \|x\|_1$ to find a sparse solution. For all of these problems, we obtain optimal or nearly optimal sketching bounds. Crucially, the dependence on $d$ is at least linear for unconstrained regression, but it appears only logarithmically for sparse regression.
Thus our paper makes significant steps towards exploiting the power of sparsity for sketching regression problems.

Let $\Psi_k = \{x \in \R^d \mid \|x\|_0 \leq k\}$ denote the set of $k$-sparse vectors in $\R^d$.
We also study the related problem of $k$-sparse affine embeddings, where given the sketch and an estimator $E(SA,Sb,x)$ we require that

$$\forall x\in \Psi_k: (1-\eps) \|Ax-b\|_f \leq E(SA,Sb,x) \leq (1+\eps) \|Ax-b\|_f. $$

A $k$-sparse affine embedding is stronger than a sketch for sparse linear regression since it implies the latter. It is a generalization of the restricted isometry property (RIP) studied in the context of compressed sensing. The RIP is a special case where $A=I$ is the identity matrix, $b=0$, and we seek to preserve the norm of any $k$-sparse $x \in \Psi_k$ up to $1\pm\eps$ distortion.

The problem of $\ell_2$ sparse recovery can also be seen as a special case of sparse $\ell_2$ regression, where $A=I$, and we are given access only to a sketch $Sb$ of the vector $b$. The goal is to recover a $k$-sparse vector $\tilde x$ that is within $(1+\varepsilon)$ error to the minimizing $x$, i.e., it satisfies $\|\tilde x - b\|_2 \leq (1+\varepsilon) \min\nolimits_{x\in \Psi_k}\|x - b\|_2.$
Sketching sparse regression seems very similar to sparse recovery, since similar methods are available that yield similar upper bounds for both problems. However, our studies imply a surprising separation result. Namely, the sparse regression problem is \emph{strictly harder} to sketch than sparse recovery.

One might wonder why we do not consider data dependent sampling algorithms in addition to oblivious linear sketches, since sampling techniques are important tools for approximating regression problems in the non-sparse setting. This is because it can be observed (see \cref{thm_inf:samplingLower,thm_inf:samplingLower2}) that sampling does not help in the case of sparse regression; sampling roughly the entire input is necessary to achieve any non-trivial bound.

The crucial advantage of sketching over sampling seems to be the property of \emph{obliviousness} to the subspaces that need to be embedded, which allows us to take a union bound over all $k$-subsets of coordinates. Sampling algorithms, however, would need a different measure for each possibility. While data dependent importance sampling techniques are widely successful for the unconstrained non-sparse regression problems, they do not give any non-trivial bounds in the sparse setting. This underlines the importance of oblivious sketching techniques in the sparse context.

\subsection{Related work}\label{sec:related_work}
An important special case of the $k$-sparse regression problem is compressed sensing \citep{CandesRT06,Donoho06,BaraniukDDW08} where the matrix $A$ is the $n\times n$ identity matrix and one seeks to find a sparse vector $x$ that represents the non-sparse signal $b$ well using a linear sketching (or sensing) matrix. It is well known that a matrix for that problem requires $m=\Omega(k\log(d/k))$ rows \citep{BaIPW10}, which was improved to $m=\Omega(k\log(d/k)/\eps+k/\eps^2)$ by \citet{price20111+}. As for upper bounds, it is known that a (Gaussian) RIP matrix can be constructed with $m=O(k\log(d/k))$ rows, which suffices to solve the problem for constant $\epsilon$. Recently, new and tighter proofs for the Gaussian construction appeared \citep{LiXG20} which implicitly yield $m=\Theta(k\log(d/k)/\eps^2)$. However, to solve sparse recovery, neither the Gaussian matrix nor the RIP are necessary. For instance, when sparsity $2k$ is allowed, the problem can be solved, with constant probability for any particular input, using CountSketch \citep{CharikarCF04} with  $m=O(k\log(d/k)/\eps)$ rows \citep{price20111+}.

To our knowledge the generalization of oblivious linear sketching for sparse linear regression has not been investigated before. However, there is a body of work on the column selection problem for sparse linear $\ell_2$ regression. The sparse regression problem, i.e., minimizing the regression cost over all $k$-sparse vectors, is $NP$-hard \citep{Natarajan95}, and under reasonable complexity-theoretic assumptions it is even hard to approximate within a significantly stronger bound than the trivial $\|b\|_2$ in quasipolynomial time \citep{FosterKT15}. In light of this, some authors have identified and characterized instances for which widely applied heuristics have performance guarantees \citep{DasK08,DasK18,GilbertMS03,Tropp06b,Tropp06,TroppGS06}.
Another direction that is closer to the used heuristics, is the online version of the column selection problem. It was shown by \cite{FosterKT15} that an online algorithm whose iterations run in polynomial time would imply $NP\subseteq BPP$, even if it is allowed to increase the number of columns by an $O(\log d)$ factor. In light of these impossibility results, research has focused on \emph{inefficient} algorithms. The online algorithm of \cite{FosterKT15} runs in roughly $O(k\binom{d}{k})$ time per iteration. This was improved by \citet{HarPeledIM18} to roughly $\tilde{O}(d^{k-1})$ by giving a data structure that approximates, to within a $(1+\eps)$ factor, the geometric distance query to the closest $(k-1)$-dimensional flat spanned by the input points, leveraging the geometric interpretation of sparse linear regression. The same reference gives an impossibility result of roughly $\tilde\Omega(d^{k/2}/e^k)$ for any multiplicative error approximation by reducing from the $k$-SUM problem. Assuming the RIP property, \cite{KaleKLP17} give an efficient online algorithm with guarantees.

The heuristic that is arguably most used in practice for solving sparse regression is the least absolute shrinkage and selection operator (LASSO) by \citet{Tibshirani96}. It introduces a convex relaxation of the $\ell_0$ constraint, replacing it by an $\ell_1$ constraint. 
It was shown for well-behaved matrices $A$, that the LASSO algorithm recovers a (nearly optimal) sparse solution \citep{CandesRT06,Donoho06}.
The resulting Lagrangian form is $\min_{x\in \R^d} \|Ax-b\|_2^2 + \lambda \|x\|_1$. Since the $\lambda \|x\|_1$ regularization term is non-negative, it is immediate that an $\ell_2$-subspace embedding is sufficient for preserving the cost of any $x\in\R^d$ within $(1\pm\eps)$ multiplicative error with $m=\Theta(d/\eps^2)$ rows \citep{Sarlo2006,NelsonN14}. To our knowledge there are no results on sketching this objective with fewer rows by exploiting the sparsity induced by the regularizer.

For dense linear $\ell_p$ regression there are numerous sketching results. Starting with $\ell_2$, \cite{Sarlo2006} showed that $m=O(d/\eps)$ rows suffice to preserve the minimizer up to $1+\eps$ error. This was complemented by a matching lower bound by \cite{ClarkW09}. Extensions to $\ell_p$ regression via oblivious $\ell_p$ subspace embeddings and sampling were given in \citep{Clarkson05subgradient,DasguptaDHKM09,SohlerW11,WoodruffZ13,ClarksonW15Mestim,WangW19,LiW021}. Recent works \citep{MunteanuOW21,MunteanuOW23} gave the first oblivious linear sketches for logistic regression and (implicitly) for the $\relu$ function. Importance sampling algorithms for those generalized linear regression problems were developed by \cite{MunteanuSSW18} and further improved and generalized \citep{mai2021coresets,MunteanuOP22,WoodruffY23}.

\section{Our techniques and results}\label{sec:results}
Our results are summarized in \cref{tab:results} and cover bounds on the reduced sketching dimension $m$, not on other complexity measures such as computational cost. Our upper bounds are for affine embeddings, so an algorithm using our sketch enjoys approximation bounds over the entire search space. This straightforwardly implies matching minimization upper bounds. Our lower bounds are for the more challenging minimization variants, except for the hinge-like losses. Our bounds are tight for sparse $\ell_2$ regression. The generalization to $\ell_p$ is tight up to an \emph{additive} $O(k\log(k/\eps)/\eps^2)$ term; specifically, this means our result is tight for reasonably small $k = O(\sqrt{\eps d})$. The \rm{ReLU} upper bound has another $\tilde{O}(\mu^2)$ factor, in addition to the gap reported for $\ell_1$. Here $\mu$ is a natural parameter that is needed to parameterize the complexity of compressing data below $\Omega(n)$ for those losses \citep{MunteanuSSW18}. This also holds for hinge-like loss functions (including logistic loss) which adds another additive $\log n$, and the complementing lower bound is slightly weaker in the sense that it holds for $k$-sparse affine embeddings instead of the minimization problem.

Any subsampling approach has a matching $\Theta(n)$ bound for any loss function.  
The lower bound is given in the minimization setting and for subspace embeddings with different levels of obliviousness to the data. 
For LASSO with regularization parameter $\lambda$, our upper bound is $m=O(\log(d)/(\lambda\eps)^2)$, which we complement by a lower bound of $m=\Omega(\log(\lambda d)/\lambda^2)$. 
Finally, our upper bound for $\ell_2$ sparse recovery leaves only a small additive gap of $O(k\log(k/\eps))$ to the best known lower bound. More interestingly, the bound is sufficient to yield a separation from the \emph{strictly harder} sparse regression problem.
In summary, our bounds are tight up to lower order (additive) terms with general parameterizations, and they are tight for reasonably small values of $k/\eps$.

\begin{table*}[t!]
\caption{Summary of results. Here 
		$\mu$ is a data dependent complexity parameter \citep{MunteanuSSW18}, and $\lambda$ is the regularization parameter for LASSO, see the main text for details.} 
	\label{tab:results}
	\centering
	\begin{tabular}{ | l| l| l|}
		\hline
		{\bf Loss function / type} & {\bf Bound} & {\bf Reference} \\ \hline \hline
		${\ell_2}$ & $\Theta(k\log(d/k)/\eps^2)$ & \cref{thm_inf:LB_l2,thm_inf:UB_l2} \\ \hline\hline
		
		$\ell_p, p\in[1,2)$ & ~ & ~\\ \hline
		Lower bound & $\Omega(k\log(d/k)/\eps^2)$ & \cref{thm_inf:LB_lp} \\ \hline
		Upper bound & $O(k\log(d/k)/\eps^2+ k\log(k/\eps)/\eps^2)$ & \cref{thm_inf:UB_l1_lp} \\ \hline \hline
		
		$\relu$ \& hinge-like
		& ~ & ~ \\ \hline 
		Lower bound & $\Omega(k\log(d/k)/\eps^2)$ & \cref{thm_inf:LB_relu,thm_inf:LB_hingelike} \\ \hline
		Upper bound for $\relu$ & $O({\mu^2 k} \log ({\mu d}/{\eps})/{\eps^2})$ & \cref{thm_inf:UB_relu_from_l1} \\ \hline 
		Upper bound for hinge-like & $O({\mu^2 k} \log ({\mu n d}/{\eps})/{\eps^2})$ & \cref{thm_inf:UB_hingelike_from_relu} \\
		\hline \hline
		Any loss via sampling & $\Theta(n)$ & \cref{thm_inf:samplingLower,thm_inf:samplingLower2}  \\ \hline \hline
		LASSO with $\lambda \norm{x}_1$ & ~ & ~ \\\hline 
		Lower bound & $\Omega(\log(\lambda d)/\lambda^2)$ & \cref{thm_inf:LASSOLower} \\ \hline
		Upper bound & $O(\log(d)/(\lambda\eps)^2)$ & \cref{thm_inf:LASSOUpper} \\ \hline \hline
		$\ell_2$ sparse recovery & ~ & ~ \\ \hline
		Lower bound & $\Omega(k\log(d/k)/\eps+k/\eps^2)$ & {\citep{price20111+}}
		\\ \hline
		Upper bound & $O(k \log(d)/\eps + k\log(k/\eps)/\eps^2)$ & \cref{thm_inf:UB_sparse_recovery} \\ \hline
	\end{tabular}
\end{table*}

In the remainder we present our main results and the main ideas and technical challenges behind their proofs. The formal proofs and details are moved to their respective appendices for a concise presentation.

\subsection{Lower bounds} We obtain our lower bounds by giving a sequence of reductions. Our main lower bound for sparse $\ell_2$ regression (from which the further bounds will be derived) is obtained by a reduction from approximate (constant fraction) support recovery for sparse PCA (sPCA).

We note that a related reduction was given in \citep{BreslerPP18}, whereas the hardness of the approximate sPCA was covered in \citep{CaiMaWu2013}. However, the combination of these prior works does not give the desired hardness result for our problem for the following reasons. Their reduction requires $d$ exact sparse regression solves, where \emph{each} column is regressed on all remaining columns. The decision if the column in iteration $i$ is included in the sparse support is done by comparing the projected norm onto the optimal sparse subspace to a certain fixed threshold. The main issue with this is that they get only a $k^2$ dependence in the reduction from sPCA to sparse regression rather than $k$, which is necessary in their analysis to separate between columns being in the support or not. It is unclear how to replace those steps with randomized decisions without inflating the dependence on $k$ or other parameters even more. Further, introducing randomization in each iteration would yield only a lower bound against $1/d$ error probability as opposed to our lower bound against constant probability.

To prove our optimal lower bound against constant error, we give a reduction by solving only one single regression problem. The arguments and support set constructions from prior work \citep{AminiW2009,CaiMaWu2013} are not directly applicable for the following reasons: we plant the information on the unknown support onto an additional column which we regress onto the standard columns. This additional information is weighted sufficiently high such that it allows us to recover (a constant fraction of) the support.

Crucially, the weight is also sufficiently low, such that the support recovery problem remains hard to solve. But this needs to be reproven using our techniques, which then implies the hardness of the sparse regression problem. We note that full recovery has an $\Omega(k\log(d-k))$ lower bound \citep{AminiW2009}, which is larger than our upper bound on the sparse regression problem. We thus rely on a relaxation to approximate constant fraction support recovery using error correcting codes, which requires us to prove a novel lower bound for the \emph{simplified} problem, where the additional planted information is given to the algorithm.

We build on \citep{AminiW2009} as our starting point. The authors studied sPCA for a \emph{spiked covariance model}, where we take measurements from a Gaussian with a covariance matrix $(I_d + vv^T)$ and $v$ is a $k$-sparse vector. Here, to find the largest eigenvector means that based on vectors drawn from the Gaussian distribution, we need to find the vector $v$. Since all non-zero entries have the same value, this reduces to finding the $k$-sparse support of $v$. The authors show that based on a small number of measurements, this problem is impossible to solve with good probability by information-theoretic arguments. Here we adapt their high level intuition and prove novel hardness bounds for our adapted variant of sPCA.

\paragraph{Sparse $\ell_2$ regression}
We sketch the proof of our main theorem. Several technical details are omitted for brevity of presentation. The detailed technical derivations are in the appendix.

\begin{restatable}{theorem}{thmLBltwo}
\label{thm_inf:LB_l2}
	Let $A\in \R^{n\times d}, b\in \R^n$. Suppose $S\in\R^{m\times n}$ is an oblivious linear sketch for $k$-sparse $\ell_2$ regression with an arbitrary estimator $E(SA,Sb,x)$, such that $\tilde x \in \argmin_{x\in\Psi_k} E(SA,Sb,x)$ satisfies $\|A\tilde x-b\|_2 \leq (1+\eps) \min_{x\in \Psi_k} \|Ax-b\|_2$ with constant probability. Then $m=\Omega(k\log(d/k)/\eps^2)$.
\end{restatable}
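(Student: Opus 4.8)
The plan is to derive the bound by reduction from a planted sparse support-recovery task, establishing the needed information-theoretic hardness directly rather than quoting prior sPCA bounds. First I would fix a hard distribution over instances $(A,b)$ built from a spiked covariance model: sample a hidden $k$-subset $T\subseteq[d]$ and draw the $n$ rows of the augmented matrix $[A\mid b]\in\R^{n\times(d+1)}$ independently from $N(0,I_{d+1}+\theta\, ww^\top)$, where the spike $w$ is supported on $T$ together with the extra coordinate that becomes the target $b$. This couples $b$ with exactly the columns of $A$ indexed by $T$, so recovering the support of the $k$-sparse regressor is equivalent to recovering $T$. The signal strength $\theta$ is the pivotal quantity: I would tune $\theta=\Theta(\eps)$ so that, on one hand, any $k$-sparse $\tilde x$ with $\norm{A\tilde x-b}_2\le(1+\eps)\min_{x\in\Psi_k}\norm{Ax-b}_2$ is forced to place a constant fraction of its support inside $T$, while on the other hand the spike stays weak enough to keep recovery hard.

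Second, to match the target $k\log(d/k)$ form I would avoid insisting on full support recovery (which carries a larger $\Omega(k\log(d-k))$ cost) and instead use an error-correcting code to select a family of $N=2^{\Omega(k\log(d/k))}$ candidate supports $T_1,\dots,T_N$ whose pairwise symmetric differences are a constant fraction of $k$. With such separation, recovering merely a constant fraction of the true support pins down the codeword, so a successful sparse-regression sketch must transmit $\log N=\Omega(k\log(d/k))$ bits of information about the planted index through $(SA,Sb)$.

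Third, the core estimate is an upper bound on that transmitted information. Because $S$ is oblivious and $[A\mid b]$ is Gaussian with row covariance $\Sigma_T=I_{d+1}+\theta ww^\top$, the sketch $S[A\mid b]$ is jointly Gaussian with the separable (Kronecker) covariance $(SS^\top)\otimes\Sigma_T$. Comparing against the isotropic null $\Sigma_0=I_{d+1}$, the Kronecker structure collapses the KL divergence to the rank of $SS^\top$ (at most $m$) times the per-row spiked-covariance divergence, which is $\Theta(\theta^2)$ for small $\theta$; hence the divergence from any planted law to the null is $O(m\theta^2)=O(m\eps^2)$. Writing $V$ for the uniform codeword index, convexity gives $I(V;SA,Sb)=O(m\eps^2)$, while Fano's inequality together with constant-probability recovery forces $I(V;SA,Sb)=\Omega(\log N)=\Omega(k\log(d/k))$. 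Comparing the two bounds yields $m=\Omega(k\log(d/k)/\eps^2)$.

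The main obstacle is carrying out this last step with the right constants and against only constant failure probability. Unlike the Amini--Wainwright spiked-covariance analysis, I must control the information in the \emph{simplified} problem in which the planted coupling is handed to the algorithm, and I must keep the dependence on $k$ linear (not $k^2$) while retaining a constant-probability---rather than $1/d$---guarantee. This forces a direct KL/second-moment computation tailored to the planted regression instance, together with a careful calibration of $\theta$ against $\eps$ so that the recovery reduction and the hardness bound hold simultaneously.
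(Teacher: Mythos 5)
Your overall architecture matches the paper's: a planted spiked-covariance instance coupling $b$ to a hidden support, an error-correcting code so that constant-fraction recovery identifies the codeword, and Fano plus a Gaussian information bound (your Kronecker/whitening observation is equivalent to the paper's WLOG reduction to orthonormal rows of $S$). However, there is a genuine quantitative gap in your step 3, and it interacts with a missing ingredient in your step 1. If the spike is symmetric, $\Sigma_T = I+\theta ww^T$ with $\|w\|=O(1)$ and $\theta=\Theta(\eps)$, then the per-row KL to the isotropic null is indeed $\Theta(\theta^2)=\Theta(\eps^2)$, but the reduction in squared residual obtained by regressing $b$ on the correct columns $A_T$ rather than a disjoint set is only $\Theta(\theta^2 w_b^2\|w_T\|^2)=O(\eps^2)$ relative to a $\Theta(1)$ baseline cost; a $(1+\eps)$-approximate minimizer is therefore \emph{not} forced to align with $T$, and you would only obtain $m=\Omega(k\log(d/k)/\eps)$. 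Conversely, if you strengthen the coupling so that a $(1+\eps)$-approximation does reveal the support (the paper sets the $b$-coordinate of the spike to $1$ and the support coordinates to $\eps/\sqrt k$), then the per-row KL to the isotropic null becomes $\Theta(1)$, and your information bound collapses. One can check the two requirements are incompatible against the isotropic reference: $\theta^2 w_b^2\|w_T\|^2 \le \tfrac14\theta^2\|w\|^4$, and the latter is exactly what the KL bound forces to be $O(\eps^2)$. The paper escapes this by bounding $I(U;X)=H(X)-H(X\mid U)$ via the log-determinant of the \emph{mixture} covariance minus that of the conditional covariance, so the unit-strength component on the $b$-coordinate, being common to all codewords, cancels and only the $O(\eps^2)$ support-dependent part survives; computing the mixture covariance is where the ``balanced'' property of the code (uniform single and pairwise index frequencies) is actually used, not just its error-correction distance.

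Separately, even with the asymmetric spike, the raw instance $\min_{x\in\Psi_k}\|Ax-b\|_2$ does not force support recovery at the $(1+\eps)$ level: the unconstrained sparse minimizer has $\|x^*\|_2=\Theta(\eps)$, and the gain of the correct support over a wrong one in the objective $1+\|x\|_2^2+(1-\eps x^Tv)^2$ is again only $\Theta(\eps^2)$. The paper therefore appends the gadget row $[M,M,\dots,M]$ with target $\sqrt k\,M$, forcing $\sum_i x_i\approx\sqrt k$ and hence $\|x\|_2=\Omega(1)$, after which the support-dependent term $-2\eps x^Tv/\sqrt{k}$ is a $\Theta(\eps)$ fraction of the cost and \cref{lem:optimize_L} applies; it also argues that an arbitrary estimator can be given the gadget's contribution to the sketch so obliviousness is preserved. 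Your proposal asserts that $\theta$ can be ``tuned'' to make the recovery reduction work but supplies neither this gadget nor a substitute mechanism, so as written the reduction from approximation to constant-fraction support recovery does not go through at the claimed $1+\Theta(\eps)$ accuracy.
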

Our proof is structured as follows: First, we construct a suitable (hard) distribution over $k$-sparse supports, which is used to define our input distribution. Second, we prove the impossibility of recovering a constant fraction of the support with a small number of measurements (rows) from the input distribution below an information-theoretic lower bound. Third, we construct an $\ell_2$-regression instance for which any $1+\Theta(\eps)$ approximation derived from an oblivious sketch, paired with an arbitrary estimator, reveals a constant fraction of the support. The hardness result thus turns over to the regression problem.

For constructing the hard distribution (first proof step), we construct an error correcting code $\mathcal C$ of roughly size $| \mathcal C | = (\frac d k )^k$ consisting of $k$-sparse binary vectors that overlap in at most $ck$ indexes. The code is also exactly balanced in the following sense. Every single index $i$ appears in exactly the same number of codewords as any other index $i'$, i.e., roughly $(\frac{d}{k})^{k-1}= \frac{k}{d} |\mathcal C|$ times, and each pair $(i,j)$, for $i\neq j$, appears exactly the same number of times as any other pair $(i',j')$, for $i'\neq j'$, i.e., roughly $(\frac{d}{k})^{k-2}= \frac{k^2}{d^2} |\mathcal C|$ times. We augment each codeword by another (w.l.o.g. first) coordinate, which is fixed to $1$.

We pick a codeword $c\in\mathcal C$ uniformly at random and we let our distribution over $n\times d$ inputs be $Z=[b,A]=G(I_d + vv^T)^{1/2}$, where $G$ is a random Gaussian matrix, i.e., each $G_{ij}\overset{i.i.d.}{\sim} N(0,1)$, and for all $j\in[d]\setminus\{1\}$ we set $v_j=\frac{\eps}{\sqrt{k}}c_j$ whereas for the first, augmented coordinate, we set $v_1=c_1=1$.
This concludes the description of the hard input distribution and the situation before sketching.

Now, since we prove a lower bound against arbitrary estimators, which could for example change the basis and rescale in an arbitrary but appropriate way, we can assume w.l.o.g. that the sketching matrix $S\in\mathbb{R}^{m\times n}$ has orthonormal rows. The sketch thus takes the form $$SZ=SG(I_d + vv^T)^{1/2}=H(I_d + vv^T)^{1/2},$$ where $H$ is again a Gaussian matrix. $SZ$ thus has the same Gaussian distribution as the input matrix $Z$ but the number of rows is reduced from $n$ to $m < n$.

We move to the second proof step, i.e., the impossibility of recovering a constant fraction of the support with a small number of measurements (rows) from the above input distribution. To this end, we let $X$ be a random variable that has the distribution of one row of our sketch. 
We use Fano's inequality to bound the failure probability in terms of the size of the code $\mathcal C$ and the mutual information that quantifies how much information the rows of the sketch, denoted by $X^m$, reveal about the unknown support $U$ of $c\in\mathcal C$: 
$$\pr[\texttt{error}] \geq 1 - \frac{I(U;X^m)+\log 2}{\log( |\mathcal C| -1)}.$$

In order to bound the mutual information, we observe by the chain rule for entropy and the maximum entropy property of the Gaussian distribution \citep{CoverT2006} that the mutual information can be bounded in terms of log determinants: 
\begin{align*}
    I(U,X^m) &= H(X^m) - H(X^m|U) \\
    &\leq \frac{m}{2}\log\det \E [xx^T] - \frac{m}{2}\log\det \E [xx^T|U]
\end{align*}
Leveraging the balanced structure of our code construction, the matrices involved have a nice block structure which we exploit to bound the mutual information by $O(\eps^2 m)$. Further note that the logarithmic code size satisfies $\log(|\mathcal C|)=O(k\log(d/k))$. Plugging this back into Fano's inequality we obtain a lower bound $m=\Omega(k\log(d/k)/\eps^2)$ against any constant error probability support recovery algorithm, which concludes the second step of our proof. We refer to the appendix for formal details.

Finally, we describe the reduction to the sparse $\ell_2$ regression problem, i.e., the third and last step of our proof. Given $Z=[b,A]$ as described above, we consider the following instance:
\[
\min_{x \in \Psi_{k}}
\left\lVert
\left[
\begin{array}{c}
	M \, M \, \ldots \, M \\
	A
\end{array}
\right] x 
- 
\left[
\begin{array}{c}
	\sqrt{k} M \\
	b
\end{array}
\right]
\right\rVert_2,
\]
where $M$ is sufficiently large such as to enforce $\sum_i x_i$ to be close to $\sqrt{k}$. In particular this is needed to prevent the trivial solution $x=0$, and more precisely to impose $\|x\|_2\geq \frac{1}{\sqrt{k}}\|x\|_1\geq\frac{1}{\sqrt{k}} \sum_i x_i\approx 1$. 
Since we prove a lower bound against an arbitrary estimator, we can assume that it is given the first column of any sketching matrix $S'=[s_1,S]$ and the structure of the additional row $r_1=[M,M,\ldots ,\sqrt{k}M]$ including the value of $M$. This enables the estimator to remove the influence of the tensor product $s_1\cdot r_1$, such that it can proceed with the estimation on $SZ=S[b,A]$ only. 
Finally, we show that if we solve the above problem (on the sketch) up to a factor of  $1+\Theta(\eps)$, then the resulting solution $\tilde x$ shares a constant fraction of its support with the actual support $U$ of the random codeword $c\in \mathcal C$. Due to the error correcting code construction, this uniquely identifies the \emph{full} unknown support $U$, which concludes the third step of our proof. We hereby obtain our $m=\Omega(k\log(d/k)/\eps^2)$ lower bound against constant error probability oblivious sketching for sparse $\ell_2$ regression.

\paragraph{Sparse $\ell_p$ regression}
Our next aim is to extend the $m=\Omega(k\log (d/k)/\eps^2)$ result to the minimization version of $\ell_p$ norm regression for arbitrary $p\geq 1$.

\begin{restatable}{theorem}{thmLBlp}
\label{thm_inf:LB_lp}
	Let $A\in \R^{n\times d}, b\in \R^n$. Suppose $S\in\R^{m\times n}$ is an oblivious linear sketch for $k$-sparse $\ell_p$ regression for any $p\geq 1$ with an estimator $E_p(SA,Sb,x)$, such that the minimizer $\tilde x \in \argmin_{x\in\Psi_k} E_p(SA,Sb,x)$ satisfies $\|A\tilde x-b\|_p \leq (1+\eps) \min_{x\in \Psi_k} \|Ax-b\|_p$. Then $m=\Omega(k\log(d/k)/\eps^2)$.
\end{restatable}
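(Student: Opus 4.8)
The plan is to recycle the first two steps of the proof of \cref{thm_inf:LB_l2} essentially verbatim and to redo only the final regression reduction for the $\ell_p$ norm. The crucial observation is that the hard input distribution $Z=[b,A]=G(I_d+vv^T)^{1/2}$ and the resulting sketch $SZ=H(I_d+vv^T)^{1/2}$ are Gaussian \emph{regardless} of which loss we ultimately regress under. Consequently the information-theoretic bound on recovering a constant fraction of the support of $c\in\mathcal{C}$ from the $m$ rows of the sketch is insensitive to $p$ and already yields $m=\Omega(k\log(d/k)/\eps^2)$ for the support-recovery task. What must be re-established is only that an approximate $\ell_p$-regression solution on the augmented instance still reveals that support.

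First I would keep the same augmented instance, appending the deterministic row $[M,\ldots,M]$ with target $\sqrt{k}M$, so that for $M$ large enough any solution of bounded cost satisfies $\sum_i x_i\approx\sqrt{k}$ and the trivial solution is excluded; in the $\ell_p$ objective this contributes $|M(\sum_i x_i-\sqrt{k})|^p$ and, as in the $\ell_2$ case, a $(1+\eps)$-approximate minimizer is forced to keep $\sum_i x_i$ tightly concentrated around $\sqrt{k}$ once $M$ is taken large. The key new ingredient is a concentration argument that ties $\ell_p$ back to $\ell_2$ on this Gaussian instance. For a fixed $k$-sparse $x$, each coordinate of the residual $Ax-b$ is an i.i.d.\ centered Gaussian with variance $\sigma_x^2=w^T(I_d+vv^T)w=\|w\|_2^2+(v^Tw)^2$, where $w$ is the coefficient vector induced by $x$. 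Hence $\E\|Ax-b\|_p^p=n\,\gamma_p\,\sigma_x^p$ with $\gamma_p=\E|N(0,1)|^p$, so that minimizing $\|Ax-b\|_p$ over $k$-sparse $x$ is, in expectation, the very same optimization as minimizing $\sigma_x$ -- which is precisely the quantity whose approximate minimization the $\ell_2$ reduction already shows to reveal the support.

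I would then upgrade this from expectation to a uniform, high-probability statement: with probability $1-o(1)$ over $G$, simultaneously for all $k$-sparse $x$ one has $\|Ax-b\|_p=(n\gamma_p)^{1/p}\sigma_x\,(1\pm\delta)$. Because the lower bound leaves the ambient dimension $n$ completely free, I can take $n$ large enough that $\delta=o(\eps)$, combining concentration of sums of i.i.d.\ $|N(0,\sigma^2)|^p$ variables with a standard net over the $\binom{d}{k}$ supports and coefficient directions together with the Lipschitzness of $\|Ax-b\|_p$ in $x$. Given this, a $(1+\eps)$-approximate $\ell_p$ minimizer (as produced by the estimator, whose guarantee is on the true cost $\|A\tilde x-b\|_p$) is automatically a $(1+O(\eps))$-approximate minimizer of $\sigma_x$; the support-revealing conclusion of the $\ell_2$ reduction then applies unchanged, the error-correcting-code structure pins down the full support $U$, and the bound $m=\Omega(k\log(d/k)/\eps^2)$ transfers to $\ell_p$.

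The main obstacle I anticipate is the uniform concentration in the last paragraph at the required $(1\pm o(\eps))$ accuracy: one must control the deviation of $\|Ax-b\|_p$ over a \emph{continuum} of $k$-sparse $x$ rather than a single $x$, which demands a careful net argument with quantitative control of how the net granularity, the ambient $n$, and the target accuracy $\delta$ interact, together with a verification that the deterministic augmentation row does not spoil the multiplicative comparison between the $\ell_p$ and $\ell_2$ costs near the optimum.
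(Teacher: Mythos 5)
Your proposal is correct in outline, but it takes a genuinely different route from the paper. The paper proves \cref{thm_inf:LB_lp} by a short black-box reduction: by Dvoretzky's theorem there is an oblivious Gaussian map $G$ with $\|GAx-Gb\|_p=(1\pm\eps)\|Ax-b\|_2$ for all $k$-sparse $x$, so if $S$ were a sketch for $k$-sparse $\ell_p$ regression then $SG$ would be an oblivious sketch for $k$-sparse $\ell_2$ regression (with the $\ell_p$ estimator), and the bound follows immediately from \cref{thm_inf:LB_l2} applied to $SG$; the hard instance is never revisited. You instead re-open the third step of the $\ell_2$ proof and run it directly in $\ell_p$ on the same Gaussian hard instance, using the concentration $\|Ax-b\|_p\approx(n\gamma_p)^{1/p}\sigma_x$ uniformly over $k$-sparse $x$. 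The two arguments rest on the same underlying phenomenon -- a Gaussian matrix with enough rows is a $(1\pm\delta)$ embedding of $\ell_2$ into $\ell_p$ on a union of low-dimensional subspaces -- but your packaging forces you to re-verify the support-revealing step (monotonicity of $t\mapsto t^{p/2}$ so that a $(1+\eps)$-approximation of $\sigma_x^p$ is a $(1+O(\eps))$-approximation of the quantity in \cref{lem:optimize_L}) and to redo the handling of the $M$-row in the $\ell_p$ norm, all of which the composition-with-$G$ trick sidesteps. The ``main obstacle'' you flag, namely uniform $(1\pm o(\eps))$ concentration over the continuum of $k$-sparse $x$, is real work in your route but is exactly what the citation to Dvoretzky/Matou\v{s}ek absorbs in the paper's route; since $n$ is unconstrained in the lower bound, both resolutions are legitimate. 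Your version does have one small payoff: it makes explicit that the $\ell_p$ hard distribution can be taken to be literally the same Gaussian spiked-covariance instance, rather than that instance pushed through an additional embedding matrix.
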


We start with $\ell_1$ and discuss the general case $p\in [1,\infty)$ below. By Dvoretzky's theorem, we can embed $\ell_2$ into $\ell_1$ with distortion $1\pm\eps$ using a random Gaussian mapping $G$. More precisely, $\|Ax-b\|_2 = (1\pm\eps) \|GAx-Gb\|_1$ for all $k$-sparse $x$, where $G$ has $O(n \log(1/\eps) /\eps^2)$ rows and consists of i.i.d. Gaussians. In particular it is an oblivious embedding. Now suppose we had a sketch $S$ for the $k$-sparse $\ell_1$ regression problem. Then we can show that the minimizer $\tilde x$ for $\|SGAx-SGb\|_1$ is also a $1+O(\eps)$ approximation for $\|Ax-b\|_2$. This implies that $SG$ is an oblivious sketch for the $\ell_2$-norm problem (with an $\ell_1$-norm estimator), and thus $S$ requires $m=\Omega(k \log(d/k) / \eps^2)$ rows.

Similarly, $\ell_2$ embeds obliviously and up to $1\pm\eps$ into $\ell_p$ for all $p\geq 1$ using a random Gaussian matrix $G$ with a number of rows depending on $O(n)$ for $1\leq p\leq 2$ and on $n^{O(p)}$ for $p>2$, \citep[see][p. 30]{Matousek13}, but the number of rows of $G$ does not matter in our context since it is reduced by the sketch $S$. It follows that the $m=\Omega(k \log(d/k) / \eps^2)$ lower bound holds for $\ell_p$, for every $p \geq 1$.

\paragraph{Sparse {ReLU} and hinge-like loss regression}
We further extend the lower bound to the \rm{ReLU} loss function. This time we reduce from the sparse $\ell_1$ regression problem by designing an exact embedding of $\ell_1$ into $\|\cdot\|_{\relu}$.

\begin{restatable}{corollary}{thmLBrelu}
\label{thm_inf:LB_relu}
	Let $A\in \R^{n\times d}, b\in \R^n$. Suppose $S\in\R^{m\times n}$ is an oblivious linear sketch for $k$-sparse $\relu$ regression with an estimator $E_{\relu}(SA,Sb,x)$, such that $\tilde x \in \argmin_{x\in\Psi_k} E_{\relu}(SA,Sb,x)$ satisfies $\|A\tilde x-b\|_{\relu} \leq (1+\eps) \min_{x\in \Psi_k} \|Ax-b\|_{\relu}$. Then $m=\Omega(k\log(d/k)/\eps^2)$.
\end{restatable}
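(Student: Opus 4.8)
The plan is to reduce from the sparse $\ell_1$ regression lower bound, which is the $p=1$ case of \cref{thm_inf:LB_lp} and already gives $m = \Omega(k\log(d/k)/\eps^2)$. To carry the bound over to the $\relu$ loss it suffices to exhibit an \emph{oblivious, exact} embedding of the $\ell_1$ loss into the $\relu$ loss: a fixed linear map, independent of the input, that turns any $\ell_1$-regression instance into a $\relu$-regression instance with identical cost on every $k$-sparse $x$.

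The key observation is the pointwise identity $|t| = \relu(t) + \relu(-t)$, valid for all $t \in \R$. Hence, given an $\ell_1$ instance $(A,b)$ with $A \in \R^{n\times d}$ and $b \in \R^n$, I would duplicate the rows with a sign flip: let $T = \left[\begin{smallmatrix} I_n \\ -I_n \end{smallmatrix}\right] \in \R^{2n\times n}$ and set $(A',b') = (TA,\, Tb)$. Then for every $x$,
\[
\|A'x - b'\|_{\relu} = \sum_{i=1}^n \relu\big((Ax-b)_i\big) + \sum_{i=1}^n \relu\big(-(Ax-b)_i\big) = \sum_{i=1}^n |(Ax-b)_i| = \|Ax-b\|_1 ,
\]
so the embedding is exact for all $x$, in particular for all $x \in \Psi_k$, and the map $T$ is a fixed matrix that does not depend on $(A,b)$.

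Now suppose $S' \in \R^{m\times 2n}$ is an oblivious $\relu$ sketch with estimator $E_{\relu}$. I would compose it with $T$ to obtain $S = S'T \in \R^{m\times n}$, so that $SA = S'A'$ and $Sb = S'b'$, and define the induced $\ell_1$ estimator by $E_1(SA,Sb,x) \eqdef E_{\relu}(S'A',S'b',x)$. Since $T$ is deterministic and the distribution of $S'$ is independent of the ($T$-image of the) input, the distribution of $S = S'T$ is independent of $(A,b)$, so $S$ is an oblivious $\ell_1$ sketch with the same number of rows $m$. The minimizer $\tilde x \in \argmin_{x\in\Psi_k} E_1$ coincides with the $\relu$-minimizer, and by exactness of the embedding its $\relu$-approximation guarantee transfers verbatim: $\|A\tilde x - b\|_1 = \|A'\tilde x - b'\|_{\relu} \leq (1+\eps)\min_{x\in\Psi_k}\|A'x-b'\|_{\relu} = (1+\eps)\min_{x\in\Psi_k}\|Ax-b\|_1$. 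Applying \cref{thm_inf:LB_lp} with $p = 1$ then yields $m = \Omega(k\log(d/k)/\eps^2)$.

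There is little technical difficulty once the embedding is in place; the remaining points are bookkeeping. I must check that the sign-flipped doubling of rows leaves $d$ unchanged (it does), that the blow-up from $n$ to $2n$ rows is irrelevant to a bound on the target dimension $m$, and above all that obliviousness is preserved under composition with the fixed map $T$ -- which holds precisely because $T$ carries no information about $(A,b)$. The only genuine conceptual content is the identity $|t| = \relu(t) + \relu(-t)$, which makes the embedding exact rather than merely $(1\pm\eps)$-approximate, so no error accumulates and the $\ell_1$ lower bound passes through without loss.
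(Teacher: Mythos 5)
Your proof is correct and matches the paper's argument essentially verbatim: the paper also uses the identity $\|y\|_1=\|y\|_{\relu}+\|-y\|_{\relu}$, the fixed stacking matrix $P=[I_n\,;\,-I_n]^T$ (your $T$), and the composition $S\cdot P$ to turn an oblivious $\relu$ sketch into an oblivious $\ell_1$ sketch, then invokes \cref{thm_inf:LB_lp}. No gaps; your additional remarks on obliviousness being preserved under composition with a fixed map are exactly the right bookkeeping.
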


More precisely, it holds for all $x\in \R^d$ that $\|x\|_1=\|x\|_{\relu} + \|-x\|_{\relu}$. A similar argument as in the case of the Gaussian $\ell_2 \rightarrow \ell_p$ embedding yields the lower bound for sketching $\|\cdot\|_{\relu}$, when we replace $G$ by the embedding matrix $P=[I , -I]^T$, which duplicates and negates the input vector. It follows that $SP$ is an oblivious sketch for $\ell_1$ and thus has the same lower bound of $m=\Omega(k\log (d/k)/\eps^2)$.

Finally, in this line of reductions, we deduce a lower bound for a $k$-sparse affine embedding for the class of hinge-like loss functions \citep{mai2021coresets}, which are close to the \rm{ReLU} function up to an additive deviation $a$ (e.g., logistic loss with $a\leq \ln(2)$), see Definition \ref{def_inf:hinge_like}.
\begin{restatable}{definition}{defhingelike}
\label{def_inf:hinge_like}
	We say $f(\cdot)$ is an $(L, a_1,a_2)$ hinge-like loss function if $f$ is $L$ Lipschitz,  $\forall x \geq 0\colon f(x)\geq a_2 >0,$ and $\forall x \colon |f(x) - \relu(x)| \leq a_1.$
\end{restatable}
\begin{restatable}{corollary}{thmLBhingelike}
\label{thm_inf:LB_hingelike}
	Let $A\in \R^{n\times d}, b\in \R^n$. Suppose $S\in\R^{m\times n}$ is an oblivious subspace embedding for some hinge-like loss function $f$ with an estimator $E_f(SA,Sb,x)$, such that we have $\forall x \in \Psi_k\colon  (1-\eps)  \|Ax-b\|_{f} \leq E_f(SA,Sb,x) \leq (1+\eps)  \|Ax-b\|_{f}.$
	Then $m=\Omega(k\log(d/k)/\eps^2).$
\end{restatable}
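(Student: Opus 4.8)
The plan is to reduce from the sparse $\relu$ lower bound of \cref{thm_inf:LB_relu}, exploiting that a hinge-like $f$ differs from $\relu$ only by the pointwise additive slack $a_1$ of \cref{def_inf:hinge_like}. Summing $|f(y)-\relu(y)|\le a_1$ over all $n$ coordinates gives, for every residual vector $v=Ax-b\in\R^n$,
\[\bigl|\,\norm{v}_f-\norm{v}_{\relu}\,\bigr|\le n a_1.\]
Consequently, any oblivious $(1\pm\eps)$ affine embedding $E_f$ for the $f$-loss is automatically a proxy for the $\relu$ cost up to this additive budget: for all $x\in\Psi_k$ we have $E_f(SA,Sb,x)=(1\pm\eps)\norm{Ax-b}_f=(1\pm\eps)\bigl(\norm{Ax-b}_{\relu}\pm n a_1\bigr)$.

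First I would use $E_f$ itself as a $\relu$-minimization estimator, i.e. set $\tilde x=\argmin_{x\in\Psi_k}E_f(SA,Sb,x)$ and compare against the $\relu$-optimum $x^\ast$. Since $\tilde x$ minimizes $E_f$, combining the lower bound $E_f(SA,Sb,\tilde x)\ge(1-\eps)(\norm{A\tilde x-b}_{\relu}-n a_1)$ with the upper bound $E_f(SA,Sb,\tilde x)\le E_f(SA,Sb,x^\ast)\le(1+\eps)(\opt_{\relu}+n a_1)$, where $\opt_{\relu}=\min_{x\in\Psi_k}\norm{Ax-b}_{\relu}$, yields
\[\norm{A\tilde x-b}_{\relu}\le \tfrac{1+\eps}{1-\eps}\,\opt_{\relu}+O(n a_1).\]
Thus $\tilde x$ is a $(1+O(\eps))$-approximate $\relu$ minimizer exactly when the additive term is negligible, i.e. when $\opt_{\relu}=\Omega(n a_1/\eps)$.

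The main obstacle is precisely this mismatch: the slack $n a_1$ is a fixed budget while $\opt_{\relu}$ could be tiny (indeed, choosing $x$ with all-negative residuals drives the $\relu$ cost toward $0$, which is why only the minimization guarantee -- not a full $\relu$ affine embedding -- can be extracted). I would resolve it by a global rescaling applied before sketching: run the reduction on the hard $\relu$ instance of \cref{thm_inf:LB_relu} scaled by a large factor $t>0$. Rescaling multiplies $\opt_{\relu}$ by $t$ but leaves the number of rows $n$, and hence the budget $n a_1$, unchanged; since the hard instance (obtained from the $\ell_2$ construction through the Dvoretzky and $P=[I,-I]^\top$ embeddings) has $\opt_{\relu}$ bounded away from $0$ with good probability (and $a_2>0$ keeps the corresponding $f$-cost nondegenerate), taking $t=\Theta(n a_1/(\eps\cdot\opt_{\relu}))$ forces $n a_1\le\eps\,\opt_{\relu}$. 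As $\relu$ minimization is scale-equivariant -- the minimizer is unchanged and the approximation ratio is preserved under scaling -- the rescaled instance is still a valid hard instance, and on it the above bound becomes $\norm{A\tilde x-b}_{\relu}\le(1+O(\eps))\,\opt_{\relu}$.

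Putting the pieces together, the oblivious $f$-affine-embedding $S$, equipped with the estimator $E_f$ on the rescaled input, constitutes an oblivious sketch solving $k$-sparse $\relu$ minimization to a $(1+O(\eps))$ factor with the same $m$ rows. \cref{thm_inf:LB_relu} then forces $m=\Omega(k\log(d/k)/\eps^2)$, as claimed. The only quantitative care required is to keep the slack from $\tfrac{1+\eps}{1-\eps}$ and from the additive term within a constant multiple of $\eps$, so that the $\eps^{-2}$ dependence of the $\relu$ bound is inherited verbatim.
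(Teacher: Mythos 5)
Your proof is correct, and it rests on the same core mechanism as the paper's --- scaling the instance so that the additive deviation $n a_1$ between $\|\cdot\|_f$ and $\|\cdot\|_{\relu}$ becomes negligible --- but it packages that mechanism differently. The paper turns $E_f$ into a full $k$-sparse \emph{affine embedding} for $\relu$ via the limit estimator $E_{\relu}(SA,Sb,x)=\lim_{c\to\infty}E_f(ScA,Scb,x)/c$, which forces a separate treatment of the case $\|Ax-b\|_{\relu}=0$ (where the multiplicative relation between $f$ and $\relu$ breaks down and one instead argues the estimate tends to $0$); it then passes to the minimization lower bound of \cref{thm_inf:LB_relu}. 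You skip the affine-embedding intermediate entirely: a single finite rescaling $t=\Theta(n a_1/(\eps\,\opt_{\relu}))$ of the hard instance, combined with the sandwich $E_f(\tilde x)\le E_f(x^*)$, already yields a $(1+O(\eps))$ $\relu$-minimization guarantee, and the zero-cost case never arises because the hard instance has $\opt_{\relu}$ bounded below by a positive quantity determined by the construction. Your route is slightly more economical for the stated corollary; the paper's buys the stronger conclusion that $S$ is genuinely a $\relu$ affine embedding. One point to tighten: the scale $t$ must be fixed obliviously, i.e., from the known distributional scale of $\opt_{\relu}$ on the hard instance (which is concentrated, up to the Dvoretzky distortion, around a value depending only on $n$, $k$, $\eps$), not from the realized input --- equivalently, simply bake the factor $t$ into the definition of the hard distribution before the reduction begins.
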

Since our minimization lower bound for \rm{ReLU} implies an affine embedding lower bound as a direct consequence, it is sufficient to show that any $k$-sparse affine embedding for $f$ yields a $k$-sparse affine embedding for ReLU. The difficulty is that we have an approximate multiplicative embedding of \rm{ReLU} into $f$ only for the positive part. Hence, the remaining part where \rm{ReLU} evaluates to zero needs to be taken care of separately, and the minimization lower bound does not follow directly as in our previous arguments.

\paragraph{Sampling fails for sparse regression}
Sampling based algorithms are important tools in sketching for non-sparse regression \citep{DrineasMM06,DasguptaDHKM09,MunteanuSSW18,mai2021coresets}. However, they do not give any non-trivial results in the sparse setting.

To corroborate why our upper bounds all build on oblivious linear sketches, rather than sampling, we prove lower bounds for matrices $S$ that subsample (and reweight) rows of the input. Here we show that if $A$ is the identity and $b$ is a random standard basis vector, then any algorithm that has access only to $SA$ and $Sb$, must fail with probability $> 1/2$ already in the case $k=1$. 
\begin{restatable}{theorem}{thmSamplingLower}
\label{thm_inf:samplingLower} Consider any bounded approximation factor $\alpha \ge 1$ and any $\norm{\cdot}: \R^n \rightarrow \R^{\ge 0}$ which evaluates to $0$ on the all zeros vector and to some positive number on any other vector. For any $n > 9$, there is some input matrix $A \in \R^{n \times n}$ and distribution over vectors $b \in \R^{n}$ such that for any sampling matrix $S \in \R^{m \times n}$ with $m < n/3$, no algorithm that accesses just $SA$ and $Sb$ can output a $k$-sparse $\tilde x \in \Psi_k$ with $\norm{A\tilde x-b} \le \displaystyle \alpha \cdot \min\nolimits_{x \in \Psi_k} \norm{Ax-b}$ with probability at least $1/2$ (over the choice of $b$ and any possible randomness in the algorithm).
\end{restatable}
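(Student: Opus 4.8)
The plan is to exhibit a single hard instance---$A = I_n$ (the $n\times n$ identity) together with $b = e_J$, a standard basis vector whose index $J$ is drawn uniformly from $[n]$---and to show it already defeats sampling when $k=1$. The first step is to observe that here the optimum vanishes: $\min_{x\in\Psi_k}\norm{Ax-b} = \norm{e_J-e_J} = \norm{0} = 0$, since $e_J$ is $1$-sparse and hence lies in $\Psi_k$. Because $\norm{\cdot}$ is strictly positive off the origin and $\alpha$ is finite, the approximation threshold $\alpha\cdot\min_{x\in\Psi_k}\norm{Ax-b}$ equals $0$, so any successful output must satisfy $A\tilde x = b$, i.e.\ the estimator is forced to recover $\tilde x = e_J$ \emph{exactly}. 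This converts the approximation guarantee into exact recovery of a uniformly random basis vector.

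The second step isolates the structural weakness of a sampling matrix. Each of the $m$ rows of $S$ is a reweighting of a single input coordinate, so, letting $T = \{\, j\in[n] : S e_j \neq 0 \,\}$ denote the set of coordinates $S$ can observe, we have $|T|\le m < n/3$. The information available to the algorithm is the pair $(SA, Sb) = (S, S e_J)$. Whenever $J\notin T$ this collapses to $(S,0)$, a view identical for every such $J$. Thus, conditioned on the event $\{J\notin T\}$, the output distribution of the (possibly randomized) estimator is independent of the realized value of $J$: the sampled data carries no information distinguishing the $|[n]\setminus T| > 2n/3$ unobserved indices.

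The third step is a short counting bound. Writing $\pr[\text{success}] = \pr[J\in T]\,\pr[\text{success}\mid J\in T] + \pr[J\notin T]\,\pr[\text{success}\mid J\notin T]$, the first term is at most $|T|/n < 1/3$. For the second, fix the estimator's internal randomness: its output $\tilde x$ is then a single fixed vector, and $\tilde x = e_J$ can hold for at most one index $J$; as $J$ is uniform on $[n]\setminus T$, the conditional success probability is at most $1/(n-|T|) < 3/(2n)$, and averaging over the randomness preserves this bound. Combining, $\pr[\text{success}] < \tfrac13 + \tfrac{3}{2n}$, which is strictly below $\tfrac12$ precisely when $n>9$.

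I expect the only delicate point to be making the middle step airtight: one must argue that on the event $J\notin T$ the transcript $(S, Se_J)$ is a deterministic object that does not involve $J$, so that no adaptive or randomized rule can beat blind guessing among the indistinguishable coordinates; everything else is a union bound plus a one-line counting estimate. Finally, the restriction to $k=1$ is only for concreteness---the argument is unchanged for any $k\ge1$, since at most one $J$ can satisfy $\tilde x = e_J$ for a fixed $k$-sparse output $\tilde x$, and the optimum remains $0$.
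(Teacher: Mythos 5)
Your proposal is correct and follows essentially the same route as the paper's proof: the identity matrix with a uniformly random standard basis vector $b=e_J$, the observation that the optimum is zero so success forces exact recovery, and the counting bound $\pr[\text{success}] < 1/3 + 3/(2n) < 1/2$ for $n>9$. Your write-up is a bit more explicit about why the transcript is uninformative when $J$ falls outside the sampled coordinates, but the argument is the same.
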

The reason is that by the random choice of $b$, if $m < n/3$ then the sketch contains only rows where $b=0$, but to obtain a bounded approximation error, it is crucial to retain the non-zero row. By construction, $A$ cannot help to find this coordinate and thus all possibilities that are not in the sample have equal probability of roughly $1/n$ to succeed. A bound of $m=\Omega(n)$ thus follows.

It is crucial for this bound that the algorithm has no access to $b$ when $S$ is being constructed, supporting the fact that it is the property of \emph{obliviousness} that separates sketching from sampling for sparse regression. However, if our aim is to obtain a $k$-sparse affine embedding via sampling then an $m=\Omega(n)$ bound follows even if the algorithm has full access to the data and even if $k=1, A=I$ and $b=0$, which also means that the RIP property cannot be obtained via sampling.

\begin{restatable}{theorem}{thmSamplingLowerTwo}
\label{thm_inf:samplingLower2} Consider any bounded approximation factor $\alpha \ge 1$ and any $\norm{\cdot}: \R^n \rightarrow \R^{\ge 0}$ which evaluates to $0$ on the all zeros vector and to some positive number on any other vector. For any $n > 9$, there is some input matrix $A \in \R^{n \times n}$ such that there is no sampling matrix $S \in \R^{m \times n}$ with $m < n$, which satisfies for all $k$-sparse $x \in \Psi_k$, $\alpha^{-1} \norm{Ax} \le \norm{SAx} \le \alpha \norm{Ax}$. 
\end{restatable}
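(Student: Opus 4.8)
The plan is to exploit the fundamental weakness of any sampling matrix, namely that it cannot preserve the norm of a vector supported on coordinates it never sampled, and to make this weakness bite with the simplest possible instance $A = I_n$. Recall that a sampling matrix $S \in \R^{m \times n}$ has each of its rows equal to a reweighted standard basis row vector $w_\ell e_{j_\ell}^\top$, so that each row of $SA$ is a reweighted copy of some row $j_\ell$ of $A$. In particular, the $i$-th column of $S$ is identically zero whenever the index $i$ is never selected, i.e.\ $j_\ell \neq i$ for all $\ell \in [m]$. I would take $A = I_n$, which is a legitimate square instance, so that $SA = S$ and the embedding requirement for $k$-sparse $x$ becomes $\alpha^{-1}\norm{x} \le \norm{Sx} \le \alpha\norm{x}$.

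The key step is a pigeonhole argument. Since $m < n$, the multiset of sampled indices $\{j_1, \dots, j_m\}$ contains at most $m < n$ distinct elements, so there exists a coordinate $i_0 \in [n]$ that is never sampled. By the observation above, the $i_0$-th column of $S$ is then the zero vector, which gives $S e_{i_0} = 0$. I would then use the single test vector $x = e_{i_0}$: it is $1$-sparse, hence $x \in \Psi_k$ for every $k \ge 1$, and it is nonzero, so by the stated property of $\norm{\cdot}$ we have $\norm{A e_{i_0}} = \norm{e_{i_0}} > 0$. On the other hand $\norm{S A e_{i_0}} = \norm{S e_{i_0}} = \norm{0} = 0$. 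The required lower inequality then reads $\alpha^{-1}\norm{e_{i_0}} \le 0$, which contradicts $\norm{e_{i_0}} > 0$ together with $\alpha \ge 1$. Thus no sampling matrix with $m < n$ can satisfy the $k$-sparse affine embedding guarantee, even with $k = 1$, $A = I_n$, and $b = 0$ (so that the guarantee is just a subspace embedding, i.e.\ the RIP).

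The main ``obstacle'' here is not analytic but definitional: the entire argument hinges on formalizing a sampling matrix precisely enough that the ``unsampled column is zero'' claim is rigorous, and on verifying that a single adversarially chosen coordinate direction already breaks the embedding. Once the definition is fixed, the proof needs no randomness over $b$ and no probabilistic averaging, in sharp contrast to \cref{thm_inf:samplingLower}; this is exactly why the bound holds even against a sampling matrix chosen with full knowledge of the data, and why it rules out obtaining the RIP via sampling. I would close by contrasting this with the oblivious linear sketching setting, where the ability to take a union bound over all $\binom{d}{k}$ supports is precisely what allows a dense linear sketch to succeed on every sparse direction simultaneously, whereas a sparse, coordinate-selecting sampling matrix must inevitably miss some direction when $m < n$.
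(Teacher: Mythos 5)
Your proposal is correct and follows essentially the same argument as the paper: take $A = I_n$, note by pigeonhole that a sampling matrix with $m < n$ rows must miss some coordinate $i_0$, and observe that $\norm{SAe_{i_0}} = 0$ while $\norm{Ae_{i_0}} > 0$, violating the lower embedding inequality. Your write-up simply makes the paper's brief argument more explicit.
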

We note that in contrast to the impossibility results on sampling, sketching succeeds by relatively simple union bound arguments, e.g., \citep{BaraniukDDW08} for the $\ell_2$ case, as we will see in the next section.

\subsection{Upper Bounds}
\paragraph{Sparse $\ell_2$ regression} Again we start with $\ell_2$. The upper bound is similar to the known constructions \citep{CandesRT06,Donoho06,BaraniukDDW08} of RIP matrices via Johnson-Lindenstrauss embeddings \citep{JohnsL1984}, i.e., Gaussian matrices. The main difference to these works is that the subspaces formed by any fixed $k$-sparse solution space need not be orthogonal or aligned with the standard basis vectors.
\begin{restatable}{theorem}{thmUBltwo}
\label{thm_inf:UB_l2}
	Let $A\in \R^{n\times d}, b\in \R^n$. There exists a distribution over random matrices $S\in\R^{m\times n}$ with $m=O(k\log(d/k)/\eps^2)$ such that it holds with constant probability that
	$\forall x\in\Psi_k: (1-\eps) \|Ax-b\|_2 \leq \|S(Ax-b)\|_2 \leq (1+\eps) \|Ax-b\|_2 .$
\end{restatable}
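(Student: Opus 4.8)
The starting point is to recognize that the set of residual vectors we must preserve, $\{Ax - b : x \in \Psi_k\}$, is not a single subspace but a union of $\binom{d}{k}$ affine pieces, one per support. Fix a support $T \subseteq [d]$ with $|T| = k$ and write $A_T$ for the corresponding column submatrix. Every residual with $\supp(x) \subseteq T$ has the form $A_T y - b$ for some $y \in \R^k$, and hence lies in the \emph{linear} subspace $V_T := \mathrm{span}(\{A_i : i \in T\} \cup \{b\})$, whose dimension is at most $k+1$ once we fold the offset $b$ in as an extra direction. It therefore suffices to produce a single sketch $S$ that acts as a $(1\pm\eps)$ subspace embedding for all $\binom{d}{k}$ subspaces $V_T$ simultaneously: preserving every norm in $V_T$ in particular preserves $\norm{Ax-b}_2$ for each $x$ supported on $T$, and ranging over $T$ covers all of $\Psi_k$.

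The plan is to take $S \in \R^{m\times n}$ with i.i.d.\ $N(0,1/m)$ entries and to argue that a single draw embeds all $V_T$ at once. For one fixed $(k{+}1)$-dimensional subspace I would use the standard net argument: cover the unit sphere of $V_T$ by a $\tfrac14$-net $\calN_T$ of size at most $9^{\,k+1}=2^{O(k)}$ (a volumetric bound), apply the Johnson--Lindenstrauss concentration inequality to each net point so that $\norm{Sv}_2 = (1\pm\tfrac{\eps}{2})\norm{v}_2$ for all $v \in \calN_T$, and then bootstrap from the net to the whole sphere by the usual argument that controls the operator norm of $S$ restricted to $V_T$. Using a net of \emph{constant} granularity, rather than an $\eps$-net, keeps the net size at $2^{O(k)}$ and prevents a spurious $\log(1/\eps)$ factor; the distortion $\eps$ enters only through the JL failure probability, giving the clean guarantee that $S$ embeds a fixed $(k{+}1)$-dimensional subspace with probability $1-\delta$ whenever $m = O((k + \log(1/\delta))/\eps^2)$.

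It remains to union bound over supports. The total number of net points is at most $\binom{d}{k}\,9^{\,k+1} = \exp(O(k\log(d/k)))$, using $\log\binom{d}{k}=O(k\log(d/k))$ and absorbing the $O(k)$ term. Demanding that the JL estimate hold for all of them with, say, probability $\tfrac{9}{10}$ forces $\log(1/\delta) = O(k\log(d/k))$, so that
$$ m \;=\; O\!\left(\frac{k + k\log(d/k)}{\eps^2}\right) \;=\; O\!\left(\frac{k\log(d/k)}{\eps^2}\right). $$
On this event the bootstrap step promotes each per-net estimate to the full sphere of every $V_T$ simultaneously, which is exactly the claimed affine embedding for all $x \in \Psi_k$.

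The property the excerpt flags --- that the subspaces $V_T$ are neither mutually orthogonal nor aligned with the coordinate axes --- is, in this subspace-embedding view, \emph{not} the obstacle: a Gaussian sketch is rotation invariant and treats every fixed $(k{+}1)$-dimensional subspace identically, so no alignment with the standard basis is needed, in contrast to classical coordinate-sparse RIP arguments. The two steps that do require genuine care are (i) linearizing the affine residual set by absorbing $b$ into one extra dimension, so each piece has dimension $k+1$ rather than $k$; and (ii) choosing a constant-granularity net, so the union bound contributes only the $k\log(d/k)$ term and no extraneous $\log(1/\eps)$ appears, keeping the final bound tight at $O(k\log(d/k)/\eps^2)$.
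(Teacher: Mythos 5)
Your proposal is correct and follows essentially the same route as the paper: a Gaussian sketch, a constant-granularity net over the unit ball of each (at most) $(k+1)$-dimensional subspace spanned by a $k$-column choice together with $b$, a union bound over the $\binom{d}{k}\cdot 2^{O(k)}$ net points, and the Johnson--Lindenstrauss lemma, yielding $m=O(k\log(d/k)/\eps^2)$. Your explicit handling of the affine offset by folding $b$ into an extra dimension, and your remark that the constant-granularity net is what avoids a spurious $\log(1/\eps)$ factor, match the paper's argument exactly.
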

The idea is that there are at most $\binom{d}{k} \leq (ed/k)^k$ different $k$-sparse supports and each of them corresponds to one choice of $k$ columns of $A$. Every such choice spans a $k$-dimensional linear subspace of $\R^n$.
By the subspace embedding construction of \cite{Sarlo2006}, each subspace formed by one choice of $k$ columns can be handled by embedding the points in a net of size $(3/\eps)^k$ covering the unit ball within the subspace. The remaining vectors can be related to the net points by the triangle inequality and the embedding extends to vectors of arbitrary norm outside the unit sphere by linearity. Indeed, by a more sophisticated argument (see, e.g.,  \citep{Woodruff14}) the net can be constructed with an absolute constant $\eps_0:=1/2$ instead of $\eps$. So the total number of points to embed up to $(1\pm\eps)$ distortion is bounded by $|\calN|\leq (d/k)^k\cdot c^k$ for an absolute constant $c$. The embedding can be accomplished via the Johnson-Lindenstrauss lemma followed by a union bound, by using a Gaussian matrix with $m = O(\log(|\calN|)/\eps^2) = O(k\log(d/k)/\eps^2)$ rows, matching the lower bound.

\paragraph{Sparse $\ell_p$ regression} Towards an extension to $\ell_p$ we focus on $p=1$ first. Unlike the $\ell_2$ case, it is known that a direct embedding of $\ell_1^n$ into $\ell_1^m$ is, even for a fixed constant $\eps$, either exponential, i.e., $m\in\Omega( 2^{\sqrt{d}} )$ or must incur a distortion of $\Omega(d /\polylog(d) )$ \citep{WangW19,LiW021}, and so $(1\pm \eps)$-approximation seems out of reach. Another alternative is the non-linear median estimator of \cite{Indyk2006} which is usually avoided since it leads to a non-convex and usually hard optimization problem in the sketch space. However, we note that $k$-sparse regression is already non-convex and $NP$-hard. So this is a suitable choice in our setting. The sketching matrix of \cite{Indyk2006} is a linear sketch $C\in\R^{m\times n}$ whose entries are i.i.d. Cauchy random variables. Those are known to be $1$-stable, meaning that their dot product with a vector $x$ is again a Cauchy random variable with scale $\|x\|_1$ so that each row yields a reasonable estimator. However, to achieve concentration, the overall estimator is the median of all row estimators instead of their $\ell_1$-norm. This sketch has been combined with a net argument for $\ell_1$ in \citep{BackursIRW16} to obtain a sketching dimension of $O(k\log(k/\eps\delta)/\eps^2)$ for a $k$-dimensional subspace. However, simply taking a union bound over the $\binom{d}{k}$ choices of $k$ columns would result in $m=O((k^2\log(d/k) + k\log(k/\eps))/\eps^2)$, which is far from the lower bound. We open up the proof to improve this to $m=O((k\log(d/k) + k\log(k/\eps))/\eps^2)$, which matches our lower bound unless $k$ is relatively large (in the order of $k = \omega(\sqrt{\eps d})$).
\begin{restatable}{theorem}{thmUBlonep}
\label{thm_inf:UB_l1_lp}
	Let $A\in \R^{n\times d}, b\in \R^n, p\in [1,2)$. There exists a distribution over random matrices $S\in\R^{m\times n}$ with $m = O(k(\log(d/k) + \log(k/(\eps\delta)))/\eps^2)$ such that it holds with probability at least $1-\delta$ that
	$\forall x\in\Psi_k: (1-\eps)\normm{p}{Ax-b} \leq \normm{\rm med}{S(Ax-b)} \leq (1+\eps) \normm{p}{Ax-b},$
	where for arbitrary $y\in \R^d, \normm{\rm med}{y}:= \operatorname{median}\{|y_i| \mid i\in[d]\}$.
\end{restatable}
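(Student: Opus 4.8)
\section*{Proof proposal for \texorpdfstring{\cref{thm_inf:UB_l1_lp}}{Theorem}}

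The plan is to use the $p$-stable (Cauchy, for $p=1$) sketch of Indyk with the median estimator, handling a single $k$-dimensional subspace as in \citep{BackursIRW16}, but to replace the fragile ``bound the heavy-tailed maximum perturbation'' step of the net argument by a robust \emph{counting} step; this is precisely what removes the extra factor of $k$ and turns the naive $k^2\log(d/k)$ into $k\log(d/k)$. First I set up the sketch. Let $S\in\R^{m\times n}$ have i.i.d.\ entries drawn from the symmetric $p$-stable law $\mathcal D_p$, rescaled so that the median of $|Z|$ equals $1$ for $Z\sim\mathcal D_p$. By $p$-stability, for any fixed $y\in\R^n$ each coordinate $(Sy)_i$ is distributed as $\normm{p}{y}\cdot Z_i$ with $Z_i\sim\mathcal D_p$ i.i.d., so $\normm{\rm med}{Sy}$ is a scale estimator of $\normm{p}{y}$. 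I need this to hold simultaneously for all $y=Ax-b$ with $x\in\Psi_k$. For a fixed support $T$, the set $\{Ax-b : \supp(x)\subseteq T\}$ is a $k$-dimensional affine subspace of $\R^n$; homogenizing by appending $b$ as one extra direction, it lies in a linear subspace $V_T$ of dimension at most $k+1$. There are $\binom{d}{k}\le(ed/k)^k$ such $T$, and since $\normm{\rm med}{S\,\cdot}$ is $1$-homogeneous it suffices to control the estimator uniformly on the unit sphere of each $V_T$.

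For the per-point concentration, fix $V=V_T$ and a unit $u\in V$. Because the density of $\mathcal D_p$ is positive and continuous at its median, $\pr[|Z|\le 1-\eps]=\tfrac12-\Theta(\eps)$ and $\pr[|Z|\le 1+\eps]=\tfrac12+\Theta(\eps)$. A Chernoff bound then shows that fewer than $(\tfrac12-\Omega(\eps))m$ rows satisfy $|(Su)_i|<1-\eps$ and fewer than $(\tfrac12-\Omega(\eps))m$ satisfy $|(Su)_i|>1+\eps$, so $\normm{\rm med}{Su}\in[1-\eps,1+\eps]$, except with probability $\exp(-\Omega(\eps^2 m))$. I place a $\gamma$-net $\calN$ on the unit sphere of $V$ with resolution $\gamma=\Theta(\eps^{1+1/p})$, of size $|\calN|\le(C/\gamma)^{k+1}=2^{O(k\log(1/\eps))}$. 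A union bound over $\calN$ and over all $\binom{d}{k}$ subspaces controls the median at every net point once $\eps^2 m=\Omega\!\left(k\log(d/k)+k\log(1/\eps)+\log(1/\delta)\right)$, which is already the target dimension; crucially, the net resolution $\gamma$ here is a fixed polynomial in $\eps$ and does \emph{not} shrink with $\binom{d}{k}$ or $\delta$.

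The crux is extending from $\calN$ to the whole sphere. For a general unit $u\in V$ choose $x\in\calN$ with $\normm{p}{u-x}\le\gamma$ and set $w=(u-x)/\normm{p}{u-x}$. The naive route bounds $\max_i|(Sw)_i|$, but the maximum of $m$ $p$-stable variables is polynomially large, which forces $\gamma$ to shrink like the union-bound failure probability and, across $\binom{d}{k}$ subspaces, inflates the net exponent to the bad $k^2\log(d/k)$ term. Instead I bound only the \emph{count} $N(w)=|\{i:|(Sw)_i|>\eps/\gamma\}|$: each row is counted with probability $\Theta((\gamma/\eps)^p)=\Theta(\eps)$ by the $p$-stable tail, so $\E N(w)=\Theta(\eps m)$ and a Chernoff bound gives $N(w)\le c_1\eps m$ (for the same constant $c_1$ governing the concentration slack above) except with probability $\exp(-\Omega(\eps m))$. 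This count is uniformly controlled over all directions: the activation pattern $(\mathbf{1}[\,|(Sw)_i|>\eps/\gamma\,])_i$ is constant on each cell of the arrangement of the $2m$ hyperplanes $\{S_i w=\pm\eps/\gamma\}$ restricted to the $(k+1)$-dimensional $V$, and such an arrangement has at most $O(m^{k+1})$ cells. A union bound over one representative per cell and over all subspaces succeeds once $\eps m=\Omega(k\log(d/k)+k\log m+\log(1/\delta))$, which is dominated by the second step. On the rows where both $|(Sx)_i|$ is good and $|(Sw)_i|\le\eps/\gamma$, the perturbation satisfies $\normm{p}{u-x}\,|(Sw)_i|\le\eps$, so $|(Su)_i|=|(Sx)_i|\pm\eps\in[1-2\eps,1+2\eps]$; since at most $(\tfrac12-c_1\eps)m$ rows fail the net-point bound and at most $c_1\eps m$ fail the perturbation bound, a strict majority of rows are good, yielding $\normm{\rm med}{Su}=1\pm O(\eps)$.

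Combining the three steps and rescaling $\eps$ by a constant gives $m=O\!\left(k(\log(d/k)+\log(k/(\eps\delta)))/\eps^2\right)$; the $p=1$ case is exactly the Cauchy sketch. The main obstacle is the third step: the median is not a norm and the $p$-stable tails are heavy, so the standard net-extension through the maximum perturbation is what costs the extra factor of $k$. Replacing it with the count estimate $N(w)=O(\eps m)$, together with the polynomial bound $O(m^{k+1})$ on the number of distinct activation patterns in a $(k+1)$-dimensional subspace, is what keeps the net resolution at a fixed polynomial in $\eps$ and recovers the optimal leading term $k\log(d/k)/\eps^2$.
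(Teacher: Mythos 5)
Your sketch and overall architecture coincide with the paper's: a $p$-stable matrix with the median estimator, one $(k+1)$-dimensional subspace per support after homogenizing with $b$, a Chernoff bound showing each net point is ``good'' (at most a $\tfrac12-\Omega(\eps)$ fraction of rows too small, and likewise too large) except with probability $e^{-\Omega(\eps^2 m)}$, and a union bound over the $(ed/k)^k$ supports times the per-subspace net. Where you genuinely depart is the extension from the net to the whole sphere. The paper controls the perturbation $S\eta$ in $\ell_\infty$: it passes to an Auerbach basis $L$ of each subspace, uses the stable tail to bound every entry of $SL$ by some $\tau$, deduces $\normm{\infty}{S\eta}\le\tau'\normm{p}{\eta}$ with $\tau'=\poly(k,\log d, m,1/\delta)$, and then takes a net of resolution $\eps/\tau'$, so that $\tau'$ enters the net exponent only logarithmically. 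You instead keep the resolution at a fixed $\poly(\eps)$ and show that the perturbation exceeds $\eps$ on at most $c_1\eps m$ rows, which is all the median estimator needs; uniformity over directions comes from the $O(m^{k+1})$ bound on activation patterns within a $(k+1)$-dimensional subspace. This buys something real: the paper's $\ell_\infty$ bound is a maximum over heavy-tailed entries, and forcing it to hold simultaneously for all $\binom{d}{k}$ Auerbach bases is exactly what threatens to reintroduce the $k^2\log(d/k)$ term, whereas in your argument the only union bound that sees the $\binom{d}{k}$ factor is the cheap one at rate $e^{-\Omega(\eps^2 m)}$, and the count bound unions only over $\poly(m)^{k+1}$ patterns at rate $e^{-\Omega(\eps m)}$.

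Two steps need shoring up. First, ``a union bound over one representative per cell'' is not valid as stated: the cells of the arrangement $\{S_i w=\pm\eps/\gamma\}$ are determined by $S$ itself, so the representatives are data-dependent and the fixed-$w$ Chernoff bound does not apply to them. The conclusion $\sup_w N(w)=O(\eps m)$ is true, and your growth-function bound $O(m^{k+1})$ is the right combinatorial quantity, but it must be routed through a symmetrization (ghost-sample) argument or a relative $\eps$-approximation theorem for a range space of VC dimension $O(k)$; with $m=\Omega(k\log(1/\eps)/\eps)$, which your $m$ comfortably satisfies, this goes through. Second, you assert that the density of the symmetric $p$-stable law is positive and continuous at the median of $|Z|$, which yields $\pr[|Z|\le 1\pm\eps]=\tfrac12\pm\Theta(\eps)$. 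This is true, but for $p\in(1,2)$ there is no closed form for the density or cdf, and the paper devotes a substantial computation to exactly this point via the L\'{e}vy inversion formula applied to the characteristic function $e^{-|\gamma_p t|^p}$ (this is also where it settles the quantile question of \citep{Indyk2006}). You should either cite such a fact explicitly or reproduce that computation; as written it is the one analytic ingredient your proof silently assumes.
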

We obtain similar upper bounds by generalizing this result to $\ell_p, p\in[1,2)$. The sketching matrix is again a linear sketch $C\in\R^{m\times n}$ whose entries are i.i.d. random variables drawn from a $p$-stable distribution, generalizing the $1$-stable Cauchy distribution. Such an extension has been proposed by \cite{Indyk2006} for sketching single vectors but to our knowledge has never been worked out due to the lack of closed form expressions for the cumulative density function (cdf) and probability density function (pdf), except for $p\in\{1,2\}$. Here we show how to obtain directly a subspace embedding for all $k$-sparse vectors. To this end we leverage bounds on the tails of $p$-stable distributions \citep{BednorzLM18}. We note that $p$-stable distributions are leptokurtic. More specifically, they are heavy-tailed with decay $\Pr[|X| > \tau]\leq 1/\tau^p$ except for $p=2$. Therefore we need to rely on a non-linear quantile estimator to achieve concentration for any $p\in[1,2)$ to construct and apply the net argument, as in the case $p=1$ described above. A more intriguing question is how to analyze the cdf of $p$-stables without closed form expressions. Our solution to this problem comes from the fact that the characteristic function of any $p$-stable distribution is known in closed form and equals the pdf of a $p$-generalized normal distribution up to a normalizing constant \citep{Dytso18}. I.e., it is given by $\phi(t)=\exp(-|\gamma_p t |^p)$, where $\gamma_p$ is a constant scale parameter that depends on $p$. Using an inversion theorem of L\'{e}vy, we can analyze the cdf and its derivative via an integral involving the characteristic function. As a side result we affirm a conjecture by \cite{Indyk2006} that a $(1\pm\eps)$ approximation can be obtained via the median estimator for all $p\in [1,2]$\footnote{The reference \citep{Indyk2006} gives a non-constructive proof showing that there exists some (unknown) quantile, possibly depending on $p$ and $\eps$, that yields a good estimator.}.

\paragraph{ReLU and hinge-like loss functions}
It is well-known \citep{MunteanuSSW18} that without any assumptions, these types of functions do not admit relative error sketches with $o(n)$ rows. To address this issue, \cite{MunteanuSSW18} introduce a natural notion for the complexity of sketching the matrix $A$, which we also use to parameterize our results.
Intuitively, the parameter $\mu:=\sup_{x\in \mathbb R^d\setminus \{0\}}\normm{1}{(Ax)^+}/\normm{1}{(Ax)^-}$ is large when there is some $x$ that produces a significant imbalance between the $\ell_1$-norm of all positive and the $\ell_1$-norm of all negative entries. This can occur, e.g., when the data admits perfect linear separability. However, as \cite{MunteanuSSW18} argued, we typically expect $\mu$ to be small.
These assumptions were recently leveraged to develop the first oblivious linear sketches for logistic regression \citep{MunteanuOW21,MunteanuOW23}, which led to an efficient algorithm for the minimization problem in the non-sparse regime. 
Another recent contribution of \cite{mai2021coresets} led to small dependencies on $d,\mu$ and $\eps$ in the regime of sampling and coreset algorithms. However, we already argued that sampling would not work in our setting of sparse regression.

We therefore combine and extend those results to \emph{sketching} a wider class of loss functions and with better dependencies on the approximation parameters, based on the median sketch for $\ell_1$. We note that this sacrifices the efficiency of optimization in the sketch space, but as we have argued before, in the context of sparse regression, finding the right support is already a hard problem, which motivates us to focus on the best possible parameterization.
\begin{restatable}{theorem}{thmUBrelufromlone}
	\label{thm_inf:UB_relu_from_l1}
	Let $A\in \R^{n\times d}, b\in \R^n$. There exists an oblivious sketch $S$ with $O(\frac{\mu^2 k}{\epsilon^2} \log(\frac{\mu  d}{\epsilon \delta} ))$ rows and an estimator $g_{\relu}(SA,Sb,x)$, such that with probability at least $1 - \delta$, we have $\forall x\in\Psi_k: (1-\epsilon) \norm{Ax - b}_{\relu}\leq g_{\relu}(SA,Sb,x) \leq (1 + \epsilon) \norm{Ax - b}_{\relu}.$
\end{restatable}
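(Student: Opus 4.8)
The plan is to reduce ReLU regression to the median $\ell_1$ sketch of Theorem~\ref{thm_inf:UB_l1_lp} by exploiting the pointwise identity $\relu(y)=\tfrac12(y+|y|)$. Summing over coordinates of the residual $r=Ax-b$ gives
\[
\|Ax-b\|_{\relu}=\tfrac12\,\mathbf 1^{\top}(Ax-b)+\tfrac12\,\|Ax-b\|_1 .
\]
The first term is a \emph{linear} functional of the residual, so I would make it recoverable exactly by appending the single all-ones row $\mathbf 1^{\top}$ to the sketch; reading off the corresponding coordinate of $S(Ax-b)=(SA)x-Sb$ then yields $\mathbf 1^{\top}(Ax-b)$ with no error. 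The second term is handled by instantiating the $p=1$ median sketch of Theorem~\ref{thm_inf:UB_l1_lp}, run at accuracy $\epsilon':=c\,\epsilon/\mu$ for a suitable constant $c$, which with probability $1-\delta$ gives $\normm{\mathrm{med}}{S(Ax-b)}=(1\pm\epsilon')\|Ax-b\|_1$ simultaneously for all $x\in\Psi_k$. The estimator is then $g_{\relu}(SA,Sb,x):=\tfrac12\mathbf 1^{\top}(Ax-b)+\tfrac12\normm{\mathrm{med}}{S(Ax-b)}$, and since $S$ (the Cauchy rows together with the all-ones row) never depends on $A$ or $b$, it is oblivious.

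For the error analysis, the exactness of the linear term means that the only error comes from the $\ell_1$ estimate, so
\[
\bigl|g_{\relu}(SA,Sb,x)-\|Ax-b\|_{\relu}\bigr|\le \tfrac{\epsilon'}{2}\,\|Ax-b\|_1 .
\]
To turn this additive $\ell_1$ error into a \emph{relative} ReLU error I would invoke the $\mu$-complexity. Writing the residual homogeneously as $r=\bar A\bar x$ with $\bar A=[A\mid -b]$ and $\bar x=[x;1]$, the definition of $\mu$ applied to $-\bar x$ gives $\|r^-\|_1\le\mu\,\|r^+\|_1$, hence
\[
\|Ax-b\|_1=\|r^+\|_1+\|r^-\|_1\le(1+\mu)\,\|r^+\|_1=(1+\mu)\,\|Ax-b\|_{\relu}.
\]
Finiteness of $\mu$ also guarantees $\|Ax-b\|_{\relu}>0$ for every nonzero residual, so relative error is well defined. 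Combining, the error is at most $\tfrac{\epsilon'}{2}(1+\mu)\|Ax-b\|_{\relu}\le\epsilon\,\|Ax-b\|_{\relu}$ once $\epsilon'=\Theta(\epsilon/\mu)$, which is the claimed guarantee. Substituting this $\epsilon'$ into the row count of Theorem~\ref{thm_inf:UB_l1_lp} yields $O\!\bigl(\tfrac{k}{\epsilon'^2}(\log(d/k)+\log(k/(\epsilon'\delta)))\bigr)=O\!\bigl(\tfrac{\mu^2 k}{\epsilon^2}\log\tfrac{\mu d}{\epsilon\delta}\bigr)$ rows, absorbing the additive all-ones row.

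I expect the main obstacle to be the passage from additive $\ell_1$ error to relative ReLU error, i.e.\ justifying $\|Ax-b\|_1\le(1+\mu)\|Ax-b\|_{\relu}$ uniformly over $\Psi_k$ and correctly incorporating $b$ into the complexity parameter (through the augmented matrix $\bar A$), since this is precisely the step that forces the inner $\ell_1$ accuracy to scale as $\epsilon/\mu$ and thereby produces the $\mu^2$ factor. A secondary point to verify carefully is that appending the deterministic all-ones row does not interfere with the probabilistic median guarantee, and that the union bound over the $\binom{d}{k}$ supports inside Theorem~\ref{thm_inf:UB_l1_lp} is preserved at the rescaled accuracy $\epsilon'$.
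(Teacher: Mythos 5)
Your proposal is correct and follows essentially the same route as the paper: the identity $\relu(y)=\tfrac12(y+|y|)$ summed over coordinates, an exact all-ones row for the linear term, the median Cauchy sketch of Theorem~\ref{thm_inf:UB_l1_lp} at accuracy $\Theta(\epsilon/\mu)$ for the $\ell_1$ term, and the bound $\|r\|_1\le(1+\mu)\|r\|_{\relu}$ (obtained by applying the definition of $\mu$ to the negated augmented vector) to convert additive $\ell_1$ error into relative $\relu$ error. The paper likewise folds $b$ into the matrix by augmentation and rescales $\epsilon$ by $1/\mu$ at the end, so the two arguments coincide in all essentials.
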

Our $k$-sparse affine embedding sketch leverages the fact that $\relu(x)=({\sum_{i} x_i + \|x\|_1})/{2}$, since the negative entries are contained negatively in the sum and again positively in the norm, so they cancel. The positive values are positive in both parts and thus counted twice, so dividing by $2$ yields the exact value of $\relu(x)$. The sum of entries can be sketched exactly using only one row vector and the $\ell_1$ norm is sketched via the ($1$-stable) Cauchy sketch with median estimator, as detailed above in the previous paragraph. Now, the error of this estimate is $\eps\|x\|_1$ but by the $\mu$-complexity assumption, the $\ell_1$ norm is within roughly a $\mu$-factor of the positive entries, so folding $\mu$ into $\eps$ yields an error of $\eps\relu(x)$.
We extend this result to an even richer class of hinge-like loss functions, including logistic regression. Those functions are additively close to the \rm{ReLU} function, see Definition \ref{def_inf:hinge_like}. The logistic regression loss $\zeta(x)=\ln(1+\exp(x))$, for instance, has asymptotes equal to \rm{ReLU} in the limit of $\pm\infty$. However, close to zero, the two functions differ more significantly, attaining a bounded maximum deviation of $\zeta(0)-\relu(0)=\ln(2)$.
\begin{restatable}{theorem}{thmUBhingelikefromrelu}
	\label{thm_inf:UB_hingelike_from_relu}
	Let $A\in \R^{n\times d}$, $b\in \R^n$, and let $f$ be an $(L, a_1, a_2)$ hinge-like loss function. There exists an oblivious sketch $S$ with  
	$m = O( \frac{c^{10}\mu^2k}{\epsilon^2} \log ( \frac{cn \mu d }{\epsilon \delta} ) )$ rows, where $c= \max(1,L,a_1,1/a_2)$, and an estimator $g_f(SA,Sb,x)$, such that, with probability at least $1 - \delta$, we have $\forall x\in\Psi_k: (1-\epsilon) \norm{Ax-b}_{f}\leq g_{f}(SA,Sb,x) \leq (1 + \epsilon) \norm{Ax-b}_{f}. $
\end{restatable}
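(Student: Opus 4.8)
The plan is to bootstrap the ReLU affine embedding of \cref{thm_inf:UB_relu_from_l1} by approximating the hinge-like loss $f$ with a piecewise-linear surrogate assembled from \emph{shifted} ReLUs, and then to argue that the additive discrepancy between $f$ and this surrogate is dominated by a relative $\eps$-fraction of the true $f$-loss. Concretely, writing $y=Ax-b$ and $h:=f-\relu$, the definition of an $(L,a_1,a_2)$ hinge-like loss gives $\|h\|_\infty\le a_1$ and that $h$ is $(L+1)$-Lipschitz, while $f$ agrees up to additive $a_1$ with its two linear asymptotes (slope $1$ as $t\to+\infty$, slope $0$ as $t\to-\infty$). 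I would fix a range $R=\Theta(c\mu/\eps)$ and a resolution $\eta=\Theta(\eps/(c^2\mu))$ and replace $f$ on $[-R,R]$ by a piecewise-linear $\bar f(t)=\beta t+\gamma+\sum_j \alpha_j\,\relu(t-t_j)$ with $N=O(LR/\eta)=\poly(c\mu/\eps)$ breakpoints $t_j$, chosen so that $|\alpha_j|\le 2L$, $\bar f$ matches the two asymptotes of $f$ outside $[-R,R]$, and $\sup_{|t|\le R}|f(t)-\bar f(t)|\le\eta$.

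The estimator would then be $g_f(SA,Sb,x)=\beta\cdot\mathbf 1^\top(Ax-b)+\gamma n+\sum_j \alpha_j\, g_{\relu}(SA,\,S(b+t_j\mathbf 1),\,x)$, where a single ReLU sketch $S$ is reused for every shift because $S(b+t_j\mathbf 1)=Sb+t_j S\mathbf 1$ is computable from the stored quantities $Sb$ and $S\mathbf 1$, and the linear term $\mathbf 1^\top(Ax-b)$ is captured exactly by one additional row. Each shift corresponds to the ReLU-loss $\sum_i\relu(y_i-t_j)=\|Ax-(b+t_j\mathbf 1)\|_{\relu}$, which \cref{thm_inf:UB_relu_from_l1} estimates to relative error, with the uniformity over all of $\Psi_k$ inherited from that embedding; a union bound over the $N$ shifts contributes $\log N$, and the need to resolve shifts of magnitude up to $R$ against an $\ell_1$-mass as large as $\Theta(nR)$ is what I expect to introduce the extra $\log n$ appearing inside the logarithm of the stated bound.

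The crux — and the main obstacle — is converting the \emph{additive} surrogate error $\sum_i|f(y_i)-\bar f(y_i)|$ into a \emph{relative} error of $\sum_i f(y_i)$. I would bucket the coordinates: (i) for $y_i>R$ the ReLU value already exceeds $a_1/\eps$, so the $\le a_1$ discrepancy is a relative $\eps$-fraction; (ii) for $y_i<-R$ the $\mu$-complexity bound $\|y^-\|_1\lesssim\mu\|y^+\|_1\lesssim\mu\|Ax-b\|_f$ forces $\#\{i:y_i<-R\}\le\mu\|Ax-b\|_f/R$, so the total discrepancy over these points is again $\le\eps\|Ax-b\|_f$; and (iii) for $|y_i|\le R$ the key structural gain is that every such coordinate is \emph{expensive} — combining $f(0)\ge a_2$ with Lipschitzness and the $\mu$-bound on $\|y\|_1$ shows the number of boundary coordinates is $O(c^2\mu\|Ax-b\|_f)$, so the per-point error $\eta$ sums to $\le\eps\|Ax-b\|_f$. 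This bucketing is exactly where $a_2$ (to lower-bound the loss of boundary points) and $\mu$ (to control the negative and far mass) are indispensable, and where the polynomial-in-$c$ blow-up to $c^{10}$ accumulates from the choices of $R$ and $\eta$, the piece coefficients $\alpha_j$, and the repeated $\mu$-to-loss conversions. I expect steps (ii) and (iii) to require the most care, since they are precisely the regions where $\relu$ vanishes yet $f$ does not, mirroring the difficulty flagged in the complementary lower bound.
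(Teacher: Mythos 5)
Your final step---converting an additive error of order $\eps(n+\norm{Ax-b}_1)$ into a relative error via the bucketing with $a_2$ and the $\mu$-bound---is sound and is essentially the content of the paper's Lemma~\ref{lem:lowerbound_f_norm_by_epsilon_1_norm_plus_n}, which shows $\norm{Ax-b}_f \ge (n+\norm{Ax-b}_1)/(C\mu)$. The genuine gap is in how you propose to \emph{estimate} the surrogate. Each term $g_{\relu}(SA, S(b+t_j\mathbf 1), x)$ inherits from \eqref{eqn:relu_plus_epsilon_l1} an error of $\tfrac{\eps}{2}\norm{Ax-b-t_j\mathbf 1}_1$, i.e.\ relative in the $\ell_1$ norm of the \emph{shifted} residual, and for shifts of magnitude up to $R=\Theta(\poly(c)\mu/\eps)$ (which you need: case (ii) alone forces $R\gtrsim a_1 C\mu/\eps$) this is at least $\tfrac{\eps}{2}\, n|t_j| = \Omega(\poly(c)\mu n)$ even when $Ax=b$. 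Since $\sum_j|\alpha_j|\ge 1$ (the slope of $\bar f$ must rise from $0$ to $1$), the aggregate estimation error is $\Omega(\mu n)$, whereas your budget after Lemma~\ref{lem:lowerbound_f_norm_by_epsilon_1_norm_plus_n} is $O(\eps n/(C\mu))$. The errors of the separate shifted estimates do not cancel, and the $\mu$-complexity assumption does not transfer to the shifted instances $[A,\,b+t_j\mathbf 1]$, so the only repair is to run the underlying median sketch at accuracy $\eps'=\eps/\poly(c\mu/\eps)$; because the row count of \cref{thm_inf:UB_l1_lp} scales as $1/\eps'^2$, this inflates $m$ by $\poly(\mu/\eps)$ and misses the claimed $O(c^{10}\mu^2 k\log(cn\mu d/(\eps\delta))/\eps^2)$ bound. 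This is not the "extra $\log n$" you anticipated; it is a polynomial blow-up.

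The paper sidesteps this entirely by writing $\norm{Ax-b}_f = R(Ax-b) + \norm{Ax-b}_{\relu}$ with $R(y)=\sum_i\bigl(f(y_i)-\relu(y_i)\bigr)$, handling the $\relu$ part exactly as you do, but estimating $R$ by \emph{uniform row sampling}: since each summand is bounded by $a_1$ in absolute value, Hoeffding gives additive error $\eps n$ per fixed $x$ with $m_2=O(a_1^2(k\log(\cdot)+\log(1/\delta))/\eps^2)$ samples, and a net argument over $k$-sparse supports (split into a large-norm case, where $na_1\le\eps\norm{Ax-b}_1$ trivially bounds $|R|$, and a small-norm case with an $\ell_2$ net of size $(3(L+1)na_1/\eps^2)^k$, whence the $\log n$) makes this uniform over $\Psi_k$. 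The resulting additive error $O(\eps(n+\norm{Ax-b}_1))$ is then absorbed by Lemma~\ref{lem:lowerbound_f_norm_by_epsilon_1_norm_plus_n} exactly as in your last paragraph. If you replace your shifted-ReLU surrogate by this sampling estimator for the bounded remainder, the rest of your argument goes through.
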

The idea is now to split the loss function into two components $\sum_i (\zeta(x_i)-\relu(x_i)) + \sum_i \relu(x_i)$. The \rm{ReLU} function can be dealt with as described in the paragraph above and the remainder is a sum over bounded terms. This again enables us to achieve concentration and union bound over the net of $k$-sparse vectors up to an additive error of roughly $\eps \frac{n}{\mu}$. This error can be charged by a complementing lower bound which follows from leveraging the $\mu$-complexity assumption, akin to \citep{mai2021coresets,MunteanuOW21}, and finally yields a relative error guarantee.

For completeness, we have the following simple result that yields a connection between minimizing in the sketch space of a $k$-sparse affine embedding, as in all upper bounds above, and the original minimization problem.

\begin{restatable}{corollary}{corfminimization}
\label{cor_inf:f_minimization}
	Let $A\in \R^{n\times d}, b\in \R^n$. Let $\norm{\cdot}:\R^n \rightarrow \R^{\geq 0}$ be any loss function. Let $S$ be an oblivious linear sketch of $[A,b]$, and let $E(SA,Sb,x)$ be an estimator that satisfies
	$\forall x\in\Psi_k: (1-\eps) \norm{Ax-b} \leq E(SA,Sb,x) \leq (1+\eps) \norm{Ax-b} .$
	Then $\tilde{x} \in \argmin_{x \in \Psi_k} E(SA,Sb,x)$ satisfies
	$\|A\tilde{x}-b\| \leq (1 + O(\eps)) \min_{x \in \Psi_k} \norm{Ax-b}.$
\end{restatable}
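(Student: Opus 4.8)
The plan is to run the standard ``sandwich'' argument that converts any two-sided affine embedding into a minimization guarantee, using only the optimality of $\tilde x$ in the sketch space together with the two inequalities defining $E$. Let $x^\ast \in \argmin_{x\in\Psi_k}\norm{Ax-b}$ denote a true minimizer over the $k$-sparse vectors, so that $\norm{Ax^\ast - b} = \min_{x\in\Psi_k}\norm{Ax-b}$. The whole proof is a three-link chain of inequalities, and since the embedding is two-sided, I would apply its \emph{lower} bound to the object I want to control ($\tilde x$) and its \emph{upper} bound to the reference point ($x^\ast$).

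First I would apply the lower bound at $\tilde x$ to relate the quantity of interest to the estimator value: $(1-\eps)\norm{A\tilde x - b} \leq E(SA,Sb,\tilde x)$. Next, since $\tilde x \in \argmin_{x\in\Psi_k} E(SA,Sb,x)$ and $x^\ast\in\Psi_k$ is a feasible competitor, optimality of $\tilde x$ in the sketch space gives $E(SA,Sb,\tilde x) \leq E(SA,Sb,x^\ast)$. Finally I would apply the upper bound of the embedding at $x^\ast$ to return to the original cost: $E(SA,Sb,x^\ast) \leq (1+\eps)\norm{Ax^\ast - b}$. Concatenating the three links yields
\[
(1-\eps)\,\norm{A\tilde x - b} \;\leq\; (1+\eps)\,\norm{Ax^\ast - b} \;=\; (1+\eps)\min_{x\in\Psi_k}\norm{Ax-b}.
\]

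Dividing through by $(1-\eps)$ and using $\tfrac{1+\eps}{1-\eps} = 1 + \tfrac{2\eps}{1-\eps} = 1 + O(\eps)$ (valid for $\eps$ bounded away from $1$, e.g.\ $\eps \le 1/2$, which we may assume since $\eps$ is the target accuracy) gives exactly $\norm{A\tilde x - b} \leq (1+O(\eps))\min_{x\in\Psi_k}\norm{Ax-b}$. Note the argument makes no use of any structural property of $\norm{\cdot}$ beyond the two embedding inequalities themselves, so it applies uniformly to every loss treated in the paper.

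There is no genuine obstacle here: the statement is a purely deductive consequence of the affine-embedding hypothesis, and the only point requiring a word of care is the final step, where I must ensure $\eps$ is small enough (say $\eps\le 1/2$) for the ratio $(1+\eps)/(1-\eps)$ to collapse into the advertised $1+O(\eps)$ factor. The one conceptual subtlety worth stating explicitly is \emph{which} side of the embedding is used where---lower bound on the returned solution $\tilde x$, upper bound on the optimum $x^\ast$---since reversing these would not close the chain; everything else is immediate.
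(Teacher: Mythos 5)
Your proof is correct and is essentially identical to the paper's: both use the three-step sandwich (lower bound of the embedding at $\tilde x$, optimality of $\tilde x$ for the estimator, upper bound at $x^\ast$) and then absorb the $(1+\eps)/(1-\eps)$ ratio into $1+O(\eps)$. No further comment is needed.
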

\subsection{LASSO Regression}
LASSO regression \citep{Tibshirani96} is a convex relaxation of $k$-sparse $\ell_2$ regression and enjoys large popularity as a heuristic for inducing sparsity and feature selection. LASSO regression is a special subject of our investigation, since here we do not assume that any solution is $k$-sparse. In this case, a subspace embedding for the dense $\ell_2$ problem is possible for sketching down to $m=\Theta(d/\eps^2)$ dimensions \citep{Sarlo2006,NelsonN14}. Also in cases where the regularization parameter is very small and thus the norm of the solution is actually unconstrained, the problem becomes equivalent to least squares regression, in which case $\Theta(d/\eps)$ is necessary and sufficient \citep{Sarlo2006,ClarkW09}. Usually, however, the $\ell_1$ regularization is imposed to yield a sparse minimizer, for which we can again hope to be able to take advantage of the induced sparsity, parameterized by the value of the regularization parameter $\lambda$ such as to reduce to $\operatorname{poly}(1/\lambda,\log d)$ rows.
Here we give an upper bound for sketching that depends on an $\ell_1$ regularization parameter $\lambda$.
\begin{restatable}{theorem}{thmLASSOUpper}
\label{thm_inf:LASSOUpper}
	Consider $A \in \R^{n \times d}$,  $b \in \R^{n}$, and $\lambda \in (0,1)$. Assume that $\norm{A}_2 \le 1$ and $\norm{b}_2 \le 1$,  If $S \in \R^{m \times n}$ is a random Gaussian matrix (i.e., each entry is sampled i.i.d. from ${N}(0,1/m)$) then for any $\epsilon,\delta \in (0,1)$ and $m = O(\frac{\log d/\delta}{\lambda^2 \cdot \epsilon^2}) $, with probability at least $1-\delta$, if $\tilde x = \argmin_{x \in \R^d} \norm{SA x - Sb}_2^2 + 
	\lambda \norm{x}_1$ then $\norm{A \tilde x - b}_2^2 + \lambda \norm{\tilde x}_1 \le (1+\epsilon) \cdot \min_{x \in \R^d} \norm{Ax - b}_2^2 + \lambda \norm{x}_1.$
\end{restatable}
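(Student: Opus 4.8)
The plan is to reduce the claim to a single \emph{cost-relative restricted isometry} for the Gaussian sketch and then invoke the standard sketch-and-solve argument. Write $F(x) = \norm{Ax-b}_2^2 + \lambda\norm{x}_1$ and $\tilde F(x) = \norm{S(Ax-b)}_2^2 + \lambda\norm{x}_1$, and let $x^\star$ and $\tilde x$ be their respective minimizers. The regularizer $\lambda\norm{x}_1$ is untouched by $S$, so only the data-fit term needs to be preserved. I would first observe that evaluating at $x=0$ gives $F(x^\star) \le \norm{b}_2^2 \le 1$, hence $\lambda\norm{x^\star}_1 \le F(x^\star) \le 1$; and, using a one-vector Johnson--Lindenstrauss bound $\norm{Sb}_2^2 \le 2\norm{b}_2^2 \le 2$, also $\lambda\norm{\tilde x}_1 \le \tilde F(\tilde x) \le \tilde F(0) \le 2$. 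Thus both minimizers lie in the $\ell_1$-ball of radius $R = O(1/\lambda)$, so it suffices to control the sketch only for $x$ with $\norm{x}_1 \le R$.

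The key lemma I would aim for is: for all $x$ with $\norm{x}_1 \le R$,
\[
\big| \norm{S(Ax-b)}_2^2 - \norm{Ax-b}_2^2 \big| \le \eps\big( \norm{Ax-b}_2^2 + \lambda\norm{x}_1 \big).
\]
Granting this, $(1-\eps)F(x) \le \tilde F(x) \le (1+\eps)F(x)$ on the relevant ball, so $F(\tilde x) \le \frac{1}{1-\eps}\tilde F(\tilde x) \le \frac{1}{1-\eps}\tilde F(x^\star) \le \frac{1+\eps}{1-\eps}F(x^\star)$, which is the theorem after rescaling $\eps$ by a constant.

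To prove the lemma I would split $\norm{S(Ax-b)}_2^2 - \norm{Ax-b}_2^2$ into three pieces: the pure data term $\norm{SAx}_2^2 - \norm{Ax}_2^2$, the cross term $\langle SAx, Sb\rangle - \langle Ax, b\rangle$, and the constant $\norm{Sb}_2^2 - \norm{b}_2^2$. The constant is handled by relative JL on the single vector $b$; the cross term is handled by preserving the $d$ inner products $\langle A_i, b\rangle$ up to additive $\eps_1$, giving a bound $\eps_1\norm{x}_1\norm{b}_2$. The data term is the crux: rather than a union bound over the $d+1$ columns (which controls the Gram matrix entrywise and forces an $\eps_1\norm{x}_1^2$ error), I would use the Gaussian matrix-deviation inequality over the set $A\cdot B_1^d$. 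Since its Gaussian width is $\mathbb{E}\norm{A^\top g}_\infty = O(\sqrt{\log d})$ (the $d$ columns have norm $\le 1$), a sketch with $m = O(\log(d/\delta)/\eps_1^2)$ rows yields $\big|\norm{SAx}_2 - \norm{Ax}_2\big| \le \eps_1\norm{x}_1$ uniformly; squaring gives $2\eps_1\norm{x}_1\norm{Ax}_2 + \eps_1^2\norm{x}_1^2$. Setting $\eps_1 = \Theta(\lambda\eps)$ (so $m = O(\log(d/\delta)/(\lambda\eps)^2)$) and using $\norm{x}_1 \le R = O(1/\lambda)$, I would split $\norm{Ax}_2 \le \norm{Ax-b}_2 + \norm{b}_2$, charge the residual part against $\eps\norm{Ax-b}_2^2$ by AM--GM, and absorb the remaining $O(\eps_1^2\norm{x}_1^2/\eps)$ and $\eps_1\norm{x}_1\norm{b}_2$ terms into the slack $\eps\lambda\norm{x}_1$.

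The main obstacle is exactly obtaining the \emph{relative} (cost-scaled) form of the isometry rather than a uniform additive one. The naive entrywise Gram bound on the data term scales as $\eps_1\norm{x}_1^2$ and ignores the cancellation that makes $\norm{Ax}_2$ small even when $\norm{x}_1$ is large; charging this against the LASSO cost would force $\eps_1 = \Theta(\lambda^2\eps)$ and an $m = \Theta(\log(d/\delta)/(\lambda^4\eps^2))$ bound, losing a factor $\lambda^2$. Recovering the advertised $\lambda^2$ hinges on two delicate points: using the small $\sqrt{\log d}$ Gaussian width of the $\ell_1$ ball to make the deviation of $\norm{SAx}_2$ scale with $\norm{x}_1$ (so the squared-norm error carries only one factor of $\norm{Ax}_2$), and charging the constant $\norm{Sb}_2^2 - \norm{b}_2^2$ correctly in the small-residual regime, where one must use $\norm{b}_2 \le \norm{Ax}_2 + \norm{Ax-b}_2$ to convert it into the cost terms rather than treat it as an unchargeable additive constant.
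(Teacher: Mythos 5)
Your proposal is correct and reaches the same sketching dimension $m = O(\log(d/\delta)/(\lambda^2\epsilon^2))$, but by a genuinely different route from the paper. The paper also starts by confining the minimizers to an $\ell_1$-ball, but with the $OPT$-dependent radius $2\cdot OPT/\lambda$ rather than your $O(1/\lambda)$; it then treats the residual $Ax-b$ as a single object, applies a convex-hull embedding result of Naor--Narayanan to the unit-$\ell_1$ slice (the same Gaussian-width-of-$2d$-points phenomenon you invoke via $\E\norm{A^\top g}_\infty = O(\sqrt{\log d})$) to get $|\norm{Sy}_2 - \norm{y}_2| \le \epsilon\lambda$ there, and rescales by the radius to obtain a uniform \emph{additive} error $O(\epsilon\cdot OPT)$ on the squared residual; its sketch-and-solve chain then only evaluates this at $\tilde x$ and $x^\star$, where all norms are $O(1)$, so no further charging is needed. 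You instead prove a pointwise \emph{relative} (cost-scaled) bound over the $OPT$-independent ball, which forces your three-way decomposition and the AM--GM accounting. That accounting does close: the term $2\lambda\epsilon\norm{x}_1\norm{Ax-b}_2$ is absorbed via $2ab \le ca^2 + b^2/c$ using $\lambda\norm{x}_1 = O(1)$, and the constant term $\lambda\epsilon\norm{b}_2^2$ --- the most delicate piece, as you correctly flag --- is chargeable because either $\norm{x}_1 \le \norm{b}_2/2$, in which case $\norm{Ax-b}_2 \ge \norm{b}_2 - \norm{Ax}_2 \ge \norm{b}_2/2$ so the cost is at least $\norm{b}_2^2/4$, or else $\lambda\norm{x}_1 \ge \lambda\norm{b}_2/2$ and then $\lambda\epsilon\norm{b}_2^2 \le 2\epsilon\norm{b}_2\cdot\lambda\norm{x}_1 \le 2\epsilon\,F(x)$ since $\norm{b}_2\le 1$. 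What your version buys is a two-sided multiplicative, affine-embedding-style guarantee over the entire feasible ball rather than additive error against $OPT$; what the paper's version buys is brevity, since the $OPT$-scaled radius lets the entire error be charged against $OPT$ in a single step.
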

Observe that our constraints on $\norm{A}_2,\norm{b}_2$ are necessary, since LASSO regression is \emph{not} scale invariant.
The result follows by first showing that the optimizer must have a bounded norm in terms of the optimal objective value $\norm{x}_1 \le \frac{2 \cdot OPT}{\lambda }$. This allows us to focus on the set $\mathcal{T} = \{y = Ax-b: \norm{x}_1 \le \frac{2 \cdot OPT}{\lambda }\}$, which by the bounded norm, can be expressed as the convex hull of $2d+1$ points in the unit ball. For this set, we can use an embedding result of \cite{NaN18} to obtain an additive error of $\epsilon \lambda$ for all vectors in the set $\mathcal{T}$, which allows us to relate the sketching error to $O(\eps \cdot OPT)$, and which finally yields our $1+\eps$ relative error approximation result.

We complement the upper bound by the following lower bound that matches the dependence on $d$ and $\lambda$. 
The proof builds on our new techniques developed for sparse $\ell_2$ regression. The condition of bounded norm inputs $A,b$, however, does not allow us to plant the additional row gadget. Thus, we need to choose a smaller $\lambda$ by a factor of $\epsilon$, which unfortunately cancels the $\epsilon$ dependence in the previous $\ell_2$ lower bound. Still, our result shows that $\log(\lambda d)/\lambda^2$ rows are necessary for any sketch with an estimator that allows to solve LASSO to within a $1+\eps$ approximation.
\begin{restatable}{theorem}{thmLASSOLower}
\label{thm_inf:LASSOLower}
	Let $A\in \R^{n\times d}, b\in \R^n$ with bounded $\norm{A}_2 \le 1$ and $\norm{b}_2 \le 1$, and let $\lambda \in (0,1)$. Suppose $S\in\R^{m\times n}$ is an oblivious linear sketch for LASSO regression with an estimator $E_\lambda(SA,Sb,x)$, such that with constant probability 
	$\tilde x \in \argmin_{x\in\Psi_k} E_\lambda(SA,Sb,x)$
	satisfies
	$\norm{A \tilde x - b}_2^2 + \lambda \norm{\tilde x}_1 \le (1+\epsilon) \cdot \min_{x \in \R^d} \norm{Ax - b}_2^2 + \lambda \norm{x}_1.$
	Then $m = \Omega(\frac{\log( \lambda d )}{\lambda^2 } ) .$
\end{restatable}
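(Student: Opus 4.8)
The plan is to reuse the three-part architecture of the sparse $\ell_2$ lower bound (\cref{thm_inf:LB_l2}): a hard distribution built from a balanced error-correcting code and the spiked Gaussian model $Z=[b,A]=G(I_d+vv^T)^{1/2}$; an information-theoretic obstruction to recovering (a constant fraction of) the planted support, via Fano's inequality together with the log-determinant bound on $I(U;X^m)$; and a reduction showing that a good regression solution reveals the support. Exactly as before, arguing against an arbitrary estimator $E_\lambda$ lets me assume w.l.o.g. that $S$ has orthonormal rows, so $SZ=H(I_d+vv^T)^{1/2}$ for a fresh $m$-row Gaussian $H$, and the information-theoretic step carries over. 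The only genuinely new work is the reduction and the calibration of the spike strength, both of which are dictated by the constraints $\norm{A}_2\le 1$ and $\norm{b}_2\le 1$.

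The central difficulty, and the main obstacle, is the reduction under bounded norm. In the $\ell_2$ proof the heavy row $[M,\dots,M,\sqrt{k}M]$ forced $\norm{x}_1\approx\sqrt{k}$ and thereby amplified a weak spike of strength $\eps$ into a $\Theta(\eps)$ \emph{relative} gap in the objective, which is what a $1+\eps$ approximation could detect. A large entry $M$ is incompatible with $\norm{A}_2\le 1$, so I would instead let the regularizer itself do the job: after rescaling $Z$ into the unit ball (which, since LASSO is not scale invariant, rescales the effective regularization parameter), the LASSO optimality/soft-thresholding conditions select a coordinate precisely when its correlation with the residual exceeds $\approx\lambda/2$. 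Calibrating the spike to strength $\theta=\Theta(\lambda)$, the planted columns have correlation $\approx\theta$ and are selected, while the $d$ non-planted columns have correlation $O(1/\sqrt{n})$ and are thresholded out with high probability; hence the support of the LASSO minimizer agrees with a constant fraction of the planted support. Establishing this selection robustly against the Gaussian fluctuations of all non-planted columns, simultaneously with the rescaling, is the delicate part.

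The price of dropping the gadget is that the objective gap between hitting and missing the support is now only $\Theta((\theta-\lambda/2)^2)=\Theta(\lambda^2)$, whereas $\opt=\Theta(1)$; a $1+\eps$ approximation therefore reads off the support only once $\eps\lesssim\lambda^2$. This is the source of the remark that the $\eps$-dependence is cancelled: without amplification the detectable accuracy scales like $\lambda^2$ rather than like $\lambda$, so the $1/\eps^2$ factor of the $\ell_2$ bound is subsumed into $1/\lambda^2$ and no explicit $\eps$ survives. Finally I would close exactly as in the information-theoretic step: with $\theta=\Theta(\lambda)$ the block structure of $\E[xx^T]$ and $\E[xx^T\mid U]$ gives per-row mutual information $O(\lambda^2)$, hence $I(U;X^m)=O(\lambda^2 m)$, while the (rescaled) code contributes $\log|\mathcal{C}|=\Theta(\log(\lambda d))$; plugging both into Fano's inequality yields $m=\Omega(\log(\lambda d)/\lambda^2)$ against constant failure probability. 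The one remaining bookkeeping item is to check that it is the code construction under the spike $\Theta(\lambda)$ that produces $\log(\lambda d)$ rather than $\log d$, and that the restriction is consistent with the balance and distance requirements inherited from \cref{thm_inf:LB_l2}.
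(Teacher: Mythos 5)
Your high-level architecture is the paper's: the same spiked Gaussian input $[b\ A]=\tfrac{1}{\sqrt n}G(I+zz^T)^{1/2}$ over a balanced code, the same orthonormal-rows reduction of $S$, and Fano via \cref{lem:mutInfo}. Your instinct to set the per-coordinate spike to $\Theta(\lambda)$ is also correct: the paper takes $z(i)=\epsilon_0/\sqrt k$ with $\epsilon_0=1/2$ and $\lambda=1/(2\sqrt k)$, i.e., $k=\Theta(1/\lambda^2)$. The gap is in your reduction. The paper never argues through KKT/soft-thresholding of the exact LASSO minimizer; it passes to the population objective via \cref{lem:Gaussian_concentration} (taking $n$ large), obtaining $L(x)=1+\|x\|_2^2+(1-\tfrac12 v^Tx)^2+\lambda\|x\|_1$, observes that $L$ is $1$-strongly convex because of the $\|x\|_2^2$ term, computes $x^*=v/5$ with $L(x^*)=1.95$, and concludes that any $(1+c_1)$-approximate minimizer has $\|\hat x-x^*\|_2^2\le 1.95\,c_1$ and hence overlaps $\supp(v)$ in a $19/20$ fraction. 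Your calibration of the objective gap is where things go wrong: you compute the cost of mis-selecting a \emph{single} coordinate, which is indeed $\Theta(\lambda^2)$, and conclude that the sketch must be accurate to $\epsilon\lesssim\lambda^2$. But approximate support recovery only requires a constant fraction of the $k=\Theta(1/\lambda^2)$ planted coordinates, and missing $\Omega(k)$ of them costs $\Omega(k\cdot\lambda^2)=\Omega(1)$ in the objective, since each missed coordinate contributes $(x^*_i)^2=\Theta(1/k)$ to $\|\hat x-x^*\|_2^2$. So the reduction works against a $(1+c_1)$-approximation for a universal constant $c_1$; your version proves a strictly weaker theorem (a lower bound only against sketches accurate to $1+O(\lambda^2)$), and your explanation for why the $\epsilon$-dependence vanishes is not the right mechanism. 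Your exact-minimizer selection argument would additionally have to contend with the correlations among planted columns induced by the spike and with finite-$n$ fluctuations, both of which the strong-convexity route sidesteps.

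Separately, your Fano bookkeeping contains two compensating errors. With per-coordinate spike $\Theta(\lambda)$ on a support of size $k=\Theta(1/\lambda^2)$, the total spike has constant norm, so \cref{lem:mutInfo} gives per-row mutual information $O(1)$, not $O(\lambda^2)$; and the code size is $\log|\mathcal C|=\Theta(k\log(d/k))=\Theta(\log(\lambda^2 d)/\lambda^2)$, not $\Theta(\log(\lambda d))$. The ratio happens to be the same, so the final bound is right, but neither intermediate claim is. The "$\lambda$ inside the logarithm" that you leave as an open bookkeeping item comes precisely from $\log(d/k)$ under the choice $k=\Theta(1/\lambda^2)$, which your writeup never fixes explicitly and which is the one parameter choice the whole argument hinges on.
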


\subsection{Separation of Sparse Recovery from Sparse Regression}
Here we give an upper bound of $m=O(k \log(d)/\eps + k\log(k/\eps))/\eps^2)$ that gets very close to the lower bound $m=\Omega(k\log(d/k)/\eps+k/\eps^2)$ of \cite{price20111+}. Surprisingly, this provides a separation between the $k$-sparse recovery problem and the $k$-sparse regression problem. Combined with our main $\Omega(k\log(d/k)/\eps^2)$ lower bound, it shows that $k$-sparse regression is \emph{strictly harder} to sketch than sparse recovery. To obtain the new upper bound, the issue is that we need to figure out which subset to use, but cannot afford to estimate every subset’s cost, even though for each subset we can get a good estimate. To cope with this problem we run in parallel a two-stage estimation procedure: one CountSketch that gives a rough estimate of the entries, which yields a superset $I$ of the $k$ coordinates of interest, i.e., $I\subseteq [d]$ of size $|I|=O(k/\eps)$; and another CountSketch that has a higher precision, but is required only to recover estimates for the relatively small number of $k$-subsets of the set $I$. Now, if we output the vector that is supported on the top $k$ entries in $I$ together with their estimates of the entries obtained from the high precision sketch, this gives the desired $1+\eps$ approximation for $k$-sparse recovery with fewer rows than necessary to solve the $k$-sparse regression problem.

\begin{restatable}{theorem}{thmUBsparserecovery}
\label{thm_inf:UB_sparse_recovery}
	On input $x\in \mathbb{R}^d$, the above sparse recovery scheme uses $O(k\log(k/\epsilon)/\epsilon^2 + k \log(d)/\epsilon)$ measurements and, with probability at least $1-1/\poly(d)-1/\poly(k/\epsilon)$, returns a $k$-sparse
	vector $\hat{x}\in\Psi_k$ satisfying 
	$\|x - \hat{x}\|_2^2 \leq (1+\epsilon) \min_{x_k\in\Psi_k}\|x-x_k\|_2^2.$
\end{restatable}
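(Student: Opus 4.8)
The plan is to analyze the two-stage CountSketch scheme sketched above, using the standard per-coordinate guarantee of CountSketch: a sketch with $B$ buckets and $R = O(\log(N/\delta))$ independent repetitions (hence $O(BR)$ measurements) produces estimates $\hat x$ such that, with probability $1-\delta$, every one of a prescribed set of $N$ coordinates $i$ satisfies $|\hat x_i - x_i| \le \frac{1}{\sqrt{B}}\|x_{\overline{[B]}}\|_2$, where $x_{\overline{[B]}}$ denotes $x$ with its $B$ largest-magnitude entries removed. Throughout, let $T$ be the optimal support (the top $k$ coordinates of $x$) and write $\opt = \|x - x_T\|_2^2$, so that $\|x_{\overline{[B]}}\|_2^2 \le \opt$ whenever $B \ge k$.

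First I would analyze Stage 1. Running CountSketch with $B_1 = O(k/\eps)$ buckets and $R_1 = O(\log d)$ repetitions --- a total of $O(k\log(d)/\eps)$ measurements --- yields estimates $\hat x^{(1)}$ that are simultaneously accurate for all $d$ coordinates up to additive error $\eta := \sqrt{\eps/k}\,\sqrt{\opt}$. Setting $I$ to be the $B_1$ coordinates of largest $|\hat x^{(1)}_i|$, I would show that every coordinate $i$ with $|x_i| \ge c\eta$ lands in $I$: such a coordinate has estimated magnitude at least $(c-1)\eta$, while the number of coordinates whose true magnitude exceeds $(c-2)\eta$ is at most $k + \opt/((c-2)\eta)^2 = O(k/\eps)$, which I can make at most $B_1$. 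Consequently the subset $T' = \{i \in T : |x_i| \ge c\eta\}$ lies in $I$, and since the at most $k$ discarded top-$k$ coordinates each contribute at most $c^2\eps\,\opt/k$, one gets $\min_{y \in \Psi_k,\ \supp(y)\subseteq I}\|x-y\|_2^2 \le (1+O(\eps))\opt$. Thus $I$ is a size-$O(k/\eps)$ superset that provably still contains a near-optimal $k$-sparse support.

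Next I would analyze Stage 2, whose role is to estimate the few coordinates of $I$ to much higher precision. The crucial observation is that the final swap argument needs precision $\gamma = O(\tfrac{\eps}{\sqrt{k}}\sqrt{\opt})$, a factor $\sqrt{\eps}$ finer than Stage 1. This forces $B_2 = O(k/\eps^2)$ buckets, but since the guarantee is only required for the $|I| = O(k/\eps)$ coordinates in $I$, it suffices to take $R_2 = O(\log(k/\eps))$ repetitions, for a total of $O(k\log(k/\eps)/\eps^2)$ measurements --- exactly where the two-stage split pays off, avoiding the $O(k\log(d)/\eps^2)$ cost of running the high-precision sketch over all $d$ coordinates. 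I output $\hat x$ supported on the top $k$ coordinates $\hat T \subseteq I$ of $|\hat x^{(2)}|$, with values $\hat x^{(2)}$. Decomposing $\|x-\hat x\|_2^2 = \sum_{i\in\hat T}(x_i - \hat x^{(2)}_i)^2 + \sum_{i\notin\hat T}x_i^2$, the head term is at most $k\gamma^2 = O(\eps\,\opt)$. For the tail term I would compare $\hat T$ to $T^*$, the top $k$ true coordinates of $I$ (for which $\sum_{i\notin T^*}x_i^2 \le (1+O(\eps))\opt$ by Stage 1), pairing $T^*\setminus\hat T$ with $\hat T\setminus T^*$ in decreasing order; since $\hat T$ is chosen by the accurate estimates, each paired $i \in T^*\setminus\hat T$ and $j\in\hat T\setminus T^*$ satisfy $|x_i|\le|x_j|+2\gamma$, so $\sum(x_i^2 - x_j^2) \le 4\gamma\sum_j|x_j| + 4k\gamma^2 \le 4\gamma\sqrt{k}\sqrt{\opt} + 4k\gamma^2 = O(\eps\,\opt)$.

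The main obstacle is precisely this last, support-swap step: one must quantify how its error scales, namely as $\gamma\sqrt{k}\sqrt{\opt}$, which reveals that Stage 2 must be a $\sqrt{\eps}$-factor more accurate than one might naively guess, and it is this finer precision --- applied only to the small superset $I$ --- that both creates the need for the two-stage design and produces the separation from sparse regression. The remaining bookkeeping is routine: the failure probability is the union of Stage 1's $1/\poly(d)$ (from $R_1 = O(\log d)$) and Stage 2's $1/\poly(k/\eps)$ (from $R_2 = O(\log(k/\eps))$), and rescaling $\eps$ by a constant absorbs the $O(\cdot)$ factors to yield the claimed $(1+\eps)$ guarantee.
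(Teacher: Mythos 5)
Your proposal is correct and follows essentially the same route as the paper's proof: a coarse CountSketch with $O(k/\eps)$ buckets and $O(\log d)$ tables to isolate a superset of size $O(k/\eps)$, a fine CountSketch with $O(k/\eps^2)$ buckets and $O(\log(k/\eps))$ tables restricted to that superset, and a support-exchange argument bounding the swap error by $O(\gamma\sqrt{k}\sqrt{\mathrm{OPT}})=O(\eps\cdot\mathrm{OPT})$. The only differences are presentational (you compare $\hat T$ to the top $k$ true coordinates of $I$ rather than to the global top $k$, and work in unnormalized units instead of fixing $\|x-x_k\|_2^2=k$), which change nothing substantive.
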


\section{Conclusion}
In this paper we study the complexity of oblivious linear sketching for sparse regression problems under various regression loss functions such as $\ell_p$ regression, logistic regression, ReLU loss, and hinge-like loss functions. Our results are \emph{essentially}\footnote{up to minor polylogarithmic terms and problem specific parameters such as $\mu$; see \cref{tab:results} for the exact terms.} tight bounds of $\Theta(k\log(d/k)/\eps^2)$ for all those problems. We further study the sketching complexity of LASSO, a popular convex relaxation often used as a heuristic for solving sparse linear regression. We give the first bound of $O(\log(d)/(\lambda\eps)^2)$ going below the linear dependence on $d$, where $\lambda$ is the regularization parameter.
Furthermore we provide a separation result from the sparse recovery problem studied in compressed sensing. Surprisingly, we find that the sparse regression problem requires $m=\Omega(k \log(d/k)/\eps^2)$ and is thus strictly harder to sketch than sparse recovery, for which we show a new $m=O(k \log(d)/\eps + k\log(k/\eps)/\eps^2)$ upper bound. We also show that while data dependent importance sampling techniques are widely successful for the unconstrained non-sparse regression problems, they do not give any non-trivial bounds in the sparse setting. This underlines the importance of oblivious sketching techniques in the sparse context.
For future directions we aim at closing remaining gaps, especially for hinge-like loss functions. It will also be an interesting avenue to develop more scalable and faster heuristics by incorporating our sketching techniques and evaluate their performance in practice. Finally, since our sketches are optimized for a smallest possible target dimension, it will be interesting to study the trade-off between the speed of applying them to data and an increase in their target dimension.

\section*{Acknowledgements}{Alexander Munteanu was supported by the German Research Foundation (DFG), Collaborative Research Center SFB 876, project C4 and by the Dortmund Data Science Center (DoDSc). D. P. Woodruff was supported in part by Office of Naval Research (ONR) grant N00014-18-1-2562 and a Simons Investigator Award. Cameron Musco was supported by an Adobe Research grant, a Google Research Scholar Award, and NSF Grants No. 2046235 and No. 1763618}

\clearpage
\printbibliography

\onecolumn
\appendix
\section{Lower bounds for \texorpdfstring{$k$}{k}-sparse regression}\label{sec:LB}
\subsection{Lower Bounds for the \texorpdfstring{$\ell_p$}{ℓₚ}-norm Loss Function for \texorpdfstring{$p\geq 1$}{p ≥ 1}}\label{sec:LB_lp}

In this section we prove our main \cref{thm_inf:LB_l2} on $\ell_2$ followed by our extension to $\ell_p, p\geq 1$, see \cref{thm_inf:LB_lp} below.
\thmLBltwo*
\noindent
The outline is as follows:
\begin{enumerate}
    \item We construct a suitable (hard) distribution over $k$-sparse supports, which is used to define our input distribution. 
    \item We prove the impossibility of recovering a constant fraction of the support with a small number of measurements (rows) from the input distribution below an information-theoretic lower bound.
    \item We construct an $\ell_2$-regression instance for which any $1 + \Theta(\eps)$ approximation derived from an oblivious sketch, paired with an arbitrary estimator, reveals a constant fraction of the support. The hardness result thus turns over to the regression problem.
\end{enumerate}

For the first proof step, we begin with the construction of an error correcting code, which will be used in the main argument to construct a hard input distribution.

\begin{definition}[Balanced and Correctable Support Set]\label{def:bal} Consider a set $\mathcal U \subset [d]^{k+1}$ of sets of $k+1$ indices, such that for all $U \in \mathcal U$, $1 \in U$.
$\mathcal U$ is said to be balanced if, letting $c_i = |\{U \in \mathcal U: i \in U\}|$, we have $c_i = c_j$ for all $i,j \in [d] \setminus \{1\}$, and further, letting $c_{ij} =  |\{U \in \mathcal U: i,j \in U\}|$, $c_{ij} = c_{kl}$ for all $i \neq j$ and $k \neq l$ with $i,j,k,l \in [d]\setminus \{1\}$. The set is said to be correctable if for all $U_1,U_2 \in \mathcal U$, $|U_1 \cap U_2| \le 9/10 \cdot k$.
\end{definition}

We prove the existence of a suitably large balanced and correctable support set as follows.
Suppose we choose $t$ sets $S_1, \ldots, S_t$, each of size $k$, uniformly at random, and each from $[n] = \{1, 2, \ldots, n\}$. Let $\mathcal{H}$ be a pairwise independent family of $n \cdot (n-1)$ hash functions $h:[n] \rightarrow [n]$; it is well-known that such a family exists when $n$ is prime \citep{carter1979universal}, which we can assume without loss of generality. For each $S_i$ and $h \in \mathcal{H}$, let $h(S_i)$ denote the image of $S_i$ under $h$. 

\begin{lemma}\label{lem:set_construction}
For any constant $0 < c < 1$, there exists a constant $C > 0$ and $t = \exp(C k \log(n/k))$ subsets $S_1, \ldots, S_t$, each of size $k$, such that $|h(S_i) \cap h'(S_j)| < c k$ for all $1 \leq i < j \leq t$ and all $h \neq h' \in \mathcal{H}$.
\end{lemma}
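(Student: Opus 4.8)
The plan is to use the probabilistic method: draw $S_1,\dots,S_t$ independently and uniformly at random among the $k$-subsets of $[n]$, with $t=\exp(Ck\log(n/k))$ for a small constant $C=C(c)$ to be fixed, and show that with positive probability all the required low-intersection conditions hold simultaneously. The first and most important step is to cut down the number of conditions we must control. Since every $h\in\mathcal H$ is an affine permutation of $[n]$ (as $n$ is prime and the maps have the form $x\mapsto ax+b$ with $a\neq 0$), applying the bijection $(h')^{-1}$ preserves cardinalities and gives $|h(S_i)\cap h'(S_j)|=|g(S_i)\cap S_j|$, where $g=(h')^{-1}\circ h$ is again an affine permutation, with $g\neq\mathrm{id}$ exactly when $h\neq h'$. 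Every non-identity affine $g$ arises this way, so the family of conditions indexed by the $\approx n^4$ ordered pairs $(h,h')$ collapses to the conditions $|g(S_i)\cap S_j|<ck$ indexed by the at most $n(n-1)<n^2$ non-identity affine permutations $g$. This reduction is what makes the union bound affordable.

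For a fixed triple $(i,j,g)$ with $i\neq j$, the set $g(S_i)$ is a uniformly random $k$-subset (the image of a uniform $k$-subset under a fixed bijection) and is independent of $S_j$, so $|g(S_i)\cap S_j|$ is the overlap of two independent random $k$-subsets, with mean $k^2/n$. I would bound its upper tail not by a Chernoff bound on the mean but by a union bound over which $ck$ elements realize the overlap: for any fixed set $T$ of size $ck$ one has $\pr[T\subseteq S_j]=\binom{n-ck}{k-ck}/\binom{n}{k}\le (k/n)^{ck}$, so
\[
\pr\big[\,|g(S_i)\cap S_j|\ge ck\,\big]\;\le\;\binom{k}{ck}\left(\frac{k}{n}\right)^{ck}\;\le\;\left(\frac{ek}{cn}\right)^{ck}\;=\;\exp\!\big(-ck\log(n/k)+O_c(k)\big).
\]
This is the clean $\exp(-\Theta_c(k\log(n/k)))$ shape needed to beat the number of events. (Including $g=\mathrm{id}$ in the same bound costs nothing and simultaneously yields $|S_i\cap S_j|<ck$, which is what feeds the correctable property of \cref{def:bal}.)

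Finally I would take the union bound over the $\binom{t}{2}\le t^2=\exp(2Ck\log(n/k))$ pairs $(i,j)$ and the at most $n^2=\exp(2\log n)$ affine maps $g$. Choosing $C$ small enough, say $C=c/4$, the total failure exponent is $2Ck\log(n/k)+2\log n-ck\log(n/k)+O_c(k)=-\tfrac{c}{2}k\log(n/k)+2\log n+O_c(k)$, and since $k\log(n/k)\ge\log n$ for $1\le k\le n/2$, this is negative whenever $k$ exceeds a suitable constant multiple of $\log n/\log(n/k)$, i.e. once $n/k$ and $k$ are bounded away from the trivial regime; then the desired $S_1,\dots,S_t$ exist with positive probability. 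The main obstacle is precisely the accounting in this last step: a naive treatment that keeps both hash functions free would force a $\exp(4\log n)$ factor that the per-event gain cannot absorb for small $k$ (indeed, for $k=1$ the condition is genuinely unsatisfiable), which is why collapsing $(h,h')$ to the single affine map $g$ in the first step is essential. The only remaining care is to check that the $O_c(k)$ slack coming from $\binom{k}{ck}$ and from $\log(e/c)$ is dominated, which holds as soon as $n/k$ is at least a constant depending on $c$.
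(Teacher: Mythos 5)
Your proposal is correct and follows essentially the same route as the paper: draw the $S_i$ as independent uniform $k$-subsets, bound the bad event for a fixed triple by $\binom{k}{ck}\binom{n-ck}{k-ck}/\binom{n}{k}\le 2^k(k/n)^{ck}=\exp(-\Theta_c(k\log(n/k)))$, and union bound over pairs of sets and hash functions. Your additional step of collapsing the pair $(h,h')$ to the single affine permutation $g=(h')^{-1}\circ h$ is a valid refinement but not, as you claim, essential: the paper union bounds directly over the $\approx n^4$ ordered hash pairs, and since $k\log(n/k)=\Omega(\log n)$ whenever $n/k$ exceeds a constant, that factor is absorbed in exactly the same regime (namely $k$ and $n/k$ above constants depending on $c$) that your own accounting already requires.
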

\begin{proof}
Note that for $i < j$, $h(S_i)$ and $h'(S_j)$ are each random and independent subsets of size $k$. To calculate their intersection size, we can fix $h(S_i)$. Then the probability that $|h(S_i) \cap h'(S_j)|$ is at least $ck$ is at most the probability that some subset of $h(S_i)$ of size $ck$ is also a subset of $h'(S_j)$. This probability is in turn bounded by
\begin{eqnarray*}
	{\binom{k}{ck}} \cdot \frac{{\binom{n-ck}{k-ck}}}{{\binom{n}{k}}} 
	& \leq & {\binom{k}{ck}} \cdot \frac{(n-ck)! \cdot k!}{(k-ck)! \cdot n!}\\
	& \leq & 2^k \cdot \left (\frac{k}{n} \right )^{ck}
	 \leq  \exp(-C k \log(n/k)),
\end{eqnarray*}
where $C > 0$ is a suitable constant. 
Consequently, by a union bound over all pairs $1 \leq i < j \leq t$ and all choices of $h \neq h' \in \mathcal{H}$, we conclude there
exists a choice of $t = \exp(C k \log(n/k))$ such sets, for a
different choice of constant $C > 0$. 
\end{proof}

The lemma above implies that if we have a set $T$ of size $k$ which intersects some $h(S_i)$ in at least $ck$ positions, then $S_i$ is uniquely determined.

Moreover, by pairwise independence of $\mathcal{H}$, for any $a \neq b \in [n],$ the number of sets in the union $\cup_{h \in \mathcal{H}, 1 \leq i \leq t} h(S_i)$ for which $a$ and $b$ occur together is the same. Also, the number of sets in $\cup_{h \in \mathcal{H}, 1 \leq i \leq t} h(S_i)$ containing any particular value $a \in [n]$ is the same as for any other particular value $b \in [n]$. 
Using this argument we construct the desired set by simply applying Lemma \ref{lem:set_construction} with $k$ and $n = d$ for indexing a family of sets over $\{2,\ldots,d+1\}$ and append the element $1$ to each of the sets. The hard instance will later be constructed from a uniform element of this set. This concludes the first part of our proof.

As a second step, we prove the impossibility of recovering a constant fraction of the support with a small number of measurements (rows) from the input distribution below an information-theoretic lower bound. To this end, we need to bound the mutual information first and then plug it into Fano's inequality.

\begin{lemma}[Mutual Information Bound]\label{lem:mutInfo}
Let $U$ be selected uniformly at random from a balanced and correctable support set $\mathcal{U}$ (Def. \ref{def:bal}) and for some $\epsilon > 0$, let $z \in \R^{d}$ have $z(1) = 1$, $z(i) = \epsilon/\sqrt{k}$ for all $i \in U \setminus \{1\}$ and $z(i) = 0$ for all $i \notin U$. Let $X \in \R^{n \times d}$ have rows drawn independently from a mean zero multivariate Gaussian distribution with covariance $I+zz^T$. Then:
\begin{align}\label{eq:logdet}
	I(U;X) \le 7n\cdot \epsilon^2.
\end{align}
\end{lemma}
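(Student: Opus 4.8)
The plan is to bound the mutual information through the identity $I(U;X) = H(X) - H(X\mid U)$, reducing both terms to per-row Gaussian entropies and hence to a difference of log-determinants of $d\times d$ covariance matrices. Conditioned on $U$, the $n$ rows of $X$ are i.i.d.\ samples from $N(0, I + zz^T)$, so in expectation over $U$ we have $H(X\mid U) = \frac{n}{2}\log\!\big((2\pi e)^d \det(I+zz^T)\big)$. Marginally, averaging over $U$, each row is a mixture of Gaussians with covariance $\bar\Sigma := I + \E_U[zz^T]$; by subadditivity of entropy across the $n$ rows and the maximum-entropy property of the Gaussian, $H(X) \le \frac{n}{2}\log\!\big((2\pi e)^d \det(\bar\Sigma)\big)$. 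Subtracting, the $(2\pi e)^d$ factors cancel and I am left with
\begin{equation*}
I(U;X) \;\le\; \frac{n}{2}\Big(\log\det\bar\Sigma \;-\; \E_U\big[\log\det(I+zz^T)\big]\Big).
\end{equation*}

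The conditional term is immediate: since $\det(I+zz^T)=1+\norm{z}_2^2$ for a rank-one update, and $\norm{z}_2^2 = 1 + k\cdot(\eps/\sqrt k)^2 = 1+\eps^2$ for \emph{every} $U$ (there are exactly $k$ nonzero coordinates of size $\eps/\sqrt k$ plus the planted entry $z(1)=1$), we get $\det(I+zz^T)=2+\eps^2$ identically. Hence $\E_U[\log\det(I+zz^T)]=\log(2+\eps^2)$, and all the content is in evaluating $\log\det\bar\Sigma$.

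Here I would invoke the balance property of $\mathcal U$ from \cref{def:bal}: counting shows it forces $\pr[i\in U]=k/(d-1)$ for each $i\ne 1$ and $\pr[i,j\in U]=k(k-1)/((d-1)(d-2))$ for each pair $i\ne j$ in $[d]\setminus\{1\}$, so $\E_U[zz^T]$ takes a highly structured form: a $2$ in the $(1,1)$ slot, a common value $\tfrac{\eps\sqrt{k}}{d-1}$ coupling coordinate $1$ to every other coordinate, and common diagonal and off-diagonal values on $\{2,\dots,d\}$. This matrix is invariant under every permutation of the coordinates $\{2,\dots,d\}$, so it block-diagonalizes: on the $(d-2)$-dimensional subspace of vectors vanishing on coordinate $1$ and summing to zero on $\{2,\dots,d\}$ it acts as the scalar $1 + \tfrac{\eps^2(d-k-1)}{(d-1)(d-2)}$, while on $\operatorname{span}\{e_1,\mathbf 1_{\{2,\dots,d\}}\}$ it reduces to a $2\times 2$ matrix. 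The key cancellation is that this symmetric block equals $\left(\begin{smallmatrix} 2 & \beta\\ \beta & 1+\beta^2\end{smallmatrix}\right)$ with $\beta^2 = \tfrac{\eps^2 k}{d-1}$ (the diagonal-plus-off-diagonal mass $a+(d-2)c$ collapses to exactly $\beta^2$), whose determinant is exactly $2+\beta^2$. Therefore
\begin{equation*}
\log\det\bar\Sigma \;=\; \log\!\Big(2 + \tfrac{\eps^2 k}{d-1}\Big) \;+\; (d-2)\log\!\Big(1 + \tfrac{\eps^2(d-k-1)}{(d-1)(d-2)}\Big).
\end{equation*}

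Finally I would combine the two pieces. Since $\tfrac{\eps^2 k}{d-1}\le \eps^2$ (as $k\le d-1$), the first term is at most $\log(2+\eps^2)$ and is cancelled by the conditional contribution; and using $\log(1+x)\le x$ the second term is at most $\tfrac{\eps^2(d-k-1)}{d-1}\le \eps^2$. Thus $\log\det\bar\Sigma - \log(2+\eps^2)\le \eps^2$, giving $I(U;X)\le \tfrac{n}{2}\eps^2$, comfortably inside the claimed $7n\eps^2$ (the slack in the stated constant presumably absorbs the approximate rather than exact code sizes mentioned informally earlier). The main obstacle, and the only place the structure of the hard instance enters, is this determinant computation: everything hinges on translating the combinatorial balance of $\mathcal U$ into the permutation symmetry that exposes the clean $2\times 2$ block and the cancellation $a+(d-2)c=\beta^2$, which is exactly what prevents the $d-1$ "cross" couplings between coordinate $1$ and the rest from injecting an unwanted $\Omega(1)$ contribution into $\log\det\bar\Sigma$.
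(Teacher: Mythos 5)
Your proposal is correct, and it follows the paper's high-level strategy exactly: bound $I(U;X) = H(X) - H(X\mid U)$ by subadditivity over the rows and the Gaussian maximum-entropy property, reducing everything to $\frac{n}{2}\big(\log\det(I+\E_U[zz^T]) - \log\det(I+zz^T)\big)$, with the conditional term equal to $\log(2+\eps^2)$ for every $U$ since $zz^T$ is rank one with trace $1+\eps^2$. Where you genuinely diverge is in bounding $\log\det(I+\E_U[zz^T])$. The paper splits $\E_U[zz^T]$ into a diagonal part $D$ plus a rank-$3$ perturbation $E$, bounds $\norm{E}_F \le \sqrt{3}\,\eps^2$, and applies Weyl's inequality eigenvalue by eigenvalue to get $\log\det(I+\E_U[zz^T]) \le \log 2 + 14\eps^2$, hence the stated $7n\eps^2$. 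You instead exploit the permutation symmetry on $\{2,\dots,d\}$ to diagonalize $I+\E_U[zz^T]$ exactly: a $(d-2)$-fold eigenvalue $1+\eps^2(d-k-1)/((d-1)(d-2))$ plus a $2\times 2$ block of determinant $2+\eps^2 k/(d-1)$, yielding a closed form and the sharper bound $I(U;X)\le \frac{n}{2}\eps^2$. Your identities check out ($\pr[i\in U]=k/(d-1)$, $\pr[i,j\in U]=k(k-1)/((d-1)(d-2))$, the collapse $a+(d-2)c=\beta^2$ with $\beta^2=\eps^2 k/(d-1)$, and the determinant $2+\beta^2$), so the argument is sound and in fact tighter; the only blemish is a notational slip where you attribute a ``$2$ in the $(1,1)$ slot'' to $\E_U[zz^T]$ rather than to $I+\E_U[zz^T]$, which affects nothing downstream. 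The trade-off: the paper's perturbative route is more robust to an imperfectly balanced $\mathcal{U}$ (it only needs $\norm{E}_F = O(\eps^2)$), while your exact diagonalization is shorter, avoids Weyl, and makes transparent why the $d-1$ cross-couplings with coordinate $1$ contribute nothing beyond what the conditional term already cancels.
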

\begin{proof}
Starting from the high level outline of \cite{AminiW2009}, since the rows of $X$ are independent, we can write:
\begin{align}\label{eq:mi}
	I(U;X) &= H(X)-H(X|U)\nonumber\\
	&\le n \cdot [H(x) - H(x|U)] = \frac{n}{2} \cdot \left [\log\det(\E[xx^T]) - \log\det(\E[xx^T | U])\right].
\end{align}
We now compute the needed log determinants.
First observe that $\E[xx^T | U] =  \E[I + zz^T | U] = I + \E[zz^T | U]$. We can observe that $\E[zz^T | U] = D MD$ where $M_{ij} = \epsilon^2/k$ for $i,j \in U$ and $M_{ij} = 0$ otherwise, and where $D$ is diagonal all with $D_{11} = \sqrt{k}/\epsilon$ and $D_{ii} = 1$ for all $i \neq 1$. Observe that $M$ is rank-$1$ and positive semidefinite. Thus, so is $DMD$. Thus $DMD$ has one non-zero eigenvalue, equal to its trace, which is $1+\epsilon^2/k \cdot k = 1+\epsilon^2$. Thus, $\E[xx^T | U] = I + DMD$ has one eigenvalue equal to $2 +\epsilon^2$ and $n-1$ eigenvalues equal to $1$, so 
\begin{align}\label{eq:logDetNoCond}
	\log\det(\E[xx^T | U]) = \log(2+\epsilon^2) \ge \log(2).
\end{align}

Next consider $\E[xx^T]$. Again we have $\E[xx^T] = \E[I + zz^T] = I + \E[zz^T]$. Since $\mathcal{U}$ is balanced, for all $i \in [d]\setminus \{1\}$, $z(i) = \epsilon/\sqrt{k}$ with probability $k/(d-1)$ and for $i,j \in [d]\setminus \{1\}$ with $i \neq j$, $z(i) = z(j) = \epsilon/\sqrt{k}$ with probability $\frac{k(k-1)}{(d-1)(d-2)}$. We can write $\E[zz^T] = D + E$. Here, $E_{11} = 0$, $E_{ij} = \frac{\epsilon^2}{k} \cdot \frac{k(k-1)}{(d-1)(d-2)} = \frac{\epsilon^2(k-1)}{(d-1)(d-2)}$ for $i, j \neq 1$, and $E_{i1} = E_{1i} = \frac{\epsilon}{\sqrt{k}} \cdot \frac{k}{d-1} = \frac{\epsilon \sqrt{k}}{d-1}$ for $i \neq 1$. $D_{11} = 1$, $D_{ii} = \frac{\epsilon^2}{d-1}- \frac{\epsilon^2}{k} \cdot \frac{k(k-1)}{(d-1)(d-2)} \le \frac{\epsilon^2}{d-1}$. 

Observe that $\norm{E}_F \le \sqrt{2(d-1)\cdot \frac{\epsilon^2 k}{(d-1)^2}+(d-1)^2 \cdot \frac{\epsilon^4(k-1)^2}{(d-1)^2(d-2)^2}} \le \sqrt{3} \cdot \epsilon^2$. Further, $E$ is rank-$3$ and thus has just $3$ non-zero eigenvalues. By Weyl's inequality, $$\lambda_1(\E[zz^T] ) = \lambda_1(D+E) \le \lambda_1(D)+\lambda_1(E) \le 1 + \sqrt{3} \epsilon^2.$$ For $i =2,3,4,5$, $$\lambda_i(\E[zz^T] ) = \lambda_i(D+E) \le \lambda_2(D) + \lambda_1(E) \le \frac{\epsilon^2}{d-1} + \sqrt{3} \epsilon^2 \le (\sqrt{3}+1) \epsilon^2.$$ Finally, for $i \ge 5$, $$\lambda_i(\E[zz^T] ) = \lambda_i(D+E) \le \lambda_2(D)+\lambda_4(E) = \frac{\epsilon^2}{d-1} + 0 = \frac{\epsilon^2}{d-1}.$$
Thus, 
\begin{align}\label{eq:logDetNoCond2}\log\det(\E[xx^T]) &= \log\det(I + \E[zz^T])\nonumber\\ 
	&\le \log(2+\sqrt{3} \epsilon^2) + 4 \log(1+(\sqrt{3}+1) \epsilon^2) + (d-5) \log \left (1+\frac{\epsilon^2}{d-1} \right )\nonumber\\
	&\le \log(2) + (5\sqrt{3}+4+1)\epsilon^2 \nonumber\\
	&\le \log(2)+14\epsilon^2.
\end{align}
Combined with \eqref{eq:logDetNoCond} we have $\log\det(\E[xx^T]) - \log\det(\E[xx^T | U]) \le 14 \epsilon^2$, and plugging back into \eqref{eq:mi}, we have $I(U;X)  \le 7n \cdot \epsilon^2$ as desired.
\end{proof}

Our mutual information bound can be plugged into Fano's inequality to obtain a lower bound on the sample complexity needed for an approximate, i.e., partial, recovery of the support set. This will later translate into the number of rows of our sketch.

\begin{corollary}[Sample Complexity Lower Bound]\label{cor:fano}
Let $U,X$ be distributed as in Lemma \ref{lem:mutInfo} with $n \le \frac{c k \log(d/k)}{\epsilon^2}$ for sufficiently small constant $c$. Then no algorithm that takes just $X$ as input can output a set $\tilde U$ with $|\tilde U| = k$ and $|\tilde U \cap U| > 19k/20$ with probability $\ge 2/3$.
\end{corollary}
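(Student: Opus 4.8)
The plan is to invoke Fano's inequality, but first to reduce the \emph{approximate} support recovery task to \emph{exact} recovery of $U$, so that the mutual information bound of \cref{lem:mutInfo} can be combined with the large code size underlying \cref{lem:set_construction}.

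First I would show that an approximate recoverer uniquely pins down $U$. Suppose some $\tilde U$ with $|\tilde U| = k$ satisfied $|\tilde U \cap U| > 19k/20$ for two distinct candidates $U_1,U_2 \in \mathcal{U}$. Then, by inclusion--exclusion inside $\tilde U$,
$$|U_1 \cap U_2| \ge |\tilde U \cap U_1 \cap U_2| \ge |\tilde U \cap U_1| + |\tilde U \cap U_2| - |\tilde U| > \tfrac{19k}{20} + \tfrac{19k}{20} - k = \tfrac{9k}{10},$$
contradicting the correctable property of \cref{def:bal}, which guarantees $|U_1 \cap U_2| \le 9k/10$. Hence at most one $U \in \mathcal{U}$ is consistent with any given $\tilde U$, and a decoder can map $\tilde U$ to this unique $U$ (failing arbitrarily if none exists). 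Composing the hypothetical approximate recoverer with this decoder yields an estimator $\hat U = \hat U(X)$ taking values in $\mathcal{U}$ that equals the true $U$ whenever approximate recovery succeeds; consequently $\pr[\hat U = U]$ is at least the success probability of the approximate recoverer.

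Next I would apply Fano's inequality to the Markov chain $U \to X \to \hat U$ (any internal randomness of the algorithm is independent of $U$, so the chain is preserved). Since $U$ is uniform on $\mathcal{U}$ we have $H(U) = \log|\mathcal{U}|$, and Fano combined with the data processing inequality gives
$$\pr[\hat U \ne U] \ge \frac{H(U) - I(U;X) - \log 2}{\log|\mathcal{U}|}.$$
I would then plug in $I(U;X) \le 7n\eps^2 \le 7c\,k\log(d/k)$ from \cref{lem:mutInfo}, using the hypothesis $n \le ck\log(d/k)/\eps^2$, together with $\log|\mathcal{U}| = \Theta(k\log(d/k))$ from the construction behind \cref{lem:set_construction}. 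Choosing the constant $c$ small enough that $7c$ is a small fraction of the code-size constant forces $I(U;X) \le \tfrac13 \log|\mathcal{U}|$, whence $\pr[\hat U \ne U] \ge 2/3 - o(1)$ for $d/k$ large. Therefore the approximate recoverer succeeds with probability at most $1/3 + o(1) < 2/3$, proving the claim.

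The main obstacle I anticipate is purely bookkeeping on constants: one must confirm that $\log|\mathcal{U}|$ is genuinely $\Omega(k\log(d/k))$ with an explicit constant, and that the additive $\log 2$ and the usual $\log(|\mathcal{U}|-1)$ denominator are negligible in the regime of interest, so that a single choice of the small constant $c$ keeps the Fano bound clear of the $2/3$ threshold. The combinatorial reduction itself is clean once correctability is in hand; the only subtlety is the strict-versus-nonstrict inequality, which is precisely why the overlap threshold $19k/20$ is paired against the correctability bound $9k/10 = 18k/20$.
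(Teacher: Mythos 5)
Your proposal is correct and follows essentially the same route as the paper: use the correctability bound $|U_1\cap U_2|\le 9k/10$ to show any $\tilde U$ with $|\tilde U\cap U|>19k/20$ determines $U$ uniquely (the paper counts the $<k/20$ elements of $\tilde U$ outside $U$ rather than using inclusion--exclusion, but it is the same argument), then combine Fano's inequality with the mutual information bound $I(U;X)\le 7n\eps^2$ from Lemma~\ref{lem:mutInfo} and $\log|\mathcal U|=\Theta(k\log(d/k))$ to force failure probability above $1/3$ for small enough $c$. No gaps.
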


\begin{proof}
Suppose such an algorithm existed. Since all $U,U' \in \mathcal{U}$ have $|U \cap U'| \le k \cdot 9/10$, if $|\tilde U \cap U| >19k/20$, then $\tilde U$ must contain $> k/20$ elements not in $U'$ for any $U' \in \mathcal U$ with $U' \neq U$. Thus, we must have $|\tilde U \cap U'| <19k/20$. So $\tilde U$ can be used to uniquely identify $U$. That is, the algorithm identifies $U$ with probability $\ge 2/3$. However, by Fano's inequality (Lemma \ref{lem:fano}), the algorithm fails with probability at least 
$$1- \frac{I(U;X)+ \log 2}{\log|\mathcal{U}|} \ge 1- \frac{7\epsilon^2 n+\log 2}{\log|\mathcal{U}|},$$
where the bound on $I(U;X)$ follows from Lemma \ref{lem:mutInfo}. Since $\log|\mathcal{U}| = \Theta(k \log(d/k))$ this failure probability is $> 1/3$ if $n = \frac{c k \log(d/k)}{\epsilon^2}$ for small enough $c$ and $d,k$  are bigger than large enough constants. This gives a contradiction to the assumption that the algorithm succeeds with probability $\ge 2/3$, and hence the corollary.
\end{proof}

This concludes the second part of our proof regarding the hardness of support recovery. For the third part, i.e., the reduction of this hard problem to sparse linear regression, we first need a few technical lemmas, before we can finally prove \cref{thm_inf:LB_l2}.

The first technical result establishes a connection between approximating a loss function $L$ to within $(1+O(\eps))$ error and revealing a constant fraction of the support. We note that $L$ will represent the regression cost in our subsequent reduction.
\begin{lemma}
	\label{lem:optimize_L}
	Let $v \in \R^d$ be a $k$-sparse vector with $k$ non-zero entries equal to $1/\sqrt{k}$. Let $x$ be another $k$-sparse vector. Let $M > \sqrt{n/(\epsilon k)}$, $\alpha = |\supp(v) \cap \supp(x)|$, and  
	\[
	L = 1 + \|{x}\|_2^2 + (1 - \epsilon {x}^T {v})^2 + \frac{M^2}{n} \left(\sum_i x(i) - \sqrt{k}\right)^2.
	\]
	There exists a constant $c$ such that any $x$ with $\alpha < 19k/20$ is not a $1 + c\epsilon$ approximation solution of $L$. 
\end{lemma}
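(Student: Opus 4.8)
The plan is to reduce the minimization of $L$ to a two-variable convex quadratic and read off the dependence on the overlap $\alpha$. Write $T = \supp(v)$, and for a $k$-sparse $x$ split its support into $S_1 = \supp(x)\cap T$ (of size $\alpha$) and $S_2 = \supp(x)\setminus T$ (of size at most $k-\alpha$). Let $A = \sum_{i\in S_1}x(i)$ and $B = \sum_{i\in S_2}x(i)$. Then $x^Tv = A/\sqrt k$ exactly, $\sum_i x(i) = A+B$, and applying Cauchy--Schwarz \emph{separately} to the two blocks gives $\|x\|_2^2 \ge A^2/\alpha + B^2/(k-\alpha)$. (Using a single global Cauchy--Schwarz bound $x^Tv \le \sqrt{\alpha/k}\,\|x\|_2$ would be too lossy, off by a factor $\sqrt{\alpha/k}$ versus $\alpha/k$; separating the blocks is essential.) Hence $L(x) \ge \Phi_\alpha(A,B)$, where $\Phi_\alpha(A,B) = 1 + A^2/\alpha + B^2/(k-\alpha) + (1-\epsilon A/\sqrt k)^2 + \rho(A+B-\sqrt k)^2$ and $\rho = M^2/n > 1/(\epsilon k)$. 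Dually, restricting to $x$ supported on $T$ with all entries equal shows the global optimum satisfies $L^* \le L^\dagger := \min_A \Phi_k(A,0)$, so, since $L^\dagger \ge L^*$, it suffices to prove $\min_{A,B}\Phi_\alpha > (1+c\epsilon)L^\dagger$.

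I would then minimize $\Phi_\alpha$ explicitly. Completing the square in $B$ (a ``two springs'' computation) replaces the $B$-terms by $(\sqrt k - A)^2/((k-\alpha)+\eta)$ with $\eta = 1/\rho < \epsilon k$; minimizing in $A$ then gives, to leading order, $A = \sqrt k \cdot \alpha/(k+\eta)$. The crucial structural point is that at this minimizer the two norm terms sum to $k/(k+\eta) = 1 - O(\eta/k)$ \emph{for every} $\alpha$: spreading the mass $\approx \sqrt k$ uniformly over all $k$ available coordinates keeps $\|x\|_2^2 \approx 1$ no matter how much of the support lies in $T$. The sole leading-order dependence on $\alpha$ enters through the inner product, since uniform spreading forces $x^Tv = A/\sqrt k \approx \alpha/k$, so $(1-\epsilon x^Tv)^2 \approx 1 - 2\epsilon\alpha/k$. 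Concentrating extra mass on $S_1$ to raise $x^Tv$ above $\alpha/k$ helps the inner-product term only at first order in an $O(\epsilon)$ shift, i.e.\ by $O(\epsilon^2)$, while costing an equal $O(\epsilon^2)$ in the norm, so it cannot alter the leading term.

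Carrying out the same computation at $\alpha = k$ yields $L^\dagger = 3 - 2\epsilon - \eta/k + O(\epsilon^2)$, while the general case gives $\min_{A,B}\Phi_\alpha = 3 - 2\epsilon\alpha/k - \eta/k + O(\epsilon^2)$; here both finite-penalty ``savings'' equal $\eta/k$ up to $O(\epsilon^2)$ (the residual discrepancy between $\eta/(k+\eta)$ and the $\alpha=k$ savings is $O(\eta^2/k^2)+O(\epsilon\eta/k) = O(\epsilon^2)$ because $\eta/k < \epsilon$). Subtracting, the $\eta/k$ terms cancel and $\min_{A,B}\Phi_\alpha - L^\dagger = 2\epsilon(1-\alpha/k) + O(\epsilon^2)$, which is at least $\epsilon/10 - O(\epsilon^2)$ whenever $\alpha < 19k/20$. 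I expect this cancellation to be the main technical nuisance: since $M$ is only assumed above the threshold $\sqrt{n/(\epsilon k)}$, the savings $\eta/k$ is itself merely $O(\epsilon)$, a priori the same order as the gap, so one must track it at both overlap levels and verify it is common rather than discard it.

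Finally I would chain the bounds: for any $x$ with $\alpha < 19k/20$, $L(x) \ge \min_{A,B}\Phi_\alpha \ge L^\dagger + \epsilon/10 - O(\epsilon^2) \ge L^* + \epsilon/10 - O(\epsilon^2)$. Since $L^* \le L(v) = 3 - 2\epsilon + \epsilon^2 \le 3$, this gives $L(x) \ge L^*\bigl(1 + (\epsilon/10 - O(\epsilon^2))/3\bigr)$, which exceeds $(1+c\epsilon)L^*$ for, say, $c = 1/40$ once $\epsilon$ is below an absolute constant. The hypothesis $M > \sqrt{n/(\epsilon k)}$ plays a dual role in this argument: it makes the penalty strong enough to force $\sum_i x(i)\approx\sqrt k$, ruling out degenerate small-norm solutions such as $x=0$ (whose overlap is $0$ and which must therefore fail the approximation bound), and it simultaneously guarantees $\eta/k < \epsilon$, which is precisely what keeps the penalty's residual effect identical across overlap levels so that it cancels from the approximation ratio.
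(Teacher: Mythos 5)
Your proposal is correct and follows essentially the same route as the paper's proof: both reduce $L$ to a two-parameter quadratic in the mass placed on $\supp(x)\cap\supp(v)$ versus off it (your $(A,B)$ is the paper's $(\gamma\beta,(1-\gamma)\beta)$, and your blockwise Cauchy--Schwarz is the paper's observation that the minimizer is constant on each block), minimize explicitly, and extract the leading term $3-2\epsilon\alpha/k$ before comparing overlap levels. If anything, you are slightly more careful than the paper in isolating the $\eta/k$ correction from the finite penalty $M$ and verifying that it cancels between the two overlap levels rather than being absorbed into an $\approx_\epsilon$.
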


\begin{proof}
	Let $S = \supp({x}) \cap \supp({v})$ and $M' = M/\sqrt{n}$. Let $\beta = \sum_i x(i)$ and $\gamma$ be such that $\sum_{i \in S} x(i) = \gamma \beta$. We will optimize $L$ over all possible values of $\beta$ and $\gamma$ in $\mathbb{R}$.
	Note that any $x$ minimizing $L$ must have the form 
	\[
	x(i) = 
	\begin{cases}
		\frac{\gamma \beta}{\alpha} &\text{ for } i \in S, \\
		\frac{(1-\gamma)\beta}{k - \alpha}  &\text{ for } i \in \supp(x) \setminus S.
	\end{cases} 
	\]
	This is because for fixed $S, \beta,\gamma$, making $x$ have the above form minimizes $\norm{x}$ without affecting $x^Tv$.
	Therefore, 
	\begin{align*}
		L &= 1 + \frac{\gamma^2 \beta^2}{\alpha} + \frac{(1-\gamma)^2\beta^2}{k - \alpha} + \left( 1 - \epsilon \alpha \frac{1}{\sqrt{k}} \frac{\gamma \beta}{\alpha} \right)^2 + M'^2 \left(\beta - \sqrt{k}\right)^2 \\
		&= \gamma^2 \left( \frac{\beta^2}{\alpha} + \frac{\beta^2}{k - \alpha} + \epsilon^2 \frac{\beta^2}{k}\right) 
		- 2 \gamma \left( \frac{\beta^2}{k-\alpha} + \epsilon \frac{\beta}{\sqrt{k}}\right) 
		+ 2 + \frac{\beta^2}{k - \alpha} + M'^2(\beta - \sqrt{k})^2.
	\end{align*}
	Minimizing over all $\gamma \in \mathbb{R}$ gives 
	
	\begin{align*}
		\min_{\gamma} L &= 2 + \frac{\beta^2}{k - \alpha} + M'^2(\beta - \sqrt{k})^2 - \frac{\left( \frac{\beta^2}{k-\alpha} + \epsilon \frac{\beta}{\sqrt{k}}\right)^2}{\frac{\beta^2}{\alpha} + \frac{\beta^2}{k - \alpha} + \epsilon^2 \frac{\beta^2}{k}} \\
		& = 2 + \frac{\beta^2}{k - \alpha} + M'^2(\beta - \sqrt{k})^2 - \frac{\frac{\beta^2}{(k -\alpha)^2}  + \frac{2\beta \epsilon}{(k -\alpha)\sqrt{k}} + \frac{ \epsilon^2}{k}}{\frac{k}{\alpha(k-\alpha)} + \frac{\epsilon^2}{k}} \\
		&= 2 + M'^2(\beta - \sqrt{k})^2 - \frac{\frac{\beta^2}{(k -\alpha)^2}\left( 1 - \frac{k}{\alpha}\right) + \frac{\beta^2 \epsilon^2}{k(k -\alpha)}  + \frac{2\beta\epsilon}{(k -\alpha)\sqrt{k}} + \frac{ \epsilon^2}{k}}{\frac{k}{\alpha(k-\alpha)} + \frac{\epsilon^2}{k}} \\
		&= 2 + M'^2(\beta - \sqrt{k})^2 + \frac{\beta^2k - \beta^2\epsilon^2 \alpha - 2 \beta \sqrt{k} \alpha  \epsilon - \epsilon^2(k-\alpha)\alpha}{k^2 + \epsilon^2(k-\alpha)\alpha} \\
		&\approx_{\epsilon} 2 + M'^2(\beta - \sqrt{k})^2 + \frac{\beta^2 k - 2 \beta \sqrt{k} \alpha \epsilon}{k^2} \\
		&= \beta^2\left( M'^2 +  \frac{1}{k}\right) - \beta \left( 2 M'^2 \sqrt{k} + \frac{2 \alpha \epsilon}{k \sqrt{k}} \right) + 2 + M'^2k. 
	\end{align*} 
	Minimizing over $\beta$ gives
	\begin{align*}
		\min_{\beta }\min_{\gamma} L &\approx_{\epsilon} 2 + M'^2k - \frac{\left( M'^2\sqrt{k} + \frac{\alpha \epsilon}{k \sqrt{k}}\right)^2}{M'^2 + \frac{1}{k}} \\
		&= 2 + \frac{M'^2 - \frac{2M'^2\alpha \epsilon}{k} - \frac{\alpha^2 \epsilon^2}{k^3}}{M'^2 + \frac{1}{k}} \\
		&\approx_{\epsilon} 2 + \frac{M'^2 - \frac{2M'^2\alpha \epsilon}{k}}{M'^2 + \frac{1}{k}}.
	\end{align*}
	Since $M' = M /\sqrt{n} > 1/\sqrt{\epsilon k}$, we have 
	\[
	\min_{\beta }\min_{\gamma} L \approx_{\epsilon} 3 - 2 \frac{\alpha}{k} \epsilon.
	\]
	We can observe that the RHS is minimized when $\alpha = k$ at $3 - 2\epsilon$. Moreover, if $\alpha < c_1 k$ for $c_1 < 1$, it is at least 
	$$
	3 - 2 c_1 \epsilon \geq (3-2\epsilon)\left(1 + \frac{2(1-c_1)}{3}\epsilon \right).
	$$
	Therefore, there exists a small enough constant $c$ such that if $\alpha < {19k}/{20}$, it is not possible to minimize $L$ within $1 + c \epsilon$ factor.
\end{proof}

The next ingredient will help us analyze the regression cost up to $1\pm\eps$ error deterministically by removing the influence of a random Gaussian matrix in our input distribution.
By a standard tail bound for Gaussian matrices (for example, Exercise 4.7.3 in \cite{vershynin2018high}) we have the following lemma.
\begin{lemma}
	\label{lem:Gaussian_concentration}
	Suppose $X$ is a $n \times d$ Gaussian matrix with covariance $\Sigma.$ Then there exists a constant $C$ such that we have with probability $\geq 1 - \delta$
	\[
	(1- \epsilon) \Sigma \preceq \frac{1}{n} X^T X \preceq (1+\epsilon) \Sigma,
	\]
	when $n \geq C \frac{d + \log(1/\delta)}{\epsilon^2}.$
\end{lemma}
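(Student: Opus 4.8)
The plan is to reduce the statement to the isotropic case $\Sigma = I$ and then prove that case by a standard net-plus-concentration argument. First I would write $X = G\Sigma^{1/2}$, where $G \in \R^{n\times d}$ has i.i.d.\ $N(0,1)$ entries, so that $\frac1n X^T X = \Sigma^{1/2}\left(\frac1n G^T G\right)\Sigma^{1/2}$. Assuming for the moment that $\Sigma$ is invertible, conjugating the desired inequality $(1-\eps)\Sigma \preceq \frac1n X^T X \preceq (1+\eps)\Sigma$ by $\Sigma^{-1/2}$ shows it is equivalent to $(1-\eps) I \preceq \frac1n G^T G \preceq (1+\eps) I$, i.e., to the operator-norm bound $\norm{\frac1n G^T G - I}_2 \le \eps$. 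If $\Sigma$ is singular, the rows of $X$ lie almost surely in the range of $\Sigma^{1/2}$, so both sides vanish on $\ker\Sigma$ and it suffices to run the same argument on the range of $\Sigma$, whose dimension is at most $d$; hence we may assume $\Sigma = I$ without loss of generality.

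To bound $\norm{\frac1n G^T G - I}_2 = \sup_{u\in S^{d-1}} \left| \frac1n \norm{Gu}_2^2 - 1\right|$, I would fix a $\frac14$-net $\calN$ of the unit sphere $S^{d-1}$, which can be taken of size $|\calN| \le 9^d$. For a fixed unit vector $u$, the entries $\langle g_i, u\rangle$ of $Gu$ (where $g_i$ denotes the $i$-th row of $G$) are i.i.d.\ $N(0,1)$, so $\norm{Gu}_2^2$ is a chi-squared random variable with $n$ degrees of freedom. The Laurent--Massart tail bound then gives $\pr\!\left[ \left|\tfrac1n\norm{Gu}_2^2 - 1\right| > \eps\right] \le 2\exp(-c n \eps^2)$ for an absolute constant $c$ and $\eps \in (0,1)$. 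A union bound over $\calN$ shows that, with probability at least $1 - 2\cdot 9^d \exp(-cn\eps^2)$, the deviation is at most $\eps$ simultaneously for every net point; this failure probability is at most $\delta$ precisely when $n \ge C(d + \log(1/\delta))/\eps^2$ for a suitable constant $C$.

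Finally I would pass from the net back to the full sphere. For any symmetric matrix $A$ and any $\frac14$-net $\calN$ one has $\norm{A}_2 \le 2\sup_{u\in\calN}|u^T A u|$, applied here to $A = \frac1n G^T G - I$; this upgrades the pointwise control on $\calN$ to $\norm{\frac1n G^TG - I}_2 \le 2\eps$, and rescaling $\eps$ by a constant absorbs the factor of $2$ into $C$. Translating back through the conjugation by $\Sigma^{1/2}$ yields the claimed PSD sandwich. The argument is entirely standard, so there is no genuine obstacle; the only points requiring care are the reduction to the isotropic case when $\Sigma$ is singular (handled by restricting to the range of $\Sigma$) and the bookkeeping of constants, so that the chi-squared tail, the net cardinality $9^d$, and the target failure probability $\delta$ combine into the stated threshold $n \ge C(d+\log(1/\delta))/\eps^2$.
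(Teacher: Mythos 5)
Your proof is correct, and it is exactly the standard net-plus-chi-squared argument that the paper delegates to its citation (Exercise 4.7.3 of Vershynin's book) — the paper gives no further proof of its own. The reduction to the isotropic case, the $9^d$ net with the factor-$2$ norm comparison, and the Laurent--Massart tail all combine correctly into the stated threshold $n \geq C(d+\log(1/\delta))/\epsilon^2$.
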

By invoking the above lemma to a fixed $k$-dimensional subspace and applying a union bound over $\binom{d}{k}$ $k$-dimensional subspaces, we have with probability at least $1 - \delta$
\[
\forall k\mbox{-sparse vectors } v\in \Psi_k\colon (1- \epsilon) v^T \Sigma v \preceq \frac{1}{n} v^T X^T X v \preceq (1+\epsilon) v^T\Sigma v,
\]
when $n \geq C' \frac{k \log(d/(k\delta))}{\epsilon^2}$ for some $C'>C$.

Using the previous results, we are now we are ready to give the main proof of \cref{thm_inf:LB_l2}, and hereby conclude the third and final part of our proof outline.

We construct an $\ell_2$-regression instance for which any $1 + \Theta(\eps)$ approximation derived from an oblivious sketch, paired with an arbitrary estimator, reveals a constant fraction of the support. The hardness result thus turns over to the regression problem.

\begin{proof}{(of \cref{thm_inf:LB_l2})}
	Let $X = \left[b\,\, A \right]$ be distributed as $G(I + zz^T)^{1/2}$, where $z$ is distributed as described in Lemma~\ref{lem:mutInfo}. 
	We will construct a $k$-sparse $\ell_2$ regression problem based on $X$ such that an $1+\epsilon$ approximation of the constructed problem allows us to recover a large enough fraction, i.e., greater than $19/20$, of the support of $z$. We can assume without loss of generality that the sketching matrix has orthonormal rows, and since $X$ is a Gaussian matrix, the sketch has rows sampled from the same distribution as the rows of $X$. This is so because we are proving lower bounds against any estimator on the sketch.   
 
    By Corollary~\ref{cor:fano}, if the number of samples is smaller than $ck \log(k/d) / \epsilon^2$, no algorithm can recover more than a $19/20$ fraction of the support of $z$ with probability larger than $2/3$. Therefore, we have a lower bound of $\Omega(k \log(d/k) / \epsilon^2)$ against a $1 + \epsilon$ approximation of $k$-sparse $\ell_2$ regression. 
	
	Consider the following $\ell_2$ sparse regression problem
	\[
	\min_{x \in \Psi_{k}}
	\left\lVert
	\left[
	\begin{array}{c}
		M \, M \, \ldots \, M \\
		A
	\end{array}
	\right] x 
	- 
	\left[
	\begin{array}{c}
		\sqrt{k} M \\
		b
	\end{array}
	\right]
	\right\rVert_2.
	\]
	We will let $M$ be a very large number, which enforces $\sum_i x(i)$ to be close to $\sqrt{k}$. The squared loss of the above regression problem is
	\[
	L = \| A {x} - b\|_2^2 + M^2 \left(\sum_i x(i) - \sqrt{k}\right)^2 = \|X \tilde{x}\|_2^2 + M^2 \left(\sum_i x(i) - \sqrt{k}\right)^2,
	\]
	where $\tilde{x} = (-1, {x})$. By matrix concentration in Lemma~\ref{lem:Gaussian_concentration}, for $n = \Omega(k\log(d/(k\delta)) /\epsilon^2)$, with probability at least $1 - \delta$,
	\[
	X^T X \approx_{\epsilon} n (I + zz^T).
	\]
	Recall that $z$ has $z(1) = 1$, $z(i) = \epsilon/\sqrt{k}$ for all $i \in U \setminus \{1\}$ and $z(i) = 0$ for all $i \notin U$. We have 
	
	\begin{align*}
		\frac{1}{n}\|X \tilde{x}\|_2^2 &\approx_{\eps} \| \tilde{x}\|_2^2 +  (\tilde{x}^T z)^2 
		= (1 + \|{x}\|_2^2) + (1 - \epsilon {x}^T {v})^2,
	\end{align*}
	where $v$ is such that $z = (1,v)$.
	
	Therefore, 
	\[
	\frac{L}{n} \approx_{\epsilon} 1 + \|{x}\|_2^2 + (1 - \epsilon {x}^T {v})^2 + \frac{M^2}{n} \left(\sum_i x(i) - \sqrt{k}\right)^2.
	\]
	Note that $x$ is a $1+\epsilon$ approximation of the $\ell_2$ loss iff  $x$ is an $1 + \Theta(\epsilon)$ approximation of the $\ell_2^2$ loss.
	
	Let  $\alpha = |\supp({x}) \cap \supp({v})|$. 
	By Lemma~\ref{lem:optimize_L}, there exists a constant $c$ such that if we can approximate $L$ within a factor of $1 + c \epsilon$, we must have $\alpha > 19k/20$. Rescaling $\epsilon$ and combining with Corollary~\ref{cor:fano} and a union bound (with $\delta$ set to a small constant) proves the theorem.
\end{proof}

Next, we extend our $\ell_2$ lower bound to $\ell_p$ for all $p\geq 1$.
\thmLBlp*
\begin{proof}{(of \cref{thm_inf:LB_lp})}
We reduce from the $\ell_2$ case by leveraging the fact that $\ell_2$ embeds obliviously up to $(1\pm\eps)$ distortion into $\ell_p$ for all $p\geq 1$ by Dvoretzky's theorem. Indeed, such an embedding can be constructed using a random mapping $G\in\R^{r \times n}$ whose entries are appropriately rescaled i.i.d. Gaussians. In particular $G$ is an oblivious linear map. The number of rows is $r=O(n\log(1/\eps)/\eps^2)$ for $1\leq p\leq 2$ and $r= n^{O(p)}$ for $p>2$; \cite[see][p. 30]{Matousek13}. We note, however, that the number of rows of $G$ does not matter in our context since it is reduced by an application of $S$ in what follows.

More precisely, we have for all $x\in\R^d$ that $(1-\eps) \|Ax-b\|_2 \leq \|GAx-Gb\|_p \leq (1+\eps) \|Ax-b\|_2$. Now suppose that $S$ is an oblivious sketching matrix for the $k$-sparse regression problem in $\ell_p$. Then we can find $\tilde x \in \argmin_{x\in\Psi_k} E_p(SGA,SGb,x)$. By definition it holds that $\|G(A\tilde x - b)\|_p\leq (1+\eps) \|G(Ax^*_G - b)\|_p$ where $x^*_G\in\argmin_{x\in\Psi_k}\|G(Ax-b)\|_p$. Also let $x^* \in \argmin_{x\in\Psi_k}\|Ax-b\|_2$ be the minimizer for the $\ell_2$ problem. Now it follows that
\begin{align*}
	\|A\tilde x - b\|_2 &\leq \|G(A\tilde x - b)\|_p / (1-\eps) \leq \|G(A x^*_G - b)\|_p (1+\eps) / (1-\eps) \\
	&\leq \|G(A x^* - b)\|_p (1+\eps) / (1-\eps) \leq \|A x^* - b\|_2 (1+\eps)^2 / (1-\eps) \\
	&\leq (1+7\eps) \|A x^* - b\|_2 ,
\end{align*}
which by rescaling $\eps$ means that $SG$ is an oblivious linear sketch for the $\ell_2$-norm problem with an $\ell_p$-norm minimization estimator and at most $1+\eps$ error. Using the $\ell_2$ lower bound given in \cref{thm_inf:LB_l2}, it follows that $SG$ and thus also $S$ has $m=\Omega(k \log(d/k) / \eps^2)$ rows.
\end{proof}

\subsection{Lower bounds for {ReLU} and hinge-like loss functions}
Here we give further reductions similar to \cref{thm_inf:LB_lp} in order to extend our main result to {ReLU} and hinge-like loss functions.

\thmLBrelu*
\begin{proof}
Note that $\|Ax-b\|_1 = \|Ax-b\|_{\relu} + \|-(Ax-b)\|_{\relu}, \; \forall x.$ Therefore, given $A, b$ we have $\|Ax-b\|_1=\|PA x-Pb\|_{\relu},$ where  
\[ P= \left[
\begin{array}{c}
	I_{n}\\
	-I_n 
\end{array}
\right]
\]
is a $2n \times n$ matrix.  Suppose $S\in\R^{m\times 2n}$ is an oblivious linear sketch for $k$-sparse $\relu$ regression with an estimator $E_{\relu}(SA,Sb,x)$.  This implies that  $\tilde x \in \argmin_{x\in\Psi_k} E_{\relu}(SPA,SPb,x)$ satisfies 
\[\|A\tilde x-b\|_1 = \|P(A\tilde x-b)\|_{\relu} \leq (1+\eps) \min_{x\in \Psi_k} \|P(Ax-b)\|_{\relu}=(1+\eps) \min_{x\in \Psi_k} \|Ax-b\|_1. 
\]
Therefore, $S\cdot P$ is an oblivious linear sketch for $\ell_1$ regression. By Theorem~\ref{thm_inf:LB_lp}, we have that $m=\Omega(k\log(d/k)/\eps^2).$
\end{proof}
\defhingelike*
The logistic loss function $\log (1 +e^{-x})$ and the hinge loss function $\max(0,1-x)$ are $(1,\ln(2), \ln(2))$ and $(1,1,1)$ hinge-like loss functions, respectively.

We will use the notation $a = (1\pm \epsilon)b$ to denote $(1-\epsilon)b \leq a \leq (1+\epsilon)b.$ We note that for $\epsilon>0$ if $\norm{x}_{\relu} >0,$ then there is a constant $c>0$ such that \[c\norm{x}_{\relu} = \norm{cx}_{\relu} = (1\pm \epsilon) \norm{cx}_{f}. \]

\thmLBhingelike*
\begin{proof}
Suppose $S\in\R^{m\times n}$ is an oblivious subspace embedding for $f$ with an estimator $E_f(SA,Sb,x)$. 

Let $x \in \Psi_k$ be some $k$-sparse vector. 
We will show that $S\in\R^{m\times n}$ is an oblivious subspace embedding for 
$\relu$ with the estimator 
$$ E_{\relu}(SA,Sb,x) = \lim_{c \rightarrow \infty} \frac{E_f(ScA,Scb,x)}{c}.$$

First, consider the case where $\|Ax-b\|_{\relu} > 0$. For $c$ large enough, we have 
\begin{align*}
	\|Ax-b\|_{\relu} &=  \frac{\|c(Ax-b)\|_{\relu}}{c} \\
	&= (1 \pm \epsilon) \frac{\|c(Ax-b)\|_f}{c}  \\
	&= (1 \pm 3\epsilon) \frac{E_f(ScA,Scb,x)}{c}. 
\end{align*}
Here, we assumed that $\epsilon$ is small enough so that $(1+\epsilon)^2 \leq 1 + 3 \epsilon$. 

Next we consider the case where $\|Ax-b\|_{\relu} = 0$. 

We have
\begin{align*}
	E_f(ScA,Scb,x) &\leq (1 + \epsilon) \|c(Ax-b)\|_f \\
	&\leq (1 + \epsilon) n a_1.
\end{align*}
Therefore, 
\begin{align*}
	\lim_{c \rightarrow \infty} \frac{E_f(ScA,Scb,x)}{c} &\leq  \lim_{c \rightarrow \infty} \frac{2 n a_1}{c}  = 0
\end{align*}
as desired.
\end{proof}

\subsection{Sampling fails for sparse regression}\label{sec:samp}

In this section we argue that sampling based algorithms, which are important tools in sketching for non-sparse regression \citep{DrineasMM06,DasguptaDHKM09,MunteanuSSW18,mai2021coresets}, do not give any non-trivial results in the sparse setting. Specifically, these algorithms cannot compress beyond what is possible in the non-sparse case -- roughly beyond the rank of the input matrix. This is the reason that our upper bounds all build on general linear sketches, rather than sampling.

\thmSamplingLower*
\begin{proof}
Let $k = 1$ and $A$ be the $n \times n$ identity matrix. Let $b$ be set to the $i^{th}$ standard basis vector with probability $1/n$. Note that $\displaystyle \min\nolimits_{1\text{-sparse } x \in \R^n} \norm{Ax-b} = 0$, so to achieve any bounded approximation factor, the algorithm must output $x$ with $0$ cost -- i.e., $x =b$. Any algorithm that accesses any $m < n/3$ rows of $A$ and $b$ will see only zero entries in $Sb$ with probability at least $2/3$. Let $\mathcal{X}_0$ be the distribution over outputs of the algorithm given that $Sb = 0$. The algorithm achieves a bounded approximation factor only if $x \sim \mathcal{X}_0$ satisfies $x = b$. This occurs with probability at most $\frac{3}{2n}$ since after seeing $Sb = 0$, any of the remaining $2/3 \cdot n$ possibilities for $b$ are equally likely. Thus, the algorithm succeeds with probability at most $1/3+3/(2n) < 1/2$ for $n > 9$.
\end{proof}
Note that the sampling matrix $S$ in Theorem \ref{thm_inf:samplingLower} may depend on $A$ but not on $b$. This is necessary. If the sampling matrix can depend on $b$, then, without bounded computation, a sampling algorithm can in theory compress the problem to $O(k/\epsilon)$ rows. In particular, it can simply solve for the optimal $\displaystyle x^* = \argmin_{k\text{-sparse } x \in \Psi_k} \norm{Ax-b}$ and only consider the $k$ columns of $A$ within the support of $x^*$. In e.g., the $\ell_2$ case, by simply applying variants of standard leverage score sampling, \citep{Woodruff14,chen2019active} to these columns, it can output a sampling matrix $S \in \R^{m \times n}$ for $m =O(k/\epsilon)$ with $\displaystyle\tilde x = \displaystyle\argmin_{k\text{-sparse } x \in \Psi_k} \norm{SAx-Sb}$ satisfying $\displaystyle \norm{A\tilde x - b} \le (1+\epsilon) \min\nolimits_{k\text{-sparse } x \in \Psi_k} \norm{Ax-b}$.

If the sampling matrix is required to preserve $\norm{Ax-b}$ for every $k$-sparse $x \in \Psi_k$, then even if a sampling algorithm can read $b$, it is easy to see that it must sample at least $n$ rows.  This is true even in the case that $b = 0$.
\thmSamplingLowerTwo*
\begin{proof}
Let $A$ be the $n\times n$ identity matrix. Then if we sample $m < n$ rows, we will have $\norm{SAx} = 0$ when $x$ is at least one of the standard basis vectors. This violates the approximation bound.
\end{proof}

\section{Upper bounds for \texorpdfstring{$k$}{k}-sparse regression}\label{sec:UB}
\subsection{Upper bounds for the \texorpdfstring{$\ell_p$}{ℓₚ}-norm loss function for \texorpdfstring{$p\in [1,2]$}{p ∈ [1,2]}}
We prove $k$-sparse affine embedding upper bounds and note that as a corollary we obtain the same bounds for minimization. We begin with $\ell_2$.
\thmUBltwo*
\begin{proof}{(of \cref{thm_inf:UB_l2})}
The upper bound is similar to the known constructions \citep{BaraniukDDW08} of RIP matrices via Johnson-Lindenstrauss embeddings \citep{JohnsL1984}, i.e., appropriately rescaled Gaussian matrices. The main difference is that the subspaces formed by any fixed $k$-sparse support of $x$ need not be orthogonal or aligned with the standard basis vectors. The first idea is that there are at most $\binom{d}{k} \leq (ed/k)^k$ different $k$-sparse supports and each of them corresponds to one choice of $k$ columns of $A$. Every such choice spans a $k$-dimensional linear subspace of dimension $\leq k$. By the subspace embedding construction in \citep{Sarlo2006}, every subspace formed by one choice of $k$ columns can be handled by embedding the points in a net of size $(3/\eps)^k$ covering the unit ball in the subspace. The remaining vectors can be related to the net points by triangle inequality and the embedding extends to vectors of arbitrary norm outside the unit sphere by linearity. A slightly more sophisticated argument in \cite[pp. 13]{Woodruff14} states that the net can be constructed with $\eps$ replaced by an absolute constant $\eps_0:=1/2$. So the total number of points to embed up to $(1\pm\eps)$ distortion is bounded by $|\calN|\leq (d/k)^k\cdot c^k$ for an absolute constant $c=3e/\eps_0$. The embedding can be accomplished via the Johnson-Lindenstrauss lemma followed by a union bound over $\calN$, which yields a matrix $S=\frac{1}{\sqrt{m}}G$ whose entries are scaled i.i.d. standard Gaussians $G_{ij}\sim N(0,1)$ with $m = O(\log(|\calN|)/\eps^2) = O(k\log(d/k)/\eps^2)$ rows.\footnote{We note that the same result can be achieved by random sign (Rademacher) matrices \citep{ClarkW09} which is more convenient in streaming and other space constrained settings.}
\end{proof}

We continue with $\ell_p, p\in [1,2)$. We note that the outline is similar to 
\cite[appendix F.1]{BackursIRW16} but our result is non-trivially adapted to the sparse setting and generalized to $\ell_p$.

\thmUBlonep*
\begin{proof}(of \cref{thm_inf:UB_l1_lp})
We choose $S \in \R^{m\times n}$ to be a matrix whose entries are i.i.d. $p$-stable random variables with scale parameter $\gamma_p = c^{1-1/p}$, where $c\approx 1.099055$; see below. We show that this matrix has the desired property. First note that $Ax-b = [A,b][x^T,-1]^T$, so we can simply assume that the input consist only of $A'=[A,b]$ and it suffices to show that $\|SA'x\|_{\rm med}=(1\pm\eps)\|A'x\|_p$ for all $x\in\Psi_{k'}\subseteq \R^{d+1}$ for $k'=k+1$. In what follows we re-substitute $A$ for $A'$ and $k$ for $k'$ for the sake of presentation.

Fix any $k$-sparse support indexed by $I\subseteq [d]$ with $|I|=k$. Let $A_I$ be the matrix whose $k$ columns are the columns $A_{*i}$ of $A$ such that $i\in I$. From a classic result of Auerbach \citep[cf.][]{Auerbach1930, DasguptaDHKM09} it follows that there exists a basis $L \in \R^{n\times k}$ for the $p$-normed subspace spanned by those columns \citep[][Lemma 2.22]{WangW22} that satisfies the following properties: the $\ell_p$ norm of each column $i\in[k]$ is exactly $\|L_{*i}\|_p = k^{1/q}$, and for all $x\in \R^k$ it holds that $\|Lx\|_p\geq\|x\|_{q}$, where $q=\infty$ for $p=1$ and $q=\frac{p}{p-1}$ for $p\in(1,2)$ denotes the \emph{dual} norm of $p$.

Further, by a property of the $p$-stable random variables we have that any entry $$(SL)_{ij}= \sum_{h=1}^n S_{ih}L_{hj} \sim C\cdot \|L_{*j}\|_p = C \cdot k^{1/q}$$ is again a scaled $p$-stable random variable; \citep[cf.][]{Indyk2006,Dytso18}. It follows that for any threshold $\tau$
the probability that any entry of $SL$ has absolute value larger than $\tau$ is bounded by $O((k^{1/q}/\tau)^p)=O(k/\tau^p)$, \citep[cf.][]{BednorzLM18}.

Setting $\tau = O((mk^2/\delta)^{1/p}) = \tilde{O}((k^3\log(d)/\delta)^{1/p})$, we have that all entries of $SL$ are simultaneously bounded by $\tau$ with probability $1-\delta/2$. Suppose this event is true. Then for all $x \in \Psi_k$ with the fixed support indexed by $I$, we have by Hoelder's inequality (applied only to the $k$-sparse support, since other terms are zero) and by the properties discussed above, that
\begin{align*}
	\normm{\infty}{SLx} &= \max_{i\in [m]} |S_{i*}Lx| \leq \max_{i\in [m]} \normm{\infty}{(SL)_{i*}} \normm{1}{x} \leq \tau k^{{1}-\frac{1}{q}} \normm{q}{x} \leq \tau {k}^{\frac{1}{p}} \normm{p}{Lx} \\
	&\leq \tilde{O}((k^{4}\log(d)/\delta)^{1/p}) \normm{p}{Lx}.
\end{align*}
Let $\tau' = \tilde{O}((k^{4}\log(d)/\delta)^{1/p}) = \tilde{O}(k^4\log(d)/\delta)$. We construct an $\frac{\eps}{\tau'}$-net $\calN_I^k$ in the $\ell_p$ norm for the unit $\ell_p$ ball intersect the subspace spanned by $L$. By linearity, the restriction to the unit ball is w.l.o.g. We repeat this construction for each $k$-sparse support and define our net to be the union over all supports, i.e., $\calN=\bigcup_{I\subseteq [d], |I|=k} \calN_I^k$. There are at most $\binom{d}{k}\leq (ed/k)^k$ different subspaces, each of which is covered by a net of size at most $|\calN_I^k|\leq (3\tau'/\eps)^k$ by the standard volume argument. Consequently for an absolute constant $c_1$ we have $|\calN|\leq (ed/k)^k\cdot (c_1 k^4\log(d)/(\eps\delta))^k = \exp( O( k\log(d/k) + k\log(k/(\eps\delta) ) ) )$.

Next, we investigate the cdf $F_p(x)$ of the random variable $|X|$, where $X$ follows a $p$-stable distribution. Except for the cases $p\in\{1,2\}$, which correspond to the Cauchy and Normal distribution, no analytic/closed form expression is known for the cdfs and pdfs. We thus take a detour and leverage the inversion theorem of L\'{e}vy \citep{levy1925,Masani1977} based on the characteristic function, for which a closed form expression is known \citep{Borak05,Dytso18}: $\phi_p(t)=\exp{-|\gamma_p t|^p}$, where $\gamma_p$ is the constant scale parameter defined above. More precisely, it holds that
\begin{align*}
	F_p(x)&=P(|X|\leq x)=P(X\leq x)-P(X\leq -x) \\
	&=\frac{1}{2\pi}\lim_{T\rightarrow \infty} \int^T_{-T} \frac{e^{itx} - e^{-itx}}{it}\phi_p(t)\,dt \\
	&=\frac{1}{2\pi}\lim_{T\rightarrow \infty} \int^T_{-T} \frac{e^{itx} - e^{-itx}}{it}e^{-|\gamma_p t|^p}\,dt \\
	&=\frac{1}{2\pi}\lim_{T\rightarrow \infty} \int^T_{-T} \frac{2i\sin(tx)}{it}e^{-|\gamma_p t|^p}\,dt \\
	&=\frac{1}{\pi}\lim_{T\rightarrow \infty} \int^T_{-T} \frac{\sin(tx)}{t}e^{-|\gamma_p t|^p}\,dt.
\end{align*}
It follows that
\begin{align}\label{eqn:median}
	F_p(1) &= \frac{1}{\pi}\lim_{T\rightarrow \infty} \int^T_{-T} \frac{\sin t}{t}e^{-|\gamma_p t|^p}\,dt  = \frac{1}{2}
\end{align}
It remains to show that the derivative of $F$ is bounded at $F(1)=\frac{1}{2}$. To this end we observe that 
\begin{align*}
	F_p'(x)&=\frac{1}{\pi}\lim_{T\rightarrow \infty} \int^T_{-T} \frac{\cos(tx)\cdot t}{t}\,e^{-|\gamma_p t|^p}\,dt \\
	&=\frac{1}{\pi}\lim_{T\rightarrow \infty} \int^T_{-T} {\cos(tx)}\,e^{-|\gamma_p t|^p}\,dt .
\end{align*}
Consequently,
\begin{align*}
	F_p'(1)&=\frac{1}{\pi}\lim_{T\rightarrow \infty} \int^T_{-T} {\cos(t)}\,e^{-|\gamma_p t|^p}\,dt . 
\end{align*}
Now by the symmetry of the integrand and monotonicity of the characteristic function w.r.t. the exponent $p$ we have 
\begin{align}\label{eqn:median_slope}
	\frac{1}{\pi} = \frac{1}{\pi}\lim_{T\rightarrow \infty} \int^T_{-T} {\cos(t)}\,e^{-|\gamma_1 t|}\,dt
	\leq F_p'(1)
	&\leq \frac{1}{\pi}\lim_{T\rightarrow \infty} \int^T_{-T} {\cos(t)}\,e^{-|\gamma_2 t|^2}\,dt \notag \\
	&\leq \frac{1}{\pi}\lim_{T\rightarrow \infty} \int^T_{-T} {\cos(t)}\,e^{-|t|^2}\,dt = \frac{1}{\pi^{1/2} e^{1/4}}
\end{align}
where in particular we note that $\gamma_1 = 1$, which is used in the lower bound, and $\gamma_2 \geq 1$ is used for the upper bound.

Generalizing \cite[Lemma 2]{Indyk2006}, it follows from \cref{eqn:median,eqn:median_slope} that if $F_p(z)\in [1/2-c\eps, 1/2+c\eps]$ for some absolute constant $c$ then $z\in [1-\eps, 1+\eps]$, which we will use in what follows. For any $x \in \Psi_k$, we say $SAx$ is \emph{good} if only a $\frac{1}{2}-c_2\eps$ fraction of coordinates in the sketch space are too large or too small, i.e.
\begin{align*}
	|\{i : |(SAx)_i| < (1-\eps) \normm{p}{Ax} \}| &\leq \left(\frac{1}{2} - c_2\eps \right)m\\
	|\{i : |(SAx)_i| > (1+\eps) \normm{p}{Ax} \}| &\leq \left(\frac{1}{2} - c_2\eps \right)m
\end{align*}
for some small constant $c_2$. If $SAx$ is \emph{good}, then for any $y$ with at most $c_2 \eps m$ coordinates larger than $\eps \normm{p}{Ax}$, we have
\begin{align}
	(1-2\eps) \normm{p}{Ax} \leq \normm{\rm med}{SAx + y} \leq
	(1+2\eps)\normm{p}{Ax}.\label{eq:medadjust}
\end{align}    

By the $p$-stability property, $(SAx)_i$ is a $p$-stable random variable with scale $\normm{p}{Ax}$, we
have that
\begin{align*}
	\Pr[|(SAx)_i| < (1-\eps)\normm{p}{Ax}] &< 1/2 - \Omega(\eps)\\
	\Pr[|(SAx)_i| > (1+\eps)\normm{p}{Ax}] &< 1/2 - \Omega(\eps).
\end{align*}
By a Chernoff bound, for sufficiently small $c_2$ we have that $SAx$ is \emph{good} with probability at least $1-\exp(-\Omega(\eps^2 m))$. For our choice of $m$, we can union bound to get that $SAx$ is \emph{good} simultaneously for all $x \in \calN$ with probability at least $1-\exp({-\Omega(\eps^2 m)})\cdot|\mathcal{N}| \geq 1-\delta^{\Omega(k)}$. Suppose this event is true.

Then every $y=Ax$ for $x\in \Psi_k$ with $\normm{p}{y} = 1$ can be expressed as $y=z + \eta$ where $z \in \calN$ and $\normm{p}{\eta} \leq \eps/\tau'$. We have that $Sz$ is \emph{good} and that $\normm{\infty}{S\eta} \leq \tau' \normm{p}{\eta} \leq \eps$. Hence by~\eqref{eq:medadjust},
\[
(1-2\eps) \normm{p}{z} \leq \normm{\rm med}{S(z + \eta)} \leq (1+2\eps)\normm{p}{z}.
\]
which implies
\[
(1-3\eps) \normm{p}{y} \leq \normm{\rm med}{Sy} \leq (1+3\eps)\normm{p}{y}.
\]
Since $S$ is linear, the restriction to $\normm{p}{y} = 1$ is not necessary. Rescaling $\eps$ concludes the proof.
\end{proof}

\subsection{Upper bounds for the {ReLU} loss function}

\paragraph{Notation} For a function $f: \R \rightarrow \R$ and a vector $y \in \mathbb{R}^n$, we let $f(y) \in \mathbb{R}^n$ denote the entry-wise application of $f$ to $y$. Let $y_i$ denote the $i^{th}$ entry of $y$. So $f(y)_i = f(y_i)$. Moreover, let $\norm{y}_f = \sum_{i=1}^n f(y_i)$.
Let $y^+$ and $y^-$ denote $y$ restricted to the set of positive and negative entries respectively. Finally, for $A \in \mathbb{R}^{n \times d}$ and $x \in \mathbb{R}^d$, let $\mu(A) = \sup_{x \ne 0} \frac{(Ax)^+}{(Ax)^-}$. When $A$ is clear from the context, we drop $A$ from the notation of $\mu$.

\thmUBrelufromlone*
\begin{proof}
By appending $b$ to $A$ and increasing $k$ by 1, if suffices to prove the statement of the theorem for the case $b=0$. Note that
\begin{align*} 
	\norm{Ax}_{\relu} = \norm{(Ax)^+}_1 =  
	\frac{\norm{Ax}_1 + \mathbf{1}^TAx}{2} 
\end{align*}
since $\norm{Ax}_1 = \norm{(Ax)^+}_1 + \norm{(Ax)^{-}}_1$ and $\mathbf{1}^TAx = \norm{(Ax)^+}_1 - \norm{(Ax)^-}_1$.
From \cref{thm_inf:UB_l1_lp}, there exists a sketch $S_{\ell_1}$ with $O\left( ({ k}/{\epsilon^2}) \log ({d}/{\epsilon \delta})\right)$ rows and an estimator $g_{\ell_1}(S_{\ell_1}A, x) = \norm{S_{\ell_1} A x}_{{\rm med}}$ such that, with probability at least $1-\delta$, 
\begin{equation} 
	\label{eqn:l1_approximation}
	\forall x\in\Psi_k: (1-\epsilon) \norm{Ax}_{1}\leq g_{\ell_1}(S_{\ell_1}A, x) \leq (1 + \epsilon) \norm{Ax}_{1}. 
\end{equation}
Moreover, $\mathbf{1}^TA$ can be computed exactly using a single row. Let $S$ be the sketch obtained by combining $S_{\ell_1}$ and the single row $\mathbf{1}^T$, and let
$$g_{\relu}(SA,x) \eqdef  \frac{g_{\ell_1}(S_{\ell_1}A,x)+\mathbf{1}^T Ax}{2} = \frac{\norm{S_{\ell_1} A x}_{{\rm med}}+\mathbf{1}^T Ax}{2}.$$
Now, \eqref{eqn:l1_approximation} implies
\begin{align}
	\label{eqn:relu_plus_epsilon_l1}
	\big| g_{\relu}(SA,x) - \norm{Ax}_{\relu} \big| 
	&= \left| \frac{g_{\ell_1}(S_{\ell_1}A,x)+\mathbf{1}^T Ax}{2} - \frac{\norm{Ax}_1 + \mathbf{1}^TAx}{2} \right| \nonumber \\
	&= \left| \frac{ g_{\ell_1}(S_{\ell_1}A,x) - \norm{Ax}_1 }{2} \right| \nonumber \\
	&\leq \frac{\epsilon}{2} \norm{Ax}_1.
\end{align}
By the definition of $\mu$,
\begin{align}
	\label{eqn:relate_relu_l1_by_mu}
	\mu + 1 \geq \frac{\norm{(Ax)^-}_1}{\norm{(Ax)^+}_1} + 1 = \frac{\norm{Ax}_1}{\norm{Ax}_{\relu}}.
\end{align}
From \eqref{eqn:relu_plus_epsilon_l1} and \eqref{eqn:relate_relu_l1_by_mu}, we have
\begin{align*}
	\big| g_{\relu}(SA,x) - \norm{Ax}_{\relu} \big| &\leq  \frac{\epsilon(\mu+1)}{2}\norm{Ax}_{\relu} \leq \epsilon \mu \norm{Ax}_{\relu},
\end{align*}
where the last inequality holds because $\mu \geq 1$.
The theorem follows by scaling $\epsilon$ by a factor of $1/\mu$.
\end{proof}

\subsection{Upper bounds for hinge-like loss functions}
First we give a lemma on hinge-like functions, which has a similar role to \eqref{eqn:relate_relu_l1_by_mu} for the $\relu$ function, and was proven in Corollary 9 of \cite{mai2021coresets}. 
\begin{lemma}
\label{lem:lowerbound_f_norm_by_epsilon_1_norm_plus_n}
Let $f$ be an $(L, a_1, a_2)$ hinge-like loss function, and let $C = 16 \max(1,L,a_1,1/a_2)^4$. Let $A\in \R^{n\times d}$. Then for any $x \in \R^d$,
\begin{align*}
	\norm{Ax}_f \geq \frac{n + \norm{Ax}_1}{C \mu}.
\end{align*}
\end{lemma}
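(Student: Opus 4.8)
The plan is to prove the bound for a fixed $y = Ax \in \R^n$; since the statement involves no $b$, it suffices to show $\norm{y}_f \ge (n + \norm{y}_1)/(C\mu)$. Write $P = \norm{y^+}_1 = \norm{y}_{\relu}$ and $N = \norm{y^-}_1$, and let $n_+, n_-$ count the nonnegative and negative coordinates of $y$, so that $\norm{y}_1 = P + N$ and $n = n_+ + n_-$. Set $c = \max(1, L, a_1, 1/a_2) \ge 1$, so $L \le c$, $a_1 \le c$, $a_2 \ge 1/c$, and $C = 16c^4$. The right-hand side has two parts, the mass $\norm{y}_1$ and the cardinality $n$; the mass is controlled by the same $\mu$-argument already used for \cref{thm_inf:UB_relu_from_l1}, while bounding $n$ is the genuinely new ingredient.

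First I would control the mass. The definition of $\mu$ gives $\norm{y}_1 = P + N \le (\mu+1)P \le 2\mu P$, exactly as in the ReLU proof. To bound $P = \norm{y}_{\relu}$ by the loss, note that on every nonnegative coordinate the hinge-like assumptions give $f(y_i) \ge \relu(y_i) - a_1 = y_i - a_1$ and $f(y_i) \ge a_2$. The second inequality yields $n_+ \le a_2^{-1}\sum_{y_i \ge 0} f(y_i)$, and hence $P = \sum_{y_i \ge 0} y_i \le \sum_{y_i\ge0} f(y_i) + a_1 n_+ \le (1 + a_1/a_2)\sum_{y_i\ge0}f(y_i) \le 2c^2 \norm{y}_f$, where the last step uses that the coordinates with $y_i < 0$ contribute nonnegatively to $\norm{y}_f$ (true for the concrete losses of interest such as logistic, hinge, and ReLU). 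Combining the two displays gives $\norm{y}_1 \le 4c^2\mu\,\norm{y}_f$.

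The crux is bounding the cardinality $n$. The nonnegative part is immediate: $n_+ \le a_2^{-1}\sum_{y_i\ge0}f(y_i) \le c\,\norm{y}_f$. For the negative coordinates, $\mu$ only controls the mass $N$ and says nothing about how many coordinates carry it, so I would split them at the threshold $\theta = a_2/(2L)$. A coordinate with $-\theta \le y_i < 0$ satisfies, by Lipschitzness and $f(0) \ge a_2$, that $f(y_i) \ge a_2 - L\theta = a_2/2 > 0$; hence there are at most $2a_2^{-1}\norm{y}_f \le 2c\,\norm{y}_f$ of them. A coordinate with $y_i < -\theta$ carries mass more than $\theta$, so there are fewer than $N/\theta = 2LN/a_2 \le 2c^2 N \le 2c^2\mu P \le 4c^4\mu\,\norm{y}_f$ of them, using $N \le \mu P$ and the mass bound above. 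Summing the three estimates gives $n \le 7c^4\mu\,\norm{y}_f$, and together with $\norm{y}_1 \le 4c^2\mu\,\norm{y}_f$ we obtain $n + \norm{y}_1 \le 11 c^4\mu\,\norm{y}_f \le C\mu\,\norm{y}_f$, as required.

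The main obstacle is precisely this count of negative coordinates: because $\mu$ is a statement about $\ell_1$ masses rather than cardinalities, $n_-$ cannot be bounded directly, and the magnitude split is exactly what converts the mass control $N \le \mu P$ into a cardinality bound, at the price of the extra powers of $c$ that produce the $c^4$ in $C$. A secondary point to track carefully is that the coordinates counted via Lipschitzness (those near $0$) are exactly the ones whose $f$-values are provably positive, and that the coordinates with $y_i < 0$ contribute nonnegatively to $\norm{y}_f$; this nonnegativity, which holds for all the concrete hinge-like losses, is what lets us repeatedly invoke $\sum_{y_i \ge 0} f(y_i) \le \norm{y}_f$ and its analogues without charging any coordinate's contribution twice.
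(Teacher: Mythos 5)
Your proof is correct and reaches the stated constant $C = 16c^4$ with room to spare. The ingredients coincide with the paper's (the bound $f \ge a_2$ on nonnegative coordinates, $|f-\relu|\le a_1$ for the large positive mass, Lipschitzness together with $f(0)\ge a_2$ near zero, and the $\mu$-inequality $\norm{(Ax)^-}_1 \le \mu\norm{(Ax)^+}_1$ to transfer negative mass to positive mass), but the logical structure of the cardinality bound differs. The paper first proves the mass bound $\norm{Ax}_f \ge \gamma\,\norm{Ax}_1/(\mu+1)$ with $\gamma=\min(a_2/(2a_1),1/2)$ by splitting the positive coordinates at the threshold $2a_1$, and then obtains the $n$-term by a dichotomy: either $\norm{Ax}_f \ge na_2/4$ already, or at least $n/2$ coordinates lie below $-a_2/(2L)$, which forces $\norm{(Ax)^-}_1 \ge na_2/(4L)$ and hence, via $\mu$, a large loss; the two lower bounds are then averaged. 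You instead partition the $n$ coordinates into three cells (nonnegative, negative above $-a_2/(2L)$, negative below $-a_2/(2L)$) and bound the cardinality of each cell directly --- the first two by the pointwise lower bounds on $f$, the third by dividing its $\ell_1$ mass by the threshold and then invoking $\mu$. This is a dual view of the same threshold $a_2/(2L)$ and is, if anything, more transparent, since it avoids the case split and the final averaging of two separate bounds. One caveat you correctly flag, and which is worth making explicit: both your argument and the paper's repeatedly discard the contribution of negative coordinates to $\norm{Ax}_f$ (e.g.\ in $\sum_{y_i\ge 0} f(y_i) \le \norm{y}_f$), which requires $f\ge 0$ on negative arguments; Definition \ref{def_inf:hinge_like} only guarantees $f(x) \ge \relu(x)-a_1 = -a_1$ there, so this is an implicit assumption shared by both proofs (and satisfied by the logistic, hinge, and ReLU losses).
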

\begin{proof}
We have
\begin{align}\label{eq:l1Bound}
	\norm{Ax}_f 
	&\ge \sum_{i: [Ax]_i \in [0,2a_1]} f(Ax)_i +  \sum_{i: [Ax]_i \ge 2a_1} f(Ax)_i\nonumber\\
	&\ge \sum_{i: [Ax]_i \in [0,2a_1]} a_2 +  \sum_{i: [Ax]_i \ge 2a_1} \relu(Ax)_i - a_1 \nonumber \\
	&\ge \min \left (\frac{a_2}{2a_1}, \frac{1}{2} \right )\cdot \norm{(Ax)^+}_1\nonumber\\
	&\ge \min \left (\frac{a_2}{2a_1}, \frac{1}{2} \right )\cdot \frac{\norm{Ax}_1}{\mu+1}, 
\end{align}
where the second inequality holds because $f$ is $(L, a_1, a_2)$ hinge-like and last inequality follows from \eqref{eqn:relate_relu_l1_by_mu} in \cref{thm_inf:UB_relu_from_l1}.

Let $\gamma \eqdef \min \left (\frac{a_2}{2a_1}, \frac{1}{2} \right )$. 
Now we claim that 
$$\norm{Ax}_f = \sum_{i=1}^n f(Ax)_i \ge \frac{n a_2 \gamma}{4 \mu \cdot \max(1,L)}.$$ 
If $\sum_{i=1}^n f(Ax)_i \ge \frac{n a_2}{4}$ then this holds immediately since $\mu(X) \ge 1$, $\max(1,L) \ge 1$ and $\gamma \le 1$. Otherwise, assume that $\sum_{i=1}^n f(Ax)_i \le \frac{n a_2}{4}.$ Since $f(z) \ge a_2$ for all $z \ge 0$ and since $f$ is $L$-Lipschitz, $f \left (z \right ) \ge \frac{a_2}{2}$ for all $z \ge -\frac{a_2}{2L}$. This implies that $Ax$ has at most  $\frac{na_2/4}{a_2/2} = \frac{n}{2}$ entries $\ge  -\frac{a_2}{2L}$. Thus, $Ax$ has at least $\frac{n}{2}$ entries $\le -\frac{a_2}{2L}$ and so  $\norm{(Ax)^-}_1 \ge \frac{n a_2}{4 L}$. Thus, by the definition of $\mu$ along with \eqref{eq:l1Bound},
\begin{align}\label{eq:cBound}
	\norm{Ax}_f \ge \gamma \cdot \norm{(Ax)^+}_1 \ge \frac{n a_2 \gamma}{4 \mu L } \ge \frac{n a_2 \gamma}{4 \mu \cdot \max(1,L)}.
\end{align}
Combining \eqref{eq:l1Bound} with \eqref{eq:cBound} gives that
\begin{align*}
	\norm{Ax}_f &\ge \frac{\gamma \cdot \norm{Ax}_1}{2\mu +2} + \frac{n a_2 \gamma}{8 \mu \cdot \max(1,L) } \\
	&\ge \left (\norm{Ax}_1 + n \right ) \cdot  \frac{\gamma \cdot \min(1,a_2)}{8 \mu \cdot \max(1,L)}
	\\
	&\ge \left (\norm{Ax}_1 + n \right ) \frac{1}{8 \mu \cdot \max(1,L) \cdot \max(1, 1/a_2) \cdot \max(2,2a_1/a_2)} \\
	&\ge \left (\norm{Ax}_1 + n \right ) \frac{1}{16 \mu \cdot \max(1,L,a_1,1/a_2)^4}. 
\end{align*}
Substituting $C = 16 \max(1,L,a_1,1/a_2)^4$ completes the proof. 
\end{proof}

From \cref{lem:lowerbound_f_norm_by_epsilon_1_norm_plus_n}, it suffices to approximate $\norm{Ax}_f$ within $O((\epsilon/\mu)(n + \norm{Ax}_1))$ to obtain a relative error guarantee. \cref{thm_inf:UB_relu_from_l1} provides a method to approximate $\norm{Ax}_{\relu}$ within $O((\epsilon/\mu)\norm{Ax}_1)$. In \cref{thm_inf:UB_hingelike_from_relu}, we will show that uniform sampling can approximate the difference between $\norm{Ax}_{f}$ and $\norm{Ax}_{\relu}$ within $O((\epsilon /\mu)n)$.

\thmUBhingelikefromrelu*
\begin{proof}
Again, we may assume that $b=0$. We have 
\begin{align*}
	\norm{Ax}_f = \sum_{i=1}^n f(Ax)_i = \sum_{i=1}^n \left( f(Ax)_i - \relu(Ax)_i \right) + \norm{Ax}_{\relu}.
\end{align*}
By \eqref{eqn:relu_plus_epsilon_l1} in the proof of Theorem~\ref{thm_inf:UB_relu_from_l1}, there exists a sketch $S_{\relu}$ with $m_1 = O\left({k}\log({d}/{\epsilon \delta})/{\epsilon^2} \right)$ rows and a function $g_{\relu}$ such that with probability at least $1 - \delta$,
\begin{align}
	\label{eqn:relu_l1_mu_approx}
	\big| g_{\relu}(S_{\relu}A,x) - \norm{Ax}_{\relu}\big| \leq \epsilon \norm{Ax}_1.
\end{align}

We give a sketch with $m_2$ rows to approximate $R(Ax) = \sum_{i=1}^n \left( f(Ax)_i - \relu(Ax)_i \right)$. Consider uniformly sampling the rows of $A$ with replacement. Let $S_u \in \mathbb{R}^{m_2 \times n}$ be the sketching matrix corresponding to uniformly sampling $m_2$ rows.
We will show that 
$$
R(S_u A x) = \frac{n}{m_2} \sum_{i=1}^{m_2} \left( f(S_u Ax)_i - \relu(S_u Ax)_i \right)
$$
can approximate $R(Ax)$ within error $O(\epsilon(n + \norm{Ax}_1))$ for a suitable value of $m_2$, i.e.,
\begin{align}
	\label{eqn:approx_R}
	\left| R(S_u A x)  - R(Ax) \right| \leq  O(\epsilon(n + \norm{Ax}_1)).
\end{align}

Fix any $k$-sparse support indexed by $I\subseteq [d]$ with $|I|=k$. We will show \eqref{eqn:approx_R}  for all $x$ in this fixed support and then union bound over all $\binom{d}{k}$ supports. Let $A_I$ be the matrix whose $k$ columns are the columns $A_{*i}$ of $A$ such that $i\in I$. We may assume w.l.o.g. that the columns of $A_I$ are orthonormal, since the set of vectors $\{A_I x \mid x \text{ has support } I \}$ remains unchanged by making them orthonormal. With the assumption, $\norm{A x}_2 = \norm{A_I x}_2 = \norm{x}_2$ for all $x$ having support $I$. We consider two cases.

\paragraph{Large Norm} In this case, we consider $x$ such that $\norm{Ax}_1 \geq n a_1/\epsilon$. Since $f$ is an $(L,a_1, a_2)$ hinge-like loss function, $|f(Ax)_i - \relu(Ax)_i| \leq a_1$ for all $1 \leq i \leq n$. Therefore, it holds that $\left| f(S_uAx)_i - \relu(S_uAx)_i \right| \leq a_1$ for all $1 \leq i \leq m_2$ as well. We have 
\begin{align}
	\label{eqn:large_norm}
	\big| R(S_u A x) - R(Ax) \big| &= \left| \frac{n}{m_2}\sum_{i=1}^{m_2} \left( f(S_u A x)_i - \relu(S_u A x)_i \right)  - \sum_{i=1}^n \left( f(Ax)_i - \relu(Ax)_i \right) \right| \nonumber \\
	&\leq \frac{n}{m_2} m_2 a_1  + n a_1 = 2na_1 \leq 2\epsilon \norm{Ax}_1.
\end{align}
\paragraph{Small Norm} Now we consider $x$ such that $\norm{Ax}_1 < n a_1/\epsilon$. This implies
$\norm{x}_2 = \norm{Ax}_2 \leq \norm{Ax}_1 < n a_1/\epsilon$.
We construct an 
$\frac{\epsilon}{L+1}$-net $\mathcal{N}_I^k$ in the $\ell_2$ norm for all $x$ in the $\ell_2$ ball 
$\mathcal{B}_I = \{x \text{ has support } I \mid \norm{x}_2 \leq n a_1/\epsilon \}$. By a standard volume argument, $\left| \mathcal{N}_I^k \right| < ({3  (L+1) n a_1}/{\epsilon^2})^k$. We next consider a fixed vector $x$ with support $I$ and then union bound over all vectors in the net $\mathcal{N}_I^k$.

Let $X_1, X_2, \ldots, X_{m_2}$ be random variables such that $X_i = f(S_u A x)_i - \relu(S_u Ax)_i$. Since $f$ is an $(L,a_1, a_2)$ hinge-like loss function, $|X_i| \leq a_1$ for all $i$. By Hoeffding's inequality, 
\begin{align}
	\label{eqn:hoeffding_uniform}
	\pr\left( \left| \frac{n\sum_{i=1}^{m_2} X_i}{m_2} - R(Ax) \right| \geq  {\epsilon n} \right) = 
	\pr\left(  \left| \frac{\sum_{i=1}^{m_2} X_i}{m_2} - \frac{R(Ax)}{n}  \right|  \geq  {\epsilon} \right) \leq 2 \exp{\left(-\frac{m_2 \epsilon^2}{4 a_1^2}\right)}.
\end{align}
Note that \eqref{eqn:hoeffding_uniform} holds for a fixed vector $x$. Letting
\begin{align*}
	m_2 = O\left( \frac{a_1^2}{\epsilon^2} \left( k \log \frac{d (L+1)n a_1}{\epsilon k}  + \log \frac{1}{\delta} \right) \right)
\end{align*} 
and union bounding over at most $({3  (L+1) n a_1}/{\epsilon^2})^k$ points in $\mathcal{N}_I^k$, we have the probability that \eqref{eqn:hoeffding_uniform} holds for all points in $\mathcal{N}_I^k$ is 
$1 - \delta' $, where 
\begin{align*}
	\delta' = \left( \frac{3(L+1)n a_1}{\epsilon^2 } \right)^k \exp{\left(-\frac{m_2 \epsilon^2}{4 a_1^2}\right)} \leq \delta \left( \frac{k}{ed} \right)^k.
\end{align*}

Next we will show that,
$
\big| R(S_u A x') - R(A x') \big| \leq \epsilon n 
$
for all $x'$ in $\mathcal{N}_I^k$ implies 
\begin{align}
	\label{eqn:uniform_n_mu_approx}
	\big| R(S_u A x) - R(A x) \big| \leq 3 \epsilon n
\end{align}
for all $x \in B_I$.
Let $x' \in \mathcal{N}_I^k$ such that $\norm{x-x'}_2 \leq \frac{\epsilon}{L+1}$. 
By the triangle inequality, 
\begin{align*}
	\big| R(S_u A x) - R(Ax) \big| & \leq  \big| R(A x) - R(A x') \big| + \big| R(S_u A x) - R(S_u A x') \big| + \big| R(S_u A x') - R(Ax') \big| .
\end{align*}
Since $x'$ is in $\mathcal{N}$, the last term is at most $\epsilon n$. We will bound the other two terms using the fact that $f$ is $L$-Lipschitz and $\norm{x-x'}_2 \leq \frac{\epsilon}{ L+1}$. Applying the triangle inequality again, we have 
\begin{align*}
	\big| R(Ax) - R(Ax') \big| & \leq  \left| \sum_{i=1}^n \Big( \relu(Ax)_i - \relu(Ax')_i \Big) \right| + \left| \sum_{i=1}^n \Big( f(Ax)_i - f(Ax')_i \Big) \right| \\
	& \leq \norm{A(x-x')}_1 + L \norm{A(x-x')}_1 \\
	& \leq (1+L) \sqrt{n} \norm{x-x'}_2 \\
	& \leq \epsilon \sqrt{n} 
	.\end{align*}
Similarly, 
\begin{align*}
	~&~ \big| R(S_u A x) - R(S_u A x') \big| \\ & \leq  \left| \frac{n}{m_2}\sum_{i=1}^{m_2} \Big(  \relu(S_u A x)_i - \relu(S_u A x')_i \Big)  \right| + \left| \frac{n}{m_2} \sum_{i=1}^{m_2} \Big(  f(S_u A x)_i - f(S_u A x')_i \Big) \right| \\
	& \leq \frac{n}{m_2}  \norm{S_u A (x-x')}_{1} + \frac{n}{m_2}  L \norm{S_u A (x-x')}_{1} \\
	& \leq (L+1) \frac{n}{\sqrt{m_2}} \norm{S_u A(x-x')}_{2} \\
	& \leq (L+1) \frac{n}{\sqrt{m_2}} \norm{A(x-x')}_{2} \\
	& \leq (L+1) \frac{n}{\sqrt{m_2}} \norm{x-x'}_2 \\
	& \leq \epsilon  \frac{n}{\sqrt{m_2}} .
\end{align*}
This completes the proof of \eqref{eqn:uniform_n_mu_approx} and our argument for the small norm case. Combining \eqref{eqn:large_norm} and \eqref{eqn:uniform_n_mu_approx}, we have that for a fixed support $I$, with probability at least $1 - \delta \left( \frac{k}{ed} \right)^k$,
\begin{align*}
	\big| R(S_u A x) - R(A x) \big| \leq 3\epsilon n + 2 \epsilon \norm{Ax}_1.
\end{align*}
Union bounding over $\binom{d}{k}<(ed/k)^k$ $k$-sparse supports gives a success probability of at least $1-\delta$.

Let $S= \left[
\begin{array}{c}
	S_{\relu} \\
	S_u 
\end{array}
\right]$. Define 
\[
g_f(SA,x) \eqdef R(S_u A x) + g_{\relu}(S_{\relu}A,x).
\]
From \eqref{eqn:relu_l1_mu_approx}, with probability at least $1-2\delta$, 
\begin{align}
	\label{eqn:hinge_combined_1}
	\big| g_f(SA,x) - \norm{Ax}_f\big| & \leq \big| R(S_u A x) - R(A x) \big| + \big| g_{\relu}(S_{\relu}A,x) - \norm{Ax}_{\relu}\big| \nonumber \\
	& \leq 3\epsilon \left( n + \norm{Ax}_1\right)
\end{align}
By \cref{lem:lowerbound_f_norm_by_epsilon_1_norm_plus_n},
$
\norm{Ax}_f \geq {(n + \norm{Ax}_1})/{(C \mu)},
$
where $C= 16 \max(1,L,a_1,1/a_2)^4$. Combining this with \eqref{eqn:hinge_combined_1} gives
\begin{align*}
	\big| g_f(SA,x) - \norm{Ax}_f\big| \leq {3C\mu\epsilon}  \norm{Ax}_f.
\end{align*}
Finally, scaling $\epsilon$ by $O({1}/({C \mu}))$, the number of rows in $S$ is
\begin{align*}
	m_1 + m_2 &= O\left( \frac{C^2\mu^2k(1+a_1^2)}{\epsilon^2} \cdot \log \left( \frac{Cna_1(L+1)\mu d}{\epsilon \delta} \right) \right) \\
	&= O\left( \frac{c^{10}\mu^2k}{\epsilon^2} \cdot \log \left( \frac{cn\mu d}{\epsilon \delta} \right) \right)
\end{align*}
with $c = \max(1,L,a_1,1/a_2)$.
\end{proof}

For completeness we have the following simple result that yields a connection between minimizing in the sketch space using a $k$-sparse affine embedding -- as in all upper bounds above -- and the original problem.

\corfminimization*
\begin{proof}(of \cref{cor_inf:f_minimization})
Let ${x}^* \in \argmin_{x \in \Psi_k} \norm{Ax-b}$. Then
\begin{align*}
	\|A\tilde{x}-b\| &\leq E(SA,Sb,\tilde{x})/(1-\eps) \leq E(SA,Sb,{x}^*)/(1-\eps) \\
	&\leq \norm{Ax^*-b}(1-\eps)/(1-\eps) \leq (1+4\eps) \norm{Ax^*-b}.
\end{align*}
\end{proof}

\section{Oblivious sketching for LASSO regression}

LASSO regression is a convex relaxation of $k$-sparse $\ell_2$ regression and enjoys large popularity as a heuristic for inducing sparsity and feature selection \citep{Tibshirani96}. Here we give an upper bound for sketching that depends on an $\ell_1$ regularization parameter $\lambda$, and $\log(d)$.

\subsection{Upper bound}

\thmLASSOUpper*
Observe that our constraints on $\norm{A}_2,\norm{b}_2$ are necessary, since classic LASSO regression is \emph{not scale invariant}. If, for an arbitrarily large factor $\alpha$, we scale $A$ up by a factor of $\alpha$ then we can keep the same error $\norm{Ax-b}_2^2$ while scaling $x$ down by a factor of $1/\alpha$. Thus, the $\lambda \cdot \norm{x}_1$ term becomes negligible, and the problem reduces to ordinary least squares regression. Similarly, if we scale $b$ up by a factor $\alpha$, then the minimal achievable $\norm{Ax-b}_2^2$ scales by a factor $\alpha^2$, while the $x$ achieving this minimum has $\lambda \norm{x}_1$ scaled by just a factor $\alpha$. So again, as $\alpha$ grows arbitrarily large, the problem becomes ordinary least squares regression. It is known that sketching dimension $\Theta(d/\eps)$ is necessary and sufficient for ordinary least squares regression \citep{Sarlo2006,ClarkW09}. Thus, going beyond this requires bounding $\norm{A}_2,\norm{b}_2$ to ensure that the regularization $\lambda \norm{x}_1$ has a non-negligible effect.

Also note that we cannot hope to achieve an $o(d/\epsilon^2)$ bound for preserving the LASSO cost for all $x \in \R^d$, since for $x$ with large enough $\norm{x}_2^2$, and for $b = 0$, the problem becomes equivalent to preserving $\norm{Ax}_2^2$, which requires $\Theta(d/\epsilon^2)$ sketch size \citep{NelsonN14}.

\begin{proof}[Proof of Theorem \ref{thm_inf:LASSOUpper}]
Let $OPT = \min_{x \in \R^d} \norm{Ax-b}_2^2 + \lambda \norm{x}_1$. Observe that since $\norm{b}_2 \le 1$, $OPT \le \norm{b}_2^2 + \lambda \norm{0}_1 \le 1$.
Via the standard Johnson-Lindenstrauss lemma, we have that with high probability, for $x^* = \argmin_{x \in \R^d} \norm{Ax-b}_2^2 + \lambda \norm{x}_1$, $\norm{SAx^*-  Sb}_2^2 + \lambda \norm{x^*}_1 \le (1+\epsilon) \cdot OPT$.

Thus, we must have $\norm{\tilde x}_1 \le \frac{(1+\epsilon) \cdot OPT}{\lambda} \le \frac{2 \cdot OPT}{\lambda }$, as otherwise we would have $\norm{SA\tilde x-  Sb}_2^2 + \lambda \norm{\tilde x}_1 \ge\norm{SAx^* - Sb}_2^2 + \lambda \norm{x^*}_1$, contradicting the fact that $\tilde x$ is a minimizer for the sketched problem. For the same reason, we must have $\norm{SA\tilde x -Sb}_2 \le \norm{Sb}_2 \le 2.$ 

Let $\mathcal{T} = \{y = Ax-b: \norm{x}_1 \le \frac{2 \cdot OPT}{\lambda }\}$. Let $\mathcal{T}' = \{y = Ax-b: \norm{x}_1 = 1\}$. Observe that by our assumption that $\norm{A}_2\le 1$, each column $a_i$ of $A$ has $\norm{a_i}_2 \le 1$. Thus, $\mathcal{T}'$ is the convex hull of $2d + 1$ points in the unit ball: $a_1,-a_1,a_2,-a_2,\ldots a_d,-a_d,b$. By Corollary 3.2 of \cite{NaN18}, for $m = O \left (\frac{\log(d/\delta)}{\lambda^2 \cdot \epsilon^2}\right )$, with probability at least $1-\delta$, for all $y' \in \mathcal{T}'$,
\begin{align*}
	\left | \norm{Sy}_2 - \norm{y}_2 \right | \le \epsilon \lambda.
\end{align*}
Note that any $y \in \mathcal{T}$ can be written as $\alpha \cdot y'$ for $y' \in \mathcal{T'}$, for some $\alpha \le \frac{2 \cdot OPT}{\lambda}$. Thus, we have that with probability at least $1-\delta$, for all $y \in \mathcal{T}$,
\begin{align*}
	\left | \norm{Sy}_2 - \norm{y}_2 \right | \le 2\epsilon \cdot OPT.
\end{align*}

In particular, for $\tilde x = \argmin_{x \in \R^d} \norm{SAx-Sb}_2^2  + \lambda \norm{x}_1$, we have $A\tilde x-b \in \mathcal{T}$ so this gives
\begin{align}\label{lassAdd}
	\left | \norm{SAx-Sb}_2^2 - \norm{Ax-b}_2^2 \right | &\le \left | \norm{SAx-Sb}_2 - \norm{Ax-b}_2 \right | \cdot \left | \norm{SAx-Sb}_2 + \norm{Ax-b}_2 \right |\nonumber\\
	&\le 12\epsilon \cdot OPT,
\end{align}
where the second inequality uses that $\norm{SA \tilde x - Sb}_2 \le 2\cdot OPT \le 2$ and thus $\norm{A \tilde x - b}_2 \le (2+2\epsilon)\cdot OPT \le 4.$ Finally, using \eqref{lassAdd} we have:
\begin{align*}
	\norm{A\tilde x-b}_2^2 + \lambda \norm{\tilde x}_1 &\le \norm{SA\tilde x-Sb}_2^2 + \lambda \norm{\tilde x}_1 + 5 \epsilon \cdot OPT\\
	&\le \norm{SA\tilde x^*-Sb}_2^2 + \lambda \norm{x^*}_1 + 12 \epsilon \cdot OPT\\
	&\le (1+13\epsilon) \cdot [\norm{SA\tilde x^*-Sb}_2^2 + \lambda \norm{x^*}_1].
\end{align*}
This completes the theorem after adjusting $\epsilon$ by a constant factor.
\end{proof}

\subsection{Lower bound}
\thmLASSOLower*
\begin{proof}[Proof of \cref{thm_inf:LASSOLower}]
We will prove this similar to the lower bound in Theorem~\ref{thm_inf:LB_l2}. We will take $[b \, \, A] = X \sim \frac{1}{\sqrt{n}}G(I + zz^T)^{1/2},$ where $z$ is as in Corollary~\ref{cor:fano} with  $\epsilon = 1/2.$ Without loss of generality, we can assume that $S$ has orthonormal rows and so $SG(I + zz^T)^{1/2}$  has the same distribution as $G(I + zz^T)^{1/2}$ with fewer rows. We also let $\lambda = 1/(2\sqrt{k}).$ 
The normalization factor $\frac{1}{\sqrt{n}}$ ensures that norm condition $\|A\|_2, \|b\|_2 \leq 1$ holds with high probability. 
We then have with high probability
\[
\norm{A x - b}_2^2 + \lambda \norm{ x}_1 \approx 1 + \|x\|_2^2 + (1 - \epsilon v^T x)^2 + \lambda \|x\|_1 =: L(x).
\]
The approximation error (and the probability) above can be made arbitrarily small by taking $n$ to be sufficiently large.

We note that $L(x)$ is a $1$-strongly convex function, and so  \[
L(\hat{x}) \geq L(x^*) + \|\hat{x} - x^*\|^2_2.
\]
for any $\hat{x}$, where $x^*$ is the minimizer of $L(x).$
By a straightforward computation, we also get that $x^* = v/5$ and $L(x^*) = 1.95.$ Suppose that $L(\hat{x}) \leq (1+c_1) L(x^*)$ for a sufficiently small $c_1.$ Then we have 
\[
(1+c_1) L(x^*) \geq L(x^*) + \|\hat{x} - x^*\|^2_2.
\]
This implies that $\|\hat{x} - x^*\|^2_2 \leq c_1 L(x^*) \leq 1.95 \cdot c_1.  $ Therefore, by choosing $c_1$ to be a small enough constant, we can recover a $19/20$ fraction of  $\mbox{supp}(x^*)=\mbox{supp}(v).$ Corollary~\ref{cor:fano} then gives the required lower bound of $\Omega(k \log(d/k))$ on the size of the sketch, where $k$ is set to $1/(4\lambda^2)$ corresponding to our choice of $\lambda$.
\end{proof}

\section{Separating sparse recovery and sparse regression}
We now give a separation between the $k$-sparse recovery problem and $k$-sparse regression problems. Combined with our lower bounds for $k$-sparse regression, this shows that the $k$-sparse recovery problem is a strictly easier problem. Our upper bound below matches the lower bound for $k$-sparse outputs of \cite{price20111+} up to a $\log(k/\epsilon)$ factor, improving the na\"ive bound of $O(k (\log d) / \epsilon^2)$. 

In the $k$-sparse recovery problem, one seeks to sketch a vector $x \in \mathbb{R}^d$ so as to output a $k$-sparse $\hat{x} \in \mathbb{R}^d$ so
that 
\begin{eqnarray}\label{eqn:prop}
\|x - \hat{x}\|_2^2 \leq (1+\epsilon) \|x-x_k\|_2^2,
\end{eqnarray}
where $x_k$ consists of the top $k$ entries in magnitude of $x$, breaking ties arbitrarily. We
need the following theorem about CountSketch.

\begin{theorem}(\cite{CharikarCF04})\label{thm:countsketch}
There is a distribution on sketching matrices $S \in \mathbb{R}^{bt \times d}$, called Count\-Sketch matrices,
which is parameterized by the number $b$ of buckets and the number $t$ of tables. For a vector $x$, there is a procedure which, given $S \cdot x$ and a coordinate $i \in [d]$,  
returns an estimate $\hat{x}_i$ for which
$$|\hat{x}_i - x_i| \leq C \cdot \|x-x_k\|_2/\sqrt{b},$$
with failure probability at most $2^{-C't}$, 
where $C, C' > 0$ are absolute constants. 
\end{theorem}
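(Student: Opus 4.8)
The plan is to establish the standard per-coordinate guarantee for a single CountSketch table and then amplify the success probability to $2^{-C't}$ by taking the median across the $t$ independent tables. First I would fix the construction: each table $\ell \in [t]$ is equipped with an independently drawn pairwise-independent hash function $h_\ell : [d] \to [b]$ together with an independent pairwise-independent sign function $\sigma_\ell : [d] \to \{-1,+1\}$, and the sketch stores, for every bucket $j \in [b]$, the signed sum $C_\ell[j] = \sum_{i' : h_\ell(i')=j} \sigma_\ell(i') x_{i'}$. The natural per-table estimator for coordinate $i$ is $\hat x_i^{(\ell)} = \sigma_\ell(i)\, C_\ell[h_\ell(i)]$, and the procedure outputs $\hat x_i = \operatorname{median}_{\ell \in [t]} \hat x_i^{(\ell)}$.

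Next I would analyze a single table. Expanding the estimator gives $\hat x_i^{(\ell)} = x_i + Z_\ell$, where the error $Z_\ell = \sum_{i' \neq i} \mathbb{1}[h_\ell(i')=h_\ell(i)]\, \sigma_\ell(i)\sigma_\ell(i')\, x_{i'}$ collects the contributions of the coordinates colliding with $i$ in its bucket. The key is to charge this error to the tail $\norm{x-x_k}_2$ rather than to the full norm, so I would split $Z_\ell = Z_\ell^{\mathrm{head}} + Z_\ell^{\mathrm{tail}}$ according to whether the colliding coordinate lies in the head $H = \supp(x_k)$ (the $k$ largest-magnitude entries) or in the tail $T = [d] \setminus H$. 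By pairwise independence of $h_\ell$ and a union bound, $Z_\ell^{\mathrm{head}} \neq 0$ with probability at most $|H|/b = k/b$, which is a small constant once $b$ exceeds a sufficiently large multiple of $k$. For the tail part, conditioning on $h_\ell$ and using $\sigma_\ell(i)^2 = 1$ together with pairwise independence of the signs kills all cross terms, so that $\E[(Z_\ell^{\mathrm{tail}})^2] = \frac{1}{b}\sum_{i'\in T} x_{i'}^2 = \frac{\norm{x-x_k}_2^2}{b}$; a Markov bound on $(Z_\ell^{\mathrm{tail}})^2$ then yields $|Z_\ell^{\mathrm{tail}}| \le C\norm{x-x_k}_2/\sqrt b$ except with small constant probability. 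A union bound over these two events shows that a single table satisfies $|\hat x_i^{(\ell)} - x_i| \le C\norm{x-x_k}_2/\sqrt b$ with probability bounded away from $1/2$, say at least $3/4$.

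I would then boost this to the claimed failure probability. Since the $t$ tables are built from independent hash and sign functions, their success indicators are independent Bernoulli variables each with probability at least $3/4$. If strictly more than half of the per-table estimates lie in the interval $[x_i - \Delta, x_i + \Delta]$ with $\Delta = C\norm{x-x_k}_2/\sqrt b$, then the median lies in this interval as well, since fewer than half of the values can sit strictly above it and fewer than half strictly below it. A Chernoff (Hoeffding) bound on the number of successful tables shows that the event ``at most half the tables succeed'' has probability at most $2^{-C' t}$ for an absolute constant $C' > 0$, which gives the stated guarantee.

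The main obstacle is the single-table step, specifically replacing the full norm $\norm{x}_2$ by the tail norm $\norm{x-x_k}_2$: a direct second-moment calculation only produces $\norm{x}_2^2/b$, and it is precisely the head/tail splitting — together with the observation that the $k$ heavy coordinates avoid $i$'s bucket except with probability $k/b$ — that recovers the tail. This is also where the construction implicitly requires $b$ to be at least a constant multiple of $k$, so that $k/b$ is a small constant; I would state this regime explicitly, noting that it is exactly the regime in which the downstream recovery scheme invokes the theorem. The remaining pieces — the variance computation under pairwise-independent signs and hashes, and the median/Chernoff amplification — are routine.
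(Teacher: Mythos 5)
This theorem is not proved in the paper at all: it is imported verbatim by citation from \cite{CharikarCF04} and used as a black box in the sparse-recovery scheme, so there is no internal proof to compare yours against. Your argument is the standard CountSketch point-query analysis, and it is correct: the decomposition $\hat x_i^{(\ell)} = x_i + Z_\ell^{\mathrm{head}} + Z_\ell^{\mathrm{tail}}$, the collision bound $\Pr[Z_\ell^{\mathrm{head}} \neq 0] \le k/b$ from pairwise independence of the hash, the second-moment bound $\E[(Z_\ell^{\mathrm{tail}})^2] \le \|x-x_k\|_2^2/b$ (pairwise independence of the signs, independent of the hash, kills the cross terms), Markov's inequality, and the median-of-$t$-tables amplification via Chernoff are exactly the ingredients of the original argument, and each step is carried out correctly -- including the observation that a median lies in any interval containing strictly more than half of the values.

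The one point of genuine substance, which you caught, is that the theorem as literally stated is false without a relation between $b$ and $k$: if $x$ has $2b$ equal nonzero entries and $k = 2b$, then $\|x-x_k\|_2 = 0$ and the guarantee would demand exact recovery, which is impossible since collisions are unavoidable. Your head/tail split only delivers the stated bound when $b$ exceeds a sufficiently large constant multiple of $k$, so that $k/b$ is a small constant, and your remark that this is precisely the regime in which the paper invokes the theorem (the two invocations use $b = O(k/\epsilon)$ and $b = O(k/\epsilon^2)$ buckets, both at least $k$ up to the constant in the big-Oh) is the right way to reconcile your proof with the statement. Stating that hypothesis explicitly, as you propose, makes the theorem correct as proved; nothing else in your write-up needs repair.
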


\noindent
Consider the following procedure:
\begin{enumerate}
\item Run CountSketch with $b = O(k/\epsilon)$ buckets and $t = O(\log d)$ tables, and let $A$ be the set of indices $i \in \{1, 2, \ldots, d\}$ for which the corresponding estimates $\hat{x}_i'$ are among the largest $O(k/\epsilon)$ in magnitude. Here 
we use $\hat{x}_i'$ to denote the estimates returned by Theorem \ref{thm:countsketch} to distinguish them from the estimates returned in the next step. 
\item In parallel, run CountSketch with $b = O(k/\epsilon^2)$ buckets and $t = O(\log (k/\epsilon))$ tables, and compute estimates $\hat{x}_i$ for each $i \in A$. Let $B$ be the set of the top $k$ magnitude estimates $\hat{x}_i$, restricted to $i \in A$. 
\item Return the vector $\hat{x}$ supported on $B$ with corresponding estimates $\hat{x}_i$ for each $i \in B$.   
\end{enumerate}

\thmUBsparserecovery*
\begin{proof}
Since the sketch $S$ is linear, we can assume,  w.l.o.g., that $\|x - x_k \|_2^2 = k$.  Then by Theorem \ref{thm:countsketch}, with probability $1-1/\poly(d)$, the CountSketch  in Step 1 returns estimates $\hat{x}_i'$ satisfying
$$|\hat{x}_i' - x_i| \leq \sqrt{\epsilon}$$
simultaneously for all $i = 1, 2, \ldots, d$. It follows that for every $i$ for which $|x_i| \geq 3 \sqrt{\epsilon}$, we have that $i \in A$. To see this, note that if $|x_i| \geq 3 \sqrt{\epsilon}$, then $|\hat{x}_i'| \geq 2 \sqrt{\epsilon}$. On the other hand, if $|\hat{x}_j'| \geq 2 \sqrt{\epsilon}$ for some $j \in [d]$, then $|x_j| \geq \sqrt{\epsilon}$. The number of $j$ for which $|x_j| \geq \sqrt{\epsilon}$ is at most $k + k/\epsilon$, given that $\|x-x_k\|_2^2 = k$. Also, the number of $i$ for which $|\hat{x}_i| \geq 2 \sqrt{\epsilon}$ is at most $k + k/(4 \epsilon)$. So if $|A| = O(k/\epsilon)$ for a sufficiently large constant in the big-Oh, we have that if $|x_i| \geq 3 \sqrt{\epsilon}$, then $i \in A$. 

In Step 2, and by a union bound over all $i \in A$, we have that with probability $1-1/\poly(k/\epsilon)$, simultaneously for every $i \in A$, $|\hat{x}_i - x_i| \leq \epsilon$. 
Letting $H$ be the set of the top $k$ magnitude coordinates of $x$, we have
$\|x_{H \setminus A} - \hat{x}_{H \setminus A}\|_2^2 \leq k \cdot 9 \cdot \epsilon$, since any $i \in H \setminus A$ 
necessarily satisfies $|x_i| < 3 \sqrt{\epsilon}$ and $\hat{x}_i = 0$. 

We also have $\|x_{H \cap B} - \hat{x}_{H \cap B}\|_2^2 \leq k \cdot \epsilon^2$, since for each $i \in H \cap B$, we have  $|\hat{x}_i - x_i| \leq \epsilon$. 

Finally, consider those $i \in H \cap (A \setminus B)$. For each such $i$, there necessarily exists a $j = j(i) \in A \setminus H$ for which $|\hat{x}_{j(i)}| \geq |\hat{x}_i|$ and since $|\hat{x}_{j(i)} - x_{j(i)}| \leq \epsilon$ and $|\hat{x}_i - x_i| \leq \epsilon$, this implies $|x_{j(i)}| \geq |x_i|- 2\epsilon$. Note also by definition of $H$ that $|x_{j(i)}| \leq |x_i|$. Consequently, the sketch solution $\hat{x}$ pays at most 
$$(x_{j(i)} - \hat{x}_{j(i)})^2 + x_i^2
\leq \epsilon^2 + (x_{j(i)} + 2\epsilon)^2
= O(\epsilon^2 + x_{j(i)}^2 + |x_{j(i)}| \epsilon),$$
on this coordinate, whereas the optimal non-sketched solution
$x_k$ pays $x_{j(i)}^2$. As $|A \setminus B| \leq k$, the total
additional cost the sketched solution pays over the optimal
solution is at most 
\begin{eqnarray}\label{eqn:new}
O(k \epsilon^2 + \epsilon \sum_{i \in H \cap (A \setminus B)} |x_{j(i)}|).
\end{eqnarray}
Finally, note that $\sum_{i \in H \cap (A \setminus B)} |x_{j(i)}|$ is at most the $\ell_1$-norm of the largest $k$ 
coordinates in magnitude not in $H$. Since the $\ell_2$-norm of such
coordinates is at most $\sqrt{k}$, the $\ell_1$-norm of
such coordinates is at most $k$. Combining with (\ref{eqn:new}),
the total additional error the sketched solution pays
is $O(\epsilon k)$. 

It follows that
$$\|x-\hat{x}\|_2^2 \leq 
\|x_{H \setminus A} - \hat{x}_{H \setminus A}\|_2^2 +
\|x_{H \cap B} - \hat{x}_{H \cap B}\|_2^2
+ \|x- x_k\|_2^2 + O(\epsilon k)
\leq 
(1+O(\epsilon))\|x-x_k\|_2^2,$$ 
and (\ref{eqn:prop}) follows by rescaling $\epsilon$ by a constant factor. The total number of measurements and overall failure probability follow by Theorem \ref{thm:countsketch}. 
\end{proof}

\section{Information theoretic basics}

We require the following notions from information theory.

\begin{definition}[Entropy and Mutual Information]
The entropy of a random variable $X$ over some support $S$ is
$$H(X) = \sum_{i\in S} p_i \log_2 \frac{1}{p_i} .$$
Given two random variables $X$ and $Y$, the conditional entropy is
$$H(X|Y) = \sum_y H(X|Y=y) \cdot \mathbb{P}[Y=y]$$
and their joint entropy is
$$H(X,Y) = \sum_{x,y} \mathbb{P}[X=x \bigwedge Y=y] \log_2 \frac{1}{\mathbb{P}[X=x \wedge Y=y]}.$$
The mutual information of two random variables is
$$I(X;Y) = H(X) - H(X|Y) $$
\end{definition}

Finally, we require Fano's inequality.

\begin{lemma}[Fano's Inequality]
\label{lem:fano}
Let $X$ be a random variable chosen from domain $\mathcal{X}$ according to distribution $\mu_X$ and let $Y$ be a random variable chosen from domain $\mathcal{Y}$ according to distribution $\mu_Y$. T
hen for any reconstruction function $g:Y\rightarrow X$ with error $\varepsilon_g$, it holds that 
$$H(X|Y) \leq H(\varepsilon_g) + \varepsilon_g \log(|\mathcal{X}|-1).$$
\end{lemma}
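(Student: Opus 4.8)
The plan is to prove Fano's inequality by introducing an auxiliary \emph{error indicator} random variable and applying the chain rule for conditional entropy in two different orders. Let $\hat X = g(Y)$ denote the reconstruction, and define the binary random variable $E = \mathbf{1}[\hat X \neq X]$, so that $\Pr[E=1] = \varepsilon_g$ and $\Pr[E=0] = 1-\varepsilon_g$; here $H(\varepsilon_g)$ in the statement denotes the binary entropy of this indicator. The key structural observation I would exploit is that $E$ is a \emph{deterministic} function of the pair $(X,Y)$: since $\hat X = g(Y)$ is itself a function of $Y$, the indicator $E$ is determined once $X$ and $Y$ are fixed. This will make one of the conditional entropy terms vanish.

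First I would expand the joint conditional entropy $H(E,X \mid Y)$ using the chain rule, peeling off $X$ first:
\begin{equation*}
H(E,X \mid Y) = H(X \mid Y) + H(E \mid X,Y).
\end{equation*}
Because $E$ is determined by $(X,Y)$, the second term satisfies $H(E \mid X,Y)=0$, and hence $H(E,X \mid Y) = H(X \mid Y)$. This identifies the left-hand side of the inequality we wish to bound with the joint quantity.

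Next I would expand the \emph{same} quantity in the opposite order, peeling off $E$ first:
\begin{equation*}
H(E,X \mid Y) = H(E \mid Y) + H(X \mid E,Y).
\end{equation*}
For the first term, conditioning only reduces entropy, so $H(E \mid Y) \le H(E) = H(\varepsilon_g)$. For the second term I would split over the two outcomes of $E$, writing $H(X \mid E,Y) = (1-\varepsilon_g)\, H(X \mid E=0,Y) + \varepsilon_g\, H(X \mid E=1,Y)$. On the event $E=0$ we have $X = g(Y)$, so $X$ is completely determined by $Y$ and $H(X \mid E=0,Y)=0$. On the event $E=1$ the value of $X$ is known to differ from the fixed symbol $g(Y)$, so $X$ is supported on at most $|\mathcal{X}|-1$ symbols, whence $H(X \mid E=1,Y) \le \log(|\mathcal{X}|-1)$ by the maximum-entropy bound on a finite alphabet. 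Combining these gives $H(X \mid E,Y) \le \varepsilon_g \log(|\mathcal{X}|-1)$.

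Chaining the two expansions yields $H(X \mid Y) = H(E,X \mid Y) \le H(\varepsilon_g) + \varepsilon_g \log(|\mathcal{X}|-1)$, which is the claim. The argument is entirely classical and presents no genuine obstacle; the single step requiring care is the case analysis for $H(X \mid E,Y)$, where the factor $|\mathcal{X}|-1$ (rather than $|\mathcal{X}|$) appears precisely because, conditioned on an error having occurred, one symbol---namely the reconstructed value $g(Y)$---is excluded from the possible values of $X$.
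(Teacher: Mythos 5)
Your proof is correct and complete. Note, however, that the paper itself offers no proof of this lemma: Fano's inequality is stated in the appendix on information-theoretic basics as a standard fact (it is the classical result found, e.g., in Cover and Thomas), so there is nothing in the paper to compare your argument against. What you have written is precisely the textbook proof --- introduce the error indicator $E$, expand $H(E,X \mid Y)$ by the chain rule in both orders, kill $H(E \mid X,Y)$ because $E$ is determined by $(X,Y)$, bound $H(E \mid Y)$ by the binary entropy $H(\varepsilon_g)$, and bound $H(X \mid E,Y)$ by case analysis on $E$, with the factor $\log(|\mathcal{X}|-1)$ arising because conditioning on $E=1$ excludes the symbol $g(Y)$ from the support of $X$. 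All steps are sound, and the one place requiring care (the $|\mathcal{X}|-1$ versus $|\mathcal{X}|$ distinction) is handled correctly.
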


\vfill

\end{document}